\def\BibTeX{{\rm B\kern-.05em{\sc i\kern-.025em b}\kern-.08em
    T\kern-.1667em\lower.7ex\hbox{E}\kern-.125emX}}
\newtheorem{assumption}{Assumption}
\newtheorem{remark}{Remark}
\newtheorem{mypro}{Proposition}
\newtheorem{theorem}{Theorem}
\newtheorem{myDef}{Definition}
\newtheorem{lemma}{Lemma}
\newtheorem{corollary}{Corollary}
\begin{document}
%\linenumbers 
%\begin{spacing}{1.8}
%\linenumbers  

\title{On the non-Markovian quantum control dynamics}

\author{Haijin Ding, Nina H. Amini, John E. Gough, Guofeng Zhang
\thanks{Haijin Ding was with Universit\'{e} Paris-Saclay, CNRS, CentraleSup\'{e}lec, Laboratoire des Signaux et Syst\`{e}mes (L2S), France. He is now with the Department of Applied Mathematics, the Hong Kong Polytechnic University, Hung Hom, Kowloon,  SAR, China (e-mail: dhj17@tsinghua.org.cn). }
\thanks{Nina H. Amini is with Universit\'{e} Paris-Saclay, CNRS, CentraleSup\'{e}lec, Laboratoire des Signaux et Syst\`{e}mes (L2S), France (e-mail:  nina.amini@l2s.centralesupelec.fr). }
\thanks{John E. Gough is with the Institute of Mathematics and Physics, Aberystwyth University, SY23 3BZ, Wales, United Kingdom (e-mail:  jug@aber.ac.uk). }
\thanks{Guofeng Zhang is with the Department of Applied Mathematics, The Hong Kong Polytechnic University, Hung Hom, Kowloon,  SAR, China (e-mail: guofeng.zhang@polyu.edu.hk).}
\thanks{Corresponding author: Haijin Ding.}
}

\maketitle

\begin{abstract}
In this paper, we study both open-loop control and closed-loop measurement feedback control of non-Markovian quantum dynamics arising from the interaction between a quantum system and its environment. We use the widely studied cavity quantum electrodynamics (cavity-QED) system as an example, where an atom interacts with the environment composed of a collection of oscillators.  
In this scenario, the stochastic interactions between the atom and the environment can introduce non-Markovian characteristics into the evolution of quantum states, differing from the conventional Markovian dynamics observed in open quantum systems.
As a result, the atom's decay rate to the environment varies with time and can be described by nonlinear equations. 
The solutions to these nonlinear equations can be analyzed in terms of the stability of a nonlinear system. 
Consequently, the evolution of quantum state amplitudes follows linear time-varying equations as a result of the non-Markovian quantum transient process.
Additionally, by using measurement feedback through homodyne detection of the cavity output, we can modulate the steady atomic and photonic states in the non-Markovian process.
When multiple coupled cavity-QED systems are involved, measurement-based feedback control can  influence the dynamics of high-dimensional quantum states, as well as the resulting stable and unstable subspaces.
\end{abstract}

\begin{IEEEkeywords}
quantum non-Markovian dynamics, open quantum system, quantum open-loop control, quantum measurement feedback control. 
\end{IEEEkeywords}
\tableofcontents
\section{Introduction}\label{Sec:Introduction}
Quantum control has garnered much attention due to its potential applications in quantum optics~\cite{zhang2017quantum,zhang2012quantum}, quantum information processing~\cite{dong2022quantum}, quantum engineering~\cite{zagoskin2011quantum} and others~\cite{dong2010quantum}. Control methods for quantum systems can be categorized into the open-loop control and closed-loop control, similar to that in classical systems~\cite{zhang2017quantum}. In open-loop control, designed or iteratively optimized control pulses are utilized to produce required gate operations in quantum computations~\cite{khaneja2005optimal,wu2019learning}. When a control field is applied upon an atom, the atom can be excited, leading to the generation of single or multiple photons for quantum networking~\cite{li2022control}. On the other hand, closed-loop quantum control can be realized through coherent feedback or measurement feedback methods. For instance, when an atom or cavity quantum electrodynamics (cavity-QED) system is coupled to a waveguide, a coherent feedback channel can be established using photons transmitted in the waveguide~\cite{CompareCoherent}. Subsequently, the atomic dynamics and photonic states can be modulated by tuning the parameters of the coherent feedback loop~\cite{ding2023quantum,CompareCoherent,ding2023quantumNlevel}. This form of coherent feedback dynamics can be represented as a linear control system with delays determined by the length of the feedback loop~\cite{ding2023quantum,CompareCoherent}. Then the quantum state dynamics can be interpreted in terms of the stability of a linear control system with time delays~\cite{ding2023quantumNlevel,ding2023Measurement}.

In addition to quantum coherent feedback, quantum measurement feedback is another commonly employed method for feedback control in quantum systems~\cite{zhang2017quantum}. This approach involves designing feedback control based on the measurement outcomes of the quantum system~\cite{uys2018quantum,ding2023Measurement}. 
Within the framework of control theory, quantum measurement feedback control can be represented using stochastic equations due to the noise in measurement and detection apparatuses~\cite{wiseman2009quantum,GZP19,GZP20}, distinguishing it from feedback control without measurement~\cite{van2005feedback,wiseman2009quantum}.
By employing these measurement techniques and feedback control, quantum measurement feedback can influence the evolution of quantum states and facilitate the generation of desired quantum states. Consequently, quantum measurement feedback control holds significant promise for various applications in open quantum systems where the quantum states are affected by the environment~\cite{zhang2017quantum}.
Notable applications include the use of measurement-based quantum feedback control in the quantum error correction (QEC) to rectify  error bits in quantum computations~\cite{ahn2002continuous}, and preserving the coherence of a quantum state when the quantum system interacts with its environment\cite{zhang2010protecting}.

In the realm of open quantum system control, it is typically assumed that the environmental evolution timescale is considerably shorter than that of the quantum system of interest, and the decoherence rate of the quantum control system to the environment can be simplified as static~\cite{ferialdi2017exact}. This assumption allows for the modeling of the interaction between the quantum system and the environment using a master equation with a static decay rate, which is a widely employed Markovian approximation method for studying quantum dynamics in open systems. However, in numerous scenarios, this Markovian approximation proves inadequate for comprehensively analyzing the dynamics of open quantum systems~\cite{ferialdi2017exact,de2017dynamics}. For instance, in the experimental implementation utilizing nuclear magnetic resonance (NMR)~\cite{khurana2019experimental}, the NMR qubit interacts with a non-Markovian environment characterized by a randomized configuration of modulated radio-frequency fields. In such setups, the interaction between the NMR and the environment can lead to information backflow from the environment to the qubit, thus the decoherence of the qubit can be nonmonotonic, which is different from the monotonic decoherence of a qubit in the Markovian environment~\cite{breuer2002theory}. Moreover, experimental evidence of information backflow induced by non-Markovianity has also been observed in optical systems~\cite{wu2020detecting}. Due to the existence of non-Markovian noises in quantum information processing, quantum error correcting protocols based on three- or five-qubit codes~\cite{nila2026continuous}, and Petz map~\cite{biswas2025noise} can be used to enhance the performance against non-Markovian noises. Relied on these protocols, and generalized from the Markovian scenario~\cite{ahn2002continuous}, measurement feedback control based on non-Markovian quantum dynamical systems can be potentially applied in quantum error corrections~\cite{shabani2014continuous}. Given that traditional quantum control strategies within Markovian approximations are not directly applicable in these instances, it is required to explore the open-loop and closed-loop quantum control dynamics for non-Markovian open quantum systems. Additionally, it is worth noting that the traditional Morkovian scenario can be encompassed within the broader framework of non-Markovian settings as a simplified special case.

The interaction between a quantum system and its environment, characterized as non-Markovian, can be effectively represented by the integral stochastic Schr\"{o}dinger equation~\cite{diosi1997nonPLA}. This approach incorporates the influence of the environment on quantum states through a complex-valued stochastic process with an integral kennel relative with historic memory effects in the time domain~\cite{diosi1998non,diosi1997nonPLA}. Alternatively, after averaging the quantum states over the environmental noise, the non-Markovian dynamics can be described using a master equation featuring time-varying Lindblad components~\cite{DiosiPRL,yang2012nonadiabatic},  resulting from the memory effect inherent in non-Markovian dynamics~\cite{liu2019memory}. Consequently, various non-Markovian quantum control techniques have been developed including the learning~\cite{luchnikov2020machine,luchnikov2022probing} and hybrid control approaches upon the quantum system and environment~\cite{leggio2015distributed}, which are generalized from Markovian quantum control methods~\cite{delben2023control,ren2020accelerated,xie2022stochastic}.

In this paper, we utilize the commonly employed cavity-QED system depicted in Fig.~\ref{fig:NonMarkovian} as an example to investigate the dynamics of non-Marovkovian quantum control in a novel manner. We clarify that, the time-varying parameters indicating a non-Markovian process for open quantum system can be analyzed from the perspective of nonlinear process. Based on this, we illustrate the transition from non-Markovian transient dynamics to steady Markovian dynamics using the stability theory in nonlinear control. 
Subsequently, we describe the evolution of atomic and photonic states in the non-Markovian quantum dynamics with linear time varying (LTV) control equations, where external drives or the measurement feedback can influence the quantum dynamics. Lastly, we extend our analysis to encompass non-Markovian dynamics in systems where atoms in multiple coupled cavities interact with a non-Markovian environment. Then the quantum states can be categorized into the stable and unstable subspaces based on the application of measurement feedback controls.

The rest of the paper is organized as follows. Section~\ref{Sec:ModelNonMark} concentrates on the nonlinear parameter dynamics of the non-Markovian interaction between the cavity-QED system and the environment. In Section~\ref{Sec:quantumcontrolOpen}, the open-loop quantum control with the above non-Markovian interactions is analyzed from the perspective of LTV control theory. In Section~\ref{Sec:OneCavityMF}, the effects of quantum measurement feedback control on the quantum states are considered. In Section~\ref{Sec:MultiCavity}, we generalize to the circumstance of multiple coupled cavities with non-Markovian interactions with the environment, and analyze its open loop and closed loop control dynamics. Section~\ref{Sec:conclusion} concludes this paper.

\section{Non-Markovian quantum control for open systems}\label{Sec:ModelNonMark}
In this section, we introduce the quantum control modeling based on the non-Markovian interactions between a cavity-QED system and its environment. The model is made up of the linear time varying dynamics for the interactions between a multi-level atom and a cavity, and the non-Markovian decay of the quantum system to the environment governed by nonlinear dynamics.
\begin{figure}[htbp]
  \centering
  \centerline{\includegraphics[width=0.7\columnwidth]{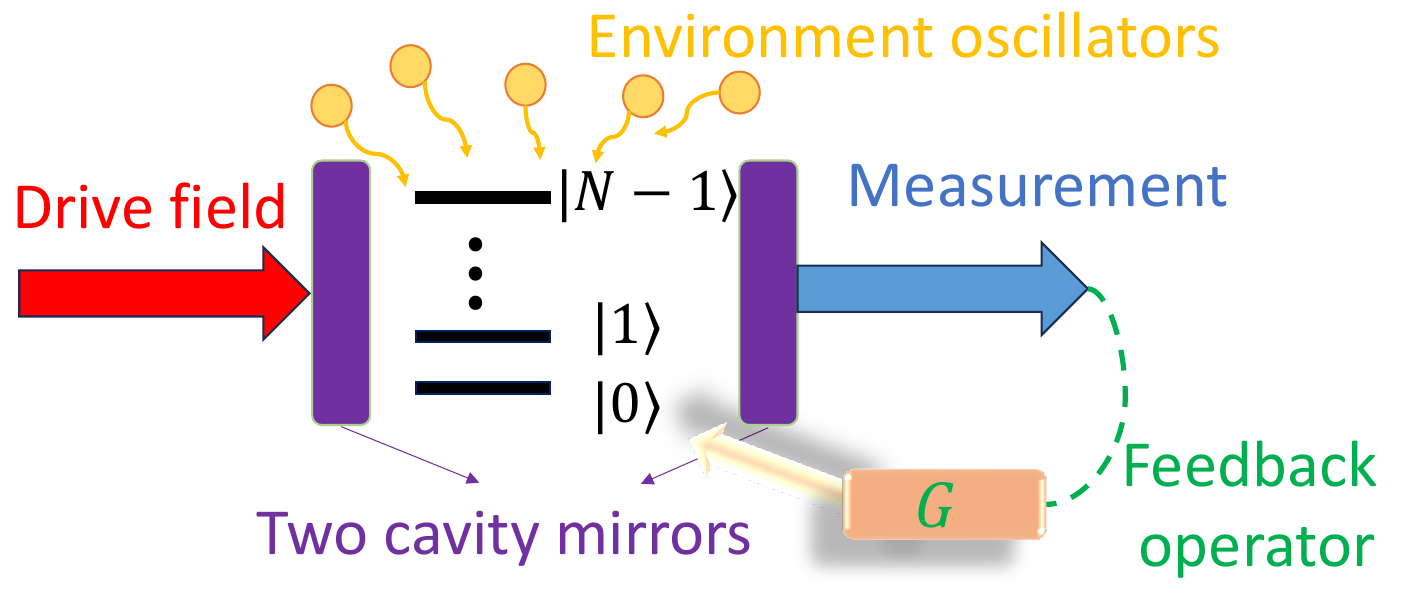}}
  \caption{Quantum control based on non-Markovian interactions between the cavity-QED system and environment.}
  \label{fig:NonMarkovian}
\end{figure}

As illustrated in Fig.~\ref{fig:NonMarkovian}, a resonant cavity is constructed between two mirrors, one multi-level atom with the energy levels $|0\rangle, |1\rangle, \cdots, |N-1\rangle$ in the format of state vectors is coupled to the cavity. The multi-level atom is simultaneously coupled to the environment modeled as a group of oscillators represented with yellow circles. The cavity can be driven by the field represented with the red arrow, and can be detected or measured via its output channel, which is represented with the blue arrow.  Using the measurement information, the feedback operator $G$ can be designed and applied upon the quantum system, thus the closed-loop control with measurement feedback can be realized.

For a simplified case without feedback, the above cavity-QED system can be modeled with the Hamiltonian 
\begin{equation} \label{con:systemHam}
H = \omega_c a^{\dag}a + \sum_{\omega} \omega b_{\omega}^{\dag}b_{\omega} +\sum_{n=0}^{N-1} \omega_n|n\rangle \langle n|  + H_I,
\end{equation}
where $\omega_c$ is the resonant frequency of the cavity with the annihilation (creation) operator $a$ ($a^{\dag}$), the second component represents a group of oscillators in the environment modeled by the annihilation (creation) operator $b_{\omega}$ ($b_{\omega}^{\dag}$) with different frequencies $\omega$, the third component is the atom Hamiltonian with $\omega_n$ represents the energy of the $n$-th level represented with the multiplication of the vector $|n\rangle$ and $\langle n|$, $H_I$ is composed with the interaction Hamiltonian between the atom and cavity, as well as the interaction Hamiltonian between the atom and environment.

For the $N$-level atom coupled to the cavity, as in Fig.~\ref{fig:NonMarkovian}, the Hamiltonian $H_I$ in Eq.~(\ref{con:systemHam}) can be equivalently represented in the interaction picture as~\cite{CompareCoherent,dong2022dynamics}
\begin{equation} \label{con:TwoNlevelatomHam}
\begin{aligned}
\bar{H}_I = &  \sum_{n=1}^{N-1} g_n\left (e^{-i\Delta_n t}\sigma^-_{n} a^{\dag} + e^{i\Delta_n t}\sigma^+_{n} a \right ) + \sum_{n=1}^{N-1}\sum_{\omega} \chi_{\omega}^{(n)} \left (  \sigma^-_{n} b_{\omega}^{\dag}+\sigma^+_{n}b_{\omega}  \right ),
\end{aligned}
\end{equation}
where we denote $\tilde{\omega}_{n} = \omega_n-\omega_{n-1}$, $\Delta_n = \tilde{\omega}_{n}- \omega_c$ is the detuning between the cavity resonant frequency and the transition frequency of two neighborhood atom energy levels, $g_n$  represents the coupling strengths between the cavity and different atom energy levels,  the lowering operator $\sigma^-_n = |n-1\rangle \langle n|$, and the rising operator $\sigma^+_n = |n\rangle \langle n-1|$. The first part on the right-hand side (RHS) represents the interaction between the atom and cavity, $\sigma^-_{n} a^{\dag}$  represents that when the atom decays from the $n$-th energy level to the $(n-1)$-th energy level, it can emit one photon into the cavity. The following Hermitian conjugate $\sigma^+_{n} a$ represents that the atom can absorb one photon from the cavity and be excited to a higher energy level.  The second part on RHS similarly represents the interaction between the multi-level atom and environment, and the coupling strength between the $n$-th energy level and the environmental oscillator with the frequency $\omega$ is $\chi_{\omega}^{(n)}$.  
$\sigma^-_{n} b_{\omega}^{\dag}$ represents that an emitted filed generated by the atom's decay from the $n$-th to $(n-1)$-th energy level can be absorbed by the environment, similar for the following Hermitian conjugate. For convenience, we denote $\tilde{g}_n = g_ne^{i\Delta_n t}$ in the following.

Different energy levels of the ladder type atom can be coupled to the environment with the operator~\cite{chen2014exact,jing2010non,mondal2020formation}
%\begin{small}
\begin{equation} \label{con:Loperator1}
\begin{aligned}
L =  \sum_{n=1}^{N-1}  L_n =\sum_{n=1}^{N-1} \sqrt{\kappa_n} |n-1\rangle \langle n|,
\end{aligned}
\end{equation}
%\end{small}%
where $\kappa_n$ represents the decay rate of the atomic state $|n\rangle$ to the environment and $L_n = \sqrt{\kappa_n} |n-1\rangle \langle n|$. More details are introduced in Appendix~\ref{sec:NonMarkovParemeters}.

The dynamics of the above quantum system can be modeled with the Schr\"{o}dinger equation $|\dot{\psi}(t)\rangle = -i H |\psi(t)\rangle$, and the influence by the environmental oscillators can be modeled as a stochastic process with Markovian or non-Markovian statistical properties, as introduced in Refs.~\cite{diosi1998non,diosi1997nonPLA,yu1999non,de2005two}. Due to Eq.~(\ref{con:Loperator1}), the Schr\"{o}dinger equation can be written as ~\cite{diosi1998non,diosi1997nonPLA,yu1999non,de2005two}
\begin{equation} \label{con:SSENonMar}
\begin{aligned}
\frac{\mathrm{d}}{\mathrm{d}t} |\psi(t)\rangle =& -i\bar{H} |\psi(t)\rangle + \left[L |\psi(t)\rangle z_t -L^{\dag}\int_0^t \alpha(t,s)\frac{\delta |\psi(t)\rangle}{\delta z_s}\mathrm{d}s\right],
\end{aligned}
\end{equation}
where $\bar{H}$ represents the interaction component between the atom and cavity in Eq.~(\ref{con:TwoNlevelatomHam}), $z_t$ is a complex stochastic process determined by the environment with $E(z_t^*z_s) = \alpha(t,s)$, $E(z_tz_s) = 0$, $\delta |\psi(t)/\delta z_s$ is a functional derivative of $|\psi(t)\rangle$ with respect to $z_s$, and
\begin{equation} \label{con:NonMarkovAlpha}
\alpha(t,s) = \frac{\gamma}{2} e^{-\gamma|t-s|-i\Omega(t-s)},
\end{equation}
where $\gamma^{-1}$ represents the environmental memory time scale and $\Omega$ represents the environmental central frequency for the modeled oscillators~\cite{diosi1998non}. Generalized from the definition of a quantum Markov process in \cite{AccardiFrigerioLewis}, the quantum dynamics governed by equation (\ref{con:SSENonMar}) is non-Markovian when $\gamma$ is finite, as it includes a memory term wherein the quantum state   $|\psi(t)\rangle$ depends on past noise $z_s$, weighted by a nonzero kernel $\alpha(t,s)$ for $s < t$~\cite{diosi1997nonPLA}.

Alternatively, the quantum dynamics in Eq.~(\ref{con:SSENonMar}) can be equivalently represented by the density matrix $\rho(t) = |\psi(t)\rangle \langle  \psi(t) |$, whose dynamics is  governed by the following master equation according to the derivation in Appendix~\ref{sec:NonMarkovMaster},
\begin{equation} \label{con:SSENonMarUpdate0}
\begin{aligned}
\dot{\rho} =&-i[\bar{H},\rho] + \sum_{n=1}^{N-1}\left[ F_n(t) + F_n^*(t)\right]   L_n  \rho L_n^{\dag} - \sum_{n=1}^{N-1}  \left[ F_n(t)  L_n^{\dag}L_n\rho + F_n^*(t) \rho L_n^{\dag}L_n\right] ,
\end{aligned}
\end{equation}
where
\begin{equation} \label{con:NonMarkovStep3Ft}
F_n(t) = \int_0^t \alpha(t,s)f_n(t,s)\mathrm{d}s,
\end{equation}
$f_n(t,s)$ is a function to be determined according to the memory kernel $\alpha(t,s)$~\cite{diosi1998non}. We denote the reduced value $\chi_n=f_n(t,t)$ when $t=s$, and $\chi_n$ can be different constants for different channels indexed by $n$~\cite{diosi1998non}. Then due to Eq.~(\ref{con:FtEquation}) in Appendix~\ref{sec:NonMarkovParemeters}, the trajectory of $F_n(t)$ can equivalently rely on $\chi_n$ rather than $f_n(t,s)$. For the $n$-th Lindbladian in Eq.~(\ref{con:SSENonMarUpdate0}), we briefly introduce the physical meanings in Table~\ref{tab:meaning}.
\begin{table}[htbp] 
    \centering 
     \caption{Physical meanings of the components in Eq.~(\ref{con:SSENonMarUpdate0})}
     \label{tab:meaning}
\begin{tabular}{cccc}
   
   \toprule
   Index & Decaying  operator & Integral kernel & Interaction rate \\
   \midrule
   $n$ & $L_n$ & $\alpha(t,s)$ & $F_n(t)$ \\
   \bottomrule
\end{tabular}
\end{table}

The reduced Markovian case is distinguished as follows.
\begin{remark} \label{RemarkMarkov}
When $\gamma \rightarrow \infty$, the interaction between the quantum system and environment is Markovian with $\alpha(t,s) = \delta(t-s)$, then~\cite{diosi1998non,AccardiFrigerioLewis}
\begin{equation} \label{con:MarkovFt}
F_n(t) = \int_0^t \delta(t-s)f_n(t,s)\mathrm{d}s =\frac{\chi_n}{2},
\end{equation}
equals a constant according to Appendix~\ref{sec:NonMarkovParemeters}.
\end{remark}

\subsection{Single multi-level atom coupled to the cavity}
Based on the non-Markovian Schr\"{o}dinger equation~(\ref{con:SSENonMar}) in a vector format, or the non-Markovian master equation~(\ref{con:SSENonMarUpdate0}) in a matrix format, the non-Markovian atomic dynamics can be evaluated by the mean-value of an arbitrary operator $\textbf{O}$ is governed by $\dot{\langle \textbf{O} \rangle} = {\rm  Tr} \left[ \dot{\rho} \textbf{O} \right] $~\cite{wiseman2009quantum}. For example, the operator mean value $\langle \sigma^+_{n} \sigma^-_{n}\rangle$ represents the population or probability that the atom is excited at the $n$-th energy level. Because of the coupling between the atom and the cavity, the dynamics of $\langle \sigma^+_{n} \sigma^-_{n}\rangle$ is affected by the exchange of energy between the atom and the cavity, which can be evaluated by the dynamics of the operator mean value $\langle \sigma^+_n a\rangle$, representing that the atom can absorb one photon from the cavity to be excited to a higher energy level, or the reverse process evaluated by $\langle \sigma^-_n a^{\dag}\rangle$. Besides, the number of photons in the cavity can be evaluated by the mean value $\langle a^{\dag} a\rangle$~\cite{meystre2021quantum}. Then we can derive the equation of the mean values of the operators as~\cite{dephaseCavity,wang2021broadband,he1994two,ruiz2014spontaneous}
\begin{subequations} \label{eq:NlevelOperatorEquation}
\begin{align}
&\frac{\mathrm{d}}{\mathrm{d} t} \langle \sigma^+_{n} \sigma^-_{n}\rangle=  -i\tilde{g}_n \langle \sigma^+_n a \rangle +i  \tilde{g}_n^* \langle \sigma^-_n a^{\dag} \rangle +i\tilde{g}_{n+1} \langle \sigma^+_{n+1} a \rangle -i  \tilde{g}_{n+1}^* \langle \sigma^-_{n+1} a^{\dag} \rangle \notag\\
& ~~~~~~~~~~~~~~~~- \left ( F_n(t) + F_n^*(t)\right) \kappa_n  
 \langle \sigma^+_{n} \sigma^-_{n}\rangle 
 +\left ( F_{n+1}(t) + F_{n+1}^*(t)\right) \kappa_{n+1}  
 \langle \sigma^+_{n+1} \sigma^-_{n+1}\rangle,\label{NlevelPopu}\\
&\frac{\mathrm{d}}{\mathrm{d} t} \langle \sigma^+_n a\rangle= - i \Delta_n \langle \sigma^+_n a\rangle -i \tilde{g}_n^*  \langle \sigma^+_{n} \sigma^-_{n} \rangle + i\tilde{g}_n^* \langle a^{\dag} a \rangle - F_n^*(t) \kappa_n    \langle \sigma^+_{n} a\rangle, \label{SPa}\\
&\frac{\mathrm{d}}{\mathrm{d} t} \langle \sigma^-_n a^{\dag}\rangle=  i \Delta_n \langle \sigma^-_n a^{\dag}\rangle +i \tilde{g}_n  \langle \sigma^+_{n} \sigma^-_{n} \rangle - i\tilde{g}_n  \langle a^{\dag} a \rangle - F_n(t) \kappa_n   \langle \sigma^-_{n} a^{\dag}\rangle, \label{SMad}\\
&\frac{\mathrm{d}}{\mathrm{d} t} \langle a^{\dag} a\rangle= i\sum_n  \left( \tilde{g}_n   \langle \sigma^+_n a\rangle - \tilde{g}_n^*  \langle \sigma^-_n a^{\dag} \rangle \right),\label{NlevelPhoton}\\
&\frac{\mathrm{d}}{\mathrm{d} t}F_n(t) = \kappa_n F_n^2(t)  - \left(\gamma + i\Omega - i\tilde{\omega}_{n}\right)F_n(t) + \frac{\gamma \chi_n}{2}, \label{NMEqF}
\end{align}
\end{subequations}
where $n =1,2,\dots,N-1$. Eq.~(\ref{NlevelPopu}) represents that the atomic state excited at the $n$-th energy level can be acquired from a lower energy level by absorbing one photon from the cavity or from a higher energy level by emitting one photon into the cavity, which are represented by the first four components on the RHS, and the following components represent the non-Markovian interaction between the atom and environment, which can make the atom decay to a lower energy level. Eq.~(\ref{SPa}) and Eq.~(\ref{SMad}) represent the interface between the atom and the cavity via the emitting and absorbing processes of a photon. This process is influenced by the detuning between the atom and the cavity (i.e., the first component of the RHS of Eq.~(\ref{SPa}) and Eq.~(\ref{SMad})), the coupling strengths $\tilde{g}_n$, and the non-Markovian decay to the environment represented by the last component of the RHS of Eq.~(\ref{SPa}) and Eq.~(\ref{SMad}), respectively. Eq.~(\ref{NlevelPhoton}) represents how the number of photons in the cavity is influenced by the coupling between the atom and cavity. The nonlinear Eq.~(\ref{NMEqF}) represents the non-Markovian decay rate of the atom to the environment, which is derived in detail in  Appendix~\ref{sec:NonMarkovParemeters}.
Additionally, the atom populations are normalized as $\langle \sigma_{00} \rangle + \sum_{n=1}^{N-1} \langle \sigma^+_{n} \sigma^-_{n}\rangle  =1$, where $\langle \sigma_{00} \rangle $ represents the population that the atom is at the ground state and $ \langle \dot{\sigma}_{00} \rangle =  \left [F_1 + F_1^*\right] \kappa_1   \langle \sigma^+_{1} \sigma^-_{1}\rangle +i\tilde{g}_1 \langle \sigma^+_1 a \rangle - i\tilde{g}_1^* \langle \sigma^-_1 a^{\dag}\rangle$.

We define the complex-valued state vector
\begin{equation} \label{con:XntDef}
\begin{aligned}
X(t) = \left [X_1(t), X_2(t),\cdots, X_{N-1}(t) , \langle a^{\dag} a\rangle \right ]^{\rm T},
\end{aligned}
\end{equation}
with $X_n(t) = \left [ \langle \sigma^+_{n} \sigma^-_{n}\rangle,\langle \sigma^+_n a\rangle, \langle \sigma^-_n a^{\dag}\rangle \right ]^{\rm T}$ and ${\rm T}$ denoting the matrix transpose. Then Eq.~(\ref{eq:NlevelOperatorEquation}) can be rewritten as
\begin{equation} \label{con:TimevariedEquation}
\begin{aligned}
\frac{\mathrm{d}X}{\mathrm{d} t}  &= \begin{bmatrix}
A_1(t) & B_1(t) & \cdots & 0 & D_1(t)\\
0 & A_2(t) & \cdots & 0 & D_2(t)\\
 \vdots &\vdots &\ddots  &\vdots  &\vdots\\
 0 & 0 & \cdots & A_{N-1}(t) & D_{N-1}(t)\\
C_1(t) &C_2(t) & \cdots  & C_{N-1}(t) &0
\end{bmatrix}X\\
&\triangleq  A(t) X,
\end{aligned}
\end{equation}
where
\begin{equation} \label{con:TVEAnt}
\begin{aligned}
A_n=&-\kappa_n {\rm{diag}}\left[(F_n + F_n^*), F_n^*,F_n\right] +\begin{bmatrix}
 0  & -i \tilde{g}_n  & i \tilde{g}_n^* \\
-i \tilde{g}_n^*& -i\Delta_n   & 0\\
i \tilde{g}_n & 0 &  i\Delta_n 
\end{bmatrix},
\end{aligned}
\end{equation}
\begin{equation} \label{con:TVEBnt}
\begin{aligned}
&B_n=\begin{bmatrix}
\kappa_{n+1} \left(F_{n+1} + F_{n+1}^*\right) & i \tilde{g}_{n+1}  & -i \tilde{g}_{n+1}^* \\
0&0 & 0\\
0 & 0 &0
\end{bmatrix},
\end{aligned}
\end{equation}
\begin{equation} \label{con:TVECnt}
\begin{aligned}
C_n&=\begin{bmatrix}
0 & i \tilde{g}_n & -i \tilde{g}_n^*
\end{bmatrix},
\end{aligned}
\end{equation}
\begin{equation} \label{con:TVDCnt}
\begin{aligned}
D_n&=\begin{bmatrix}
0 & i \tilde{g}_n^* & -i \tilde{g}_n 
\end{bmatrix}^{\rm T},
\end{aligned}
\end{equation}
with $n=1,2,\cdots,N-1$. 

\begin{remark}
By deriving the dynamics for the mean values of the operators representing atom's populations, atom-cavity interactions, and the number of photons in the cavity, the non-Markovian equations (\ref{con:SSENonMar},\ref{con:SSENonMarUpdate0}) containing integral interactions with the environment can be analyzed from the perspective of linear time-varying dynamics in Eq.~(\ref{con:TimevariedEquation}).
\end{remark}

For the basic control model of this paper in Eq.~(\ref{con:TimevariedEquation}), the parameter matrices are time-varying because the decay rates of the multi-level atom to the environment $F_n(t)$ are determined by time-varying integrations with memory effects. The target of this paper is to study the non-Markovian interactions between quantum system and environment evaluated by the nonlinear dynamics in Eq.~(\ref{NMEqF}), and how the stability of time-varying atomic and photonic dynamics in Eq.~(\ref{con:TimevariedEquation}) can be influenced by the non-Markovian process and additional control methods.

\subsection{Non-Markovian parameter dynamics} 
In this subsection, we clarify the dynamics of $F_n(t)$ from the perspective of nonlinear dynamics, and then analyze how the time-varying $F_n(t)$ can influence the quantum control performance in Eq.~(\ref{eq:NlevelOperatorEquation}) in the following Sec.~\ref{Sec:quantumcontrolOpen}.

In Eq.~(\ref{NMEqF}), we denote 
\begin{subequations} \label{eq:PQR}
\begin{numcases}{}
Q_n = -\left (\gamma +i \Omega -i\tilde{\omega}_{n} \right),\\
S_n = \frac{\gamma \chi_n}{2},
\end{numcases}
\end{subequations}
then Eq.~(\ref{NMEqF}) can be simply rewritten as
\begin{equation} \label{con:rewriteF}
\begin{aligned}
\frac{\mathrm{d}}{\mathrm{d} t}F_n(t) = \kappa_n F_n^2(t)  + Q_nF_n(t) + S_n,
\end{aligned}
\end{equation}
where $F_n(t)$ can be solved according to following different parameter settings.

\subsubsection{$\Omega = \tilde{\omega}_{n}$ and $ 4\kappa_nS_n- Q_n^2 <  0$}
When the conditions are satisfied, namely $4\kappa_nS_n  < Q_n^2$ and $2\chi_n  \kappa_n < \gamma$, which means that the coupling between the atom and the environment can be relatively small and  bounded by $\gamma$, then
\begin{equation}
\begin{aligned} \label{con:F1}
F_n=\frac{2 \sqrt{ \left (\frac{Q_n}{2\kappa_n}\right)^2 - \frac{S_n}{\kappa_n}}}{1-e^{C + \sqrt{ Q_n^2-4\kappa_nS_n}  t}} - \frac{Q_n+ \sqrt{Q_n^2-4\kappa_nS_n}}{2\kappa_n} ,
\end{aligned}
\end{equation}
where $C$ is a constant and can be determined by $F_n(0)$. For example, $e^C=1- 2 \sqrt{ Q_n^2 - 4S_n\kappa_n}\left[  Q_n + \sqrt{ Q_n^2 -4 S_n\kappa_n} \right]^{-1}$ if $F_n(0) = 0$.

\subsubsection{$\Omega = \tilde{\omega}_{n}$ and $4\kappa_nS_n- Q_n^2> 0$}
In this case, $2\chi_n \kappa_n > \gamma$, which means that the coupling between the atom and environment is relatively stronger. Then by Eq.~(\ref{con:rewriteF}),
\begin{equation}
\begin{aligned}
F_n=& - \frac{Q_n}{2\kappa_n}+\sqrt{\frac{S_n}{\kappa_n}- \left (\frac{Q_n}{2\kappa_n}\right)^2} \tan \left[\sqrt{\frac{S_n}{\kappa_n}- \left (\frac{Q_n}{2\kappa_n}\right)^2} \left (\kappa_nt +C \right)\right] ,
\end{aligned}
\end{equation}
where $C$ can be similarly determined by $F_n(0)$ and the solution is not unique.

\subsubsection{$\Omega = \tilde{\omega}_{n}$ and $ 4\kappa_nS_n- Q_n^2= 0$} \label{Sec:ParaExample}
In this case, $2\chi_n  \kappa_n = \gamma$, and
\begin{equation}
\begin{aligned} \label{con:F3}
F_n =- \frac{Q_n}{2\kappa_n} -\frac{1}{\kappa_n t + C},
\end{aligned}
\end{equation}
with $C=-2\kappa_n/Q_n$ if $F_n(0) = 0$.

We have the following theorem for the relationship between the parameter settings above and the non-Markovian dynamics.

\begin{theorem} \label{Fproperty}
When $\tilde{\omega}_{n} = \Omega $ and $4\kappa_nS_n- Q_n^2 \leq 0$, $F_n(t)$ in Eq.~(\ref{con:rewriteF}) converges to a constant  when $t\rightarrow \infty$, resulting the trajectories for the population of atomic eigenstates related to the $n$-th level, namely $\langle \sigma^+_{n} \sigma^-_{n}\rangle$, $\langle \sigma^+_n a\rangle$ and $\langle \sigma^-_n a^{\dag}\rangle$ in Eq.~(\ref{eq:NlevelOperatorEquation}),  finally converge to be Markovian when $t\rightarrow \infty$.
\end{theorem}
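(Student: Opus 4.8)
The plan is to treat (\ref{con:rewriteF}) as a scalar autonomous Riccati equation, read off its long-time behaviour from its fixed points, and then transfer the conclusion to the operator trajectories through the master equation (\ref{con:SSENonMarUpdate0}). First I would observe that the hypothesis $\tilde{\omega}_{n}=\Omega$ forces $Q_n=-\gamma$ by (\ref{eq:PQR}), so that $Q_n$, $S_n=\gamma\chi_n/2$ and $\kappa_n$ are all real; hence (\ref{con:rewriteF}) is a genuine one-dimensional real autonomous ODE $\dot{F}_n=h(F_n)$ with $h(F)=\kappa_nF^2+Q_nF+S_n$, and a real physical datum such as $F_n(0)=0$ keeps $F_n(t)$ real for all $t$. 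Its equilibria are the roots $F_\pm=\bigl(-Q_n\pm\sqrt{Q_n^2-4\kappa_nS_n}\bigr)/(2\kappa_n)$, which are real precisely because the hypothesis $4\kappa_nS_n-Q_n^2\le 0$ makes the discriminant nonnegative; with $\kappa_n,\gamma,\chi_n>0$ one checks that both roots are positive, since their sum $\gamma/\kappa_n$ and product $\gamma\chi_n/(2\kappa_n)$ are positive.

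The convergence itself I would establish in two mutually reinforcing ways. Qualitatively, linearising gives $h'(F_\pm)=\pm\sqrt{Q_n^2-4\kappa_nS_n}$, so for strict inequality $F_-$ is asymptotically stable and $F_+$ is unstable; a sign analysis of $h$ shows that the whole interval $(-\infty,F_+)$ is the basin of attraction of $F_-$, and since $0<F_-<F_+$ the physical initial condition lies inside this basin. Quantitatively, I would pass to the limit in the explicit solution (\ref{con:F1}): because $\sqrt{Q_n^2-4\kappa_nS_n}>0$ the exponential in the denominator grows, the first term vanishes, and $F_n(t)\to -\bigl(Q_n+\sqrt{Q_n^2-4\kappa_nS_n}\bigr)/(2\kappa_n)=F_-$. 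For the boundary case $4\kappa_nS_n-Q_n^2=0$ I would take the limit in (\ref{con:F3}), where $1/(\kappa_nt+C)\to 0$ yields $F_n(t)\to -Q_n/(2\kappa_n)$, the double fixed point. This also explains why the hypothesis is sharp: in the excluded case $4\kappa_nS_n-Q_n^2>0$ there is no real equilibrium and the corresponding tangent-type solution oscillates and diverges, so convergence genuinely fails.

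For the second assertion I would argue that the time dependence of the decay rate $F_n(t)$ is exactly the non-Markovian signature in (\ref{con:SSENonMarUpdate0}), so its disappearance yields Markovian dynamics. Concretely, substituting the limiting constant $F_n^{\infty}:=\lim_{t\to\infty}F_n(t)$ into (\ref{con:SSENonMarUpdate0}) renders the coefficients $F_n+F_n^{*}$ and $F_n$ time-independent, so the asymptotic generator is a time-homogeneous Lindblad operator; by Remark~\ref{RemarkMarkov} this is precisely the Markovian form of the master equation. Transferring this to (\ref{eq:NlevelOperatorEquation}), the decay coefficients multiplying $\langle \sigma^+_{n} \sigma^-_{n}\rangle$, $\langle \sigma^+_n a\rangle$ and $\langle \sigma^-_n a^{\dag}\rangle$ in (\ref{NlevelPopu}), (\ref{SPa}) and (\ref{SMad}) converge to constants, so these trajectories are governed asymptotically by a time-invariant (Markovian) linear system, which establishes the claim.

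The main obstacle I anticipate is not the algebra of the limit but controlling the case $4\kappa_nS_n-Q_n^2>0$ globally in time: I must ensure the physically relevant trajectory stays in the basin of $F_-$ and never reaches the unstable branch, equivalently that the denominator $1-e^{C+\sqrt{Q_n^2-4\kappa_nS_n}\,t}$ in (\ref{con:F1}) never vanishes. I would settle this by evaluating the integration constant for the datum $F_n(0)=0$ and checking that it gives $e^{C}>1$, so that $C+\sqrt{Q_n^2-4\kappa_nS_n}\,t>0$ and the denominator stays bounded away from zero for all $t\ge 0$. A secondary, more interpretive point is to state unambiguously what ``converge to be Markovian'' means, namely the recovery of constant-rate, time-homogeneous Lindblad dynamics; I would pin this down by reference to Remark~\ref{RemarkMarkov} rather than to the $\gamma\to\infty$ limit, since at finite $\gamma$ the limit value $F_n^{\infty}$ need not coincide with the Markovian value $\chi_n/2$.
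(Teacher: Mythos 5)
Your proposal is correct and follows essentially the same route as the paper: the paper's own proof simply evaluates the $t\rightarrow\infty$ limit of the explicit solutions (\ref{con:F1}) and (\ref{con:F3}) to obtain $\lim_{t\rightarrow\infty}F_n(t)=-\bigl(Q_n+\sqrt{Q_n^2-4\kappa_nS_n}\bigr)/(2\kappa_n)$, and contrasts this with the divergence of the tangent-type solution when $4\kappa_nS_n-Q_n^2>0$. Your added phase-line analysis (both equilibria positive, $F_-$ stable with basin $(-\infty,F_+)$ containing $F_n(0)=0$) and the verification that $e^{C}>1$ keeps the denominator of (\ref{con:F1}) nonvanishing are refinements the paper leaves implicit; the only blemish is the sign slip in your final paragraph, where the case requiring that denominator control is $4\kappa_nS_n-Q_n^2<0$, not $>0$.
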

\begin{proof}
According to the calculations above, when $\Omega = \tilde{\omega}_{n}$ and $4\kappa_nS_n- Q_n^2 \leq  0$,
\begin{equation}
\lim_{t\rightarrow \infty} F_n(t) =- \frac{Q_n+ \sqrt{Q_n^2-4\kappa_nS_n}}{2\kappa_n}.
\end{equation}

When $4\kappa_nS_n- Q_n^2> 0$, $\lim_{t\rightarrow \infty} F_n(t)$ does not exist because $ \tan \left[\sqrt{\frac{S_n}{\kappa_n}- \left (\frac{Q_n}{2\kappa_n}\right)^2} \left (\kappa_nt +C \right)\right]$ is infinite when 
\[\sqrt{\frac{S_n}{\kappa_n}- \left (\frac{Q_n}{2\kappa_n}\right)^2} \left (\kappa_nt +C \right) = l\pi + \frac{\pi}{2},\] 
with $l =0,1,2,\cdots$.
\end{proof}

Based on this, we further generalize to the dynamics of arbitrary energy levels according to the stability of nonlinear dynamics in the following.

For the general case that $F_n(t)$ is complex when $\Omega \neq \tilde{\omega}_{n}$, we denote $F_n(t) = R_n(t) + i I_n(t)$ with $R_n(t)$ and $I_n(t)$ representing the real and imaginary part of $F_n(t)$ respectively, then we can derive the following real-valued nonlinear equations
\begin{subequations} \label{eq:FRealI}
\begin{numcases}{}
\frac{\mathrm{d}  R_n}{\mathrm{d} t} =  \kappa_n \left( R_n^2 - I_n^2 \right) - \gamma R_n + \left( \Omega -\tilde{\omega}_{n}\right)I_n + \frac{\gamma \chi_n}{2}, \label{FRealILastV} \\
\frac{\mathrm{d}I_n}{\mathrm{d} t}  = 2 \kappa_n R_nI_n -  \left( \Omega -\tilde{\omega}_{n}\right)R_n - \gamma I_n.
\end{numcases}
\end{subequations}
Denote $X_F(t) = \left [R_n(t), I_n(t)\right]^{\rm T}$ with $X_F(0) = \left [0, 0\right]^{\rm T}$. Then Eq.~(\ref{eq:FRealI}) can be rewritten as
\begin{equation} \label{con:NonlinearXF}
\begin{aligned}
\frac{\mathrm{d}}{\mathrm{d} t} X_F(t) &= \begin{bmatrix}
\kappa_n R_n - \gamma & -\kappa_n I_n \\
\kappa_n I_n & \kappa_n R_n - \gamma
\end{bmatrix} \begin{bmatrix}
R_n \\
I_n
\end{bmatrix}  + \begin{bmatrix}
0 & \left( \Omega - \tilde{\omega}_{n}\right) \\
-  \left( \Omega - \tilde{\omega}_{n}\right) & 0
\end{bmatrix} \begin{bmatrix}
R_n(t) \\
I_n(t)
\end{bmatrix}  + \begin{bmatrix}
\frac{\gamma\chi_n}{2} \\
0
\end{bmatrix}\\
&\triangleq  \textbf{f}\left(X_F(t)\right) + \textbf{f}_{\Omega} X_F(t),
\end{aligned}
\end{equation}
where $\textbf{f}\left(X_F(t)\right)$ represents the sum of the first and third terms after the first 
equal sign, and $\textbf{f}_{\Omega} = \begin{bmatrix}
0 & \left( \Omega - \tilde{\omega}_{n}\right) \\
-  \left( \Omega - \tilde{\omega}_{n}\right) & 0
\end{bmatrix}$.

Based on the contraction analysis for arbitrary initial condition of $X_F(t)$~\cite{Automatica2008contraction}, consider the differential in Eq.~(\ref{con:NonlinearXF}), then 
\begin{equation}
\begin{aligned}\label{con:differential}
\delta \dot{X}_F(t) =\left [  \frac{\partial \textbf{f}\left(X_F(t)\right)}{\partial X_F} + \textbf{f}_{\Omega} \right]\delta X_F(t),
\end{aligned}
\end{equation}
and
\begin{equation}
\begin{aligned}\label{con:differentialSquare}
&\frac{\mathrm{d}}{\mathrm{d} t}\left ( \delta X_F(t)^{\rm T} \delta X_F(t) \right) = 2\delta X_F(t)^{\rm T} \frac{\partial \textbf{f}}{\partial X_F}\delta X_F(t) +  2 \textbf{f}_{\Omega} \delta X_F(t)^{\rm T}\delta X_F(t).
\end{aligned}
\end{equation}

The stability of Eq.~(\ref{con:NonlinearXF}) is mainly determined by the Jacobian $\textbf{J} =\partial \textbf{f}/\partial X_F + \textbf{f}_{\Omega} $. According to Eq.~(\ref{con:NonlinearXF}), 
\begin{equation}
\begin{aligned}\label{con:Jacobian}
\textbf{J} \left( X_F \right)
= \begin{bmatrix}
\kappa_n R_n - \gamma & -\kappa_n I_n +\left( \Omega - \tilde{\omega}_{n}\right) \\
\kappa_n I_n -\left( \Omega - \tilde{\omega}_{n}\right) & \kappa_n R_n - \gamma
\end{bmatrix},
\end{aligned}
\end{equation}
which reduces to $\partial \textbf{f}/\partial X_F$ when $\Omega = \tilde{\omega}_{n}$.

For the nonlinear system in Eq.~(\ref{eq:FRealI}),  we regard $u = \Omega - \tilde{\omega}_{n}$ as an unknown input, representing a constant perturbation or random unknown uncertainty due to the fact that the environment cannot be precisely modeled. Then
\begin{equation} \label{con:NonlinearXFControl}
\begin{aligned}
\frac{\mathrm{d}X_F(t)}{\mathrm{d} t} 
&=  \textbf{f}\left(X_F(t)\right) +\begin{bmatrix}
0 & u \\
-u & 0
\end{bmatrix}  X_F(t) \\
& = \left( \mathbf{A}\left(X_F(t)\right) + \mathbf{B} u\right)X_F(t)  + \begin{bmatrix}
\gamma\chi_n/2 \\
0
\end{bmatrix},
\end{aligned}
\end{equation}
where $\textbf{f}\left(X_F(t)\right) =\mathbf{A}\left(X_F(t)\right) X_F(t) + \begin{bmatrix}
\gamma\chi_n/2 \\
0
\end{bmatrix}$, $\mathbf{A}$ is the first matrix on the RHS of Eq.~(\ref{con:NonlinearXF}), $\mathbf{B} = \begin{bmatrix}
0 & 1 \\
-1 & 0
\end{bmatrix}$, and we define the output of the system as
\begin{equation} \label{con:OutputNonlinear}
\begin{aligned}
y_F(t) &= \begin{bmatrix}
1 & 1 \\
1 & -1
\end{bmatrix} \begin{bmatrix}
R_n(t) \\
I_n(t)
\end{bmatrix} \triangleq \mathcal{C} X_F(t).
\end{aligned}
\end{equation}

According to Refs.~\cite{Automatica1998contraction,Automatica2008contraction}, the Lyapunov function can be defined as $V\left(X_F\right) = \textbf{f}^{\rm T}\left(X_F\right) \mathbf{M}\left (X_F \right ) \textbf{f}\left(X_F\right)$, where $\mathbf{M}\left (X_F \right ) $ is a contraction matrix to be determined. When $u = 0$,
\begin{equation} \label{con:Vkdot}
\begin{aligned}
\dot{V}(X_F)=& \textbf{f}^{\rm T}\left(X_F(t)\right) \left [\frac{\partial \textbf{f}^{\rm T}}{\partial X_F} \mathbf{M}\left (X_F \right ) + \mathbf{M}\left (X_F \right ) \frac{\partial \textbf{f}}{\partial X_F} + \dot{\mathbf{M}}  \right] \textbf{f}\left(X_F(t)\right).
\end{aligned}
\end{equation}

For the simplest case with $\Omega= \tilde{\omega}_{n}$ and $I_n(t) \equiv 0$, Eq.~(\ref{con:NonlinearXF}) reduces to 
$\mathrm{d} R_n/\mathrm{d} t= \kappa_n  R_n^2 - \gamma R_n+ \gamma \chi_n/2$. More explanations are given combined with the following example and propositions.

\subsubsection{Example}
Take the parameter settings in Section~\ref{Sec:ParaExample}) as an example. When $\Omega = \tilde{\omega}_{n}$ and $ 4\kappa_nS_n= Q_n^2$, $I_n(t) \equiv 0$, $Q_n = -\gamma$, $S_n = \gamma^2/4\kappa_n$, $R_n(t) = \gamma/2\kappa_n - \gamma/\left(\gamma \kappa_nt + 2\kappa_n\right)$, then $\kappa_n R_n(t) - \gamma <0$ for arbitrary $t$ and $\textbf{J} \left( X_F(t) \right)$ is negative  definite. If we take $\mathbf{M} = \mathbf{I}$ as the identity matrix for simplification in Eq.~(\ref{con:Vkdot}), then $\dot{V}(X_F) < 0$ and Eq.~(\ref{eq:FRealI}) is globally stable in this parameter setting.

Generalized from the definition of invariant set as well as bounded-input bounded-output (BIBO) stability defined and introduced in Appendix~\ref{sec:NonlinearAppendix}, and the relative property in Ref.~\cite{lasalle1960some}, we can derive the following propositions.
\begin{mypro} \label{localinvariantset}
When $\Omega=\tilde{\omega}_n$ in Eq.~(\ref{con:NonlinearXF}) and $\mathbf{M} = \mathbf{I}$ in Eq.~(\ref{con:Vkdot}), there exists $\alpha_V > 0$ such that the parameters in Eq.~(\ref{con:NonlinearXF})  satisfy $2 \kappa_n \alpha_V + \gamma \chi_n - 2\gamma R_n<0$. In this case,  there exists an invariant set of $X_F$ represented as  $\Omega_{\alpha} = \left\{X_F \in \textbf{R}^2 : V \leq \alpha_V\right\}$  
 contained in the region satisfying $\dot{V} =0$, and  $\alpha_V \leq \gamma(\gamma-\chi_n\kappa_n)/\kappa_n^2$.
\end{mypro}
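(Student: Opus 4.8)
The plan is to treat the case $\Omega=\tilde{\omega}_n$ of the statement, where \eqref{con:NonlinearXF} becomes the autonomous system $\dot{X}_F=\textbf{f}(X_F)$ with components $\dot{R}_n=\kappa_n(R_n^2-I_n^2)-\gamma R_n+\gamma\chi_n/2$ and $\dot{I}_n=(2\kappa_n R_n-\gamma)I_n$, and to close the argument with LaSalle's invariance principle. I would use the quadratic (radial) Lyapunov function $V(X_F)=X_F^{\rm T}\mathbf{M}X_F=R_n^2+I_n^2$ induced by $\mathbf{M}=\mathbf{I}$, which is the natural choice matching the disk-shaped sublevel set $\Omega_{\alpha}=\{X_F\in\textbf{R}^2:V\leq\alpha_V\}$ appearing in the statement. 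Since $X_F(0)=[0,0]^{\rm T}$ lies on the line $\{I_n=0\}$, which is invariant under the flow (because $\dot{I}_n$ is proportional to $I_n$), it suffices to control $V$ on and near this line, and the radial $V$ lets me phrase everything as a single sublevel-set estimate.

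First I would compute $\dot{V}$ along the flow: a direct differentiation yields $\dot{V}=2(\kappa_n R_n-\gamma)V+\gamma\chi_n R_n$. On the boundary $\partial\Omega_{\alpha}=\{V=\alpha_V\}$ this expression is increasing in $R_n$ (its $R_n$-coefficient $2\kappa_n\alpha_V+\gamma\chi_n$ is positive), so its maximum over the circle is attained at the extreme point $R_n=\sqrt{\alpha_V}$, where it factorises as $\dot{V}=R_n(2\kappa_n\alpha_V+\gamma\chi_n-2\gamma R_n)$. Since $R_n>0$ along the relevant trajectory, this is negative exactly when the stated inequality $2\kappa_n\alpha_V+\gamma\chi_n-2\gamma R_n<0$ holds; hence that condition is precisely the requirement that $\dot{V}<0$ on $\partial\Omega_{\alpha}$, which makes $\Omega_{\alpha}$ positively invariant.

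Next I would establish existence of an admissible $\alpha_V$ and the quantitative bound. Substituting $R_n=\sqrt{\alpha_V}$ and writing $u=\sqrt{\alpha_V}$, the boundary condition becomes the quadratic inequality $2\kappa_n u^2-2\gamma u+\gamma\chi_n\leq 0$, which admits positive solutions precisely when its discriminant $4\gamma(\gamma-2\kappa_n\chi_n)$ is nonnegative, i.e.\ exactly in the regime $4\kappa_nS_n-Q_n^2\leq 0$ of Theorem~\ref{Fproperty}. The admissible $u$ then fills the interval $[u_-,u_+]$ with $u_{\pm}=(\gamma\pm\sqrt{\gamma^2-2\gamma\kappa_n\chi_n})/(2\kappa_n)$, so the range $\alpha_V\in[u_-^2,u_+^2]$ of strictly positive values is nonempty. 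For the quantitative bound I would use the elementary estimate $\sqrt{\gamma^2-2\gamma\kappa_n\chi_n}\leq\gamma-\kappa_n\chi_n$ (both sides positive; squaring reduces it to $0\leq\kappa_n^2\chi_n^2$), which gives $\alpha_V\leq u_+^2\leq\gamma(\gamma-\chi_n\kappa_n)/\kappa_n^2$, the claimed bound.

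Finally I would invoke LaSalle: as $\dot{V}\leq 0$ on the compact positively invariant set $\Omega_{\alpha}$, every trajectory starting in it (in particular the one from $X_F(0)=0$) converges to the largest invariant subset of $\{\dot{V}=0\}$; on the invariant line $\{I_n=0\}$ this set reduces to the stable equilibrium $R_n^-$ identified in Theorem~\ref{Fproperty}, so $\Omega_{\alpha}$ is the desired invariant set whose limiting dynamics lie in $\{\dot{V}=0\}$. I expect the main obstacle to be the boundary/worst-case step: one must verify that $\dot{V}\leq 0$ holds over the \emph{entire} circle $\{V=\alpha_V\}$ rather than merely on the axis, and that $\alpha_V$ can be chosen so that $\Omega_{\alpha}$ contains the stable equilibrium $R_n^-$ while excluding the unstable $R_n^+$, so that the LaSalle limit set is genuinely the single attracting point and the Markovian convergence of Theorem~\ref{Fproperty} is recovered.
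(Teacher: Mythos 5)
Your argument is correct, but it takes a genuinely different route from the paper's. The paper works with the contraction-style Lyapunov function $V(X_F)=\textbf{f}^{\rm T}(X_F)\mathbf{M}(X_F)\textbf{f}(X_F)$ defined just before Eq.~(\ref{con:Vkdot}), so with $\mathbf{M}=\mathbf{I}$ its $V$ is the squared norm of the \emph{velocity} $\textbf{f}(X_F)$, not of the state; its proof asserts $\dot V<0$ under the stated inequality, uses monotone convergence of $V$ to some $v_l<\alpha_V$ to conclude invariance, and obtains the bound as $r_1^2+r_2^2$ where $r_1,r_2$ are the roots of $\dot V=0$ (the equilibria), which equals $\gamma(\gamma-\chi_n\kappa_n)/\kappa_n^2$ by Vieta. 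You instead take $V=X_F^{\rm T}X_F=R_n^2+I_n^2$, which is the function whose sublevel sets actually are the disks $\Omega_\alpha$ in the statement, compute $\dot V=2(\kappa_nR_n-\gamma)V+\gamma\chi_nR_n$ exactly, and verify negativity over the whole boundary circle by a worst-case argument at $R_n=\sqrt{\alpha_V}$ --- which is precisely where the inequality $2\kappa_n\alpha_V+\gamma\chi_n-2\gamma R_n<0$ emerges, with $R_n$ pinned to $\sqrt{\alpha_V}$. Your existence-and-bound step (the quadratic in $u=\sqrt{\alpha_V}$, whose roots coincide with the equilibria $R_n^{\pm}$, together with $\sqrt{\gamma^2-2\gamma\kappa_n\chi_n}\leq\gamma-\kappa_n\chi_n$) is an honest derivation of $\alpha_V\leq\gamma(\gamma-\chi_n\kappa_n)/\kappa_n^2$, where the paper simply quotes the sum of squares of the two roots. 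What your approach buys is rigor and geometric transparency: positive invariance of $\Omega_\alpha$ follows from a boundary estimate rather than an asserted monotonicity, and LaSalle cleanly identifies the limit set as the stable equilibrium $R_n^-$ inside the disk while the unstable $R_n^+$ is excluded. The one caveat is fidelity to the paper's framework: the proposition explicitly points to Eq.~(\ref{con:Vkdot}), i.e.\ the velocity-based $V$, so strictly speaking you prove the same conclusion with a different --- and arguably better-suited --- Lyapunov function than the one the paper names.
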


\begin{proof}
Assume initially $X_F \in \Omega_{\alpha}$ and $\Gamma_F$ represents an invariant set of $X_F$ satisfying that $\dot{V}(X_F) = 0$. When $2\kappa_n \alpha_V + \gamma \chi_n - 2\gamma R_n <0$, $\dot{V} < 0$ is satisfied according to Eqs.~(\ref{con:Vkdot}), (\ref{FRealILastV}) with $\Omega=\tilde{\omega}_n$, and $V(x) \geq 0$ by its definition. Then there exists a positive value $v_l$ that $\lim_{t \rightarrow \infty} V(X_F) = v_l \geq0$ and $v_l < \alpha_V$. Thus $V(X_F)$ is always within the invariant set. Denote the solutions of $R_n$ satisfying $\dot{V} = 0$ as $r_1$ and $r_2$, then $\alpha_V \leq r_1^2 + r_2^2 = \gamma(\gamma-\chi_n\kappa_n)/\kappa_n^2$.
\end{proof}
\begin{remark}  \label{Sec:remarkParaCom}
\textbf{Proposition~\ref{localinvariantset}} means that larger $\gamma$ or smaller $\kappa_n$ can induce a larger invariant set with stronger Markovian property. This is why the Markovian approximation can be applied in the circumstance that the coupling between the quantum system and the environment is weak.
\end{remark}

\begin{mypro} \label{BIBOApply}
When $X_F$ in Eq.~(\ref{eq:FRealI}) is initially bounded and satisfies that $\dot{V} < 0$ when $\Omega=\tilde{\omega}_n$,
then the nonlinear process in Eq.~(\ref{eq:FRealI}) is BIBO stable.
\end{mypro}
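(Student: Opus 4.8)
The plan is to verify the bounded-input bounded-output property directly from the definition recalled in Appendix~\ref{sec:NonlinearAppendix}, treating $u=\Omega-\tilde{\omega}_n$ as the input in Eq.~(\ref{con:NonlinearXFControl}) and $y_F=\mathcal{C}X_F$ in Eq.~(\ref{con:OutputNonlinear}) as the output. Since $\mathcal{C}$ is a constant matrix, $\|y_F\|\le\|\mathcal{C}\|\,\|X_F\|$, so it suffices to show that the state $X_F(t)$ stays uniformly bounded for every bounded input whenever $X_F(0)$ is bounded and $\dot V<0$ holds at $u=0$. I would therefore reduce the output problem to establishing uniform boundedness of $X_F$.

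The key structural observation I would exploit is that the input enters Eq.~(\ref{con:NonlinearXFControl}) only through the matrix $\mathbf{B}u=\begin{bmatrix}0&u\\-u&0\end{bmatrix}$, which is skew-symmetric for real $u$. Consequently $X_F^{\rm T}\mathbf{B}u\,X_F=0$, and differentiating the radial function $W(X_F)=X_F^{\rm T}X_F$ along the trajectories yields $\dot W=2X_F^{\rm T}\textbf{f}(X_F)$, i.e. the input contributes nothing to the evolution of $\|X_F\|^2$. The input acts as a pure rotation in the $(R_n,I_n)$ plane that redistributes weight between the real and imaginary parts of $F_n$ without injecting energy, so the growth or decay of $\|X_F\|$ is governed entirely by the autonomous field $\textbf{f}$.

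With this reduction in hand, I would invoke Proposition~\ref{localinvariantset}: at $u=0$ the hypothesis $\dot V<0$ together with the invariant set $\Omega_{\alpha}=\{X_F\in\textbf{R}^2:V\le\alpha_V\}$ confines the autonomous trajectories to a bounded region, and the estimate $\alpha_V\le\gamma(\gamma-\chi_n\kappa_n)/\kappa_n^2$ provides an explicit bound on its size. Because $\dot W$ is independent of $u$, the same region remains absorbing for the forced system, so a trajectory starting from a bounded $X_F(0)$ cannot escape it; hence $\sup_t\|X_F(t)\|<\infty$ for every bounded $u$, and therefore $\sup_t\|y_F(t)\|<\infty$, which is exactly BIBO stability.

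The main obstacle I anticipate is making the ``same region remains absorbing'' step fully rigorous in terms of the contraction Lyapunov function $V=\textbf{f}^{\rm T}\mathbf{M}\textbf{f}$ rather than the plain norm $W$: with $u\neq 0$ the derivative $\dot V$ acquires cross terms proportional to $u\,(\partial\textbf{f}/\partial X_F)\mathbf{B}X_F$ that are absent from Eq.~(\ref{con:Vkdot}). I would control these by working on the compact set $\Omega_{\alpha}$, where $\partial\textbf{f}/\partial X_F$ and $X_F$ are bounded, so that the cross terms are dominated by the strictly negative $u=0$ part for the bounded inputs under consideration; alternatively, and more cleanly, I would carry the whole argument through with the radial function $W$, for which the input terms cancel exactly by skew-symmetry and no such estimate is required.
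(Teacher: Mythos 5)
Your route is genuinely different from the paper's and, on one point, more careful. The paper's proof is a three-line argument: take $\mathbf{M}=\mathbf{I}$, assert that $\dot V$ in Eq.~(\ref{con:Vkdot}) is independent of $\Omega-\tilde\omega_n$, and invoke Proposition~\ref{BIBOProposition}. That assertion is exactly the weak spot you flag in your last paragraph: Eq.~(\ref{con:Vkdot}) is derived for $u=0$, and for $u\neq 0$ the derivative of $V=\mathbf{f}^{\rm T}\mathbf{f}$ picks up the cross term $2\,\mathbf{f}^{\rm T}\tfrac{\partial\mathbf{f}}{\partial X_F}\mathbf{B}u\,X_F$, so the paper's input-independence claim does not hold as written. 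Your substitute observation --- that $\mathbf{B}u$ is skew-symmetric, hence $X_F^{\rm T}\mathbf{B}u\,X_F=0$ and the radial function $W=\|X_F\|^2$ has an input-independent derivative --- is exact rather than approximate, and is the cleaner way to obtain what the paper merely asserts.

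What remains open in your sketch is the hand-off between the two Lyapunov functions. The hypothesis of the proposition and the invariant set $\Omega_\alpha$ of Proposition~\ref{localinvariantset} are both phrased in terms of $V=\mathbf{f}^{\rm T}\mathbf{f}$, whereas your absorbing-region argument runs on $W$; a sublevel set of $V$ being forward invariant for the unforced flow does not by itself make a ball in $\|X_F\|$ absorbing for the forced flow. To close this you need the separate computation
\begin{equation*}
\dot W \;=\; 2\Bigl[(\kappa_n R_n-\gamma)\,\|X_F\|^2+\tfrac{\gamma\chi_n}{2}R_n\Bigr]
\;\le\; 2\Bigl[(\kappa_n r-\gamma)\,r^2+\tfrac{\gamma\chi_n}{2}r\Bigr]
\quad\text{on }\|X_F\|=r,
\end{equation*}
which is strictly negative for $r$ in the interval between the roots of $\kappa_n r^2-\gamma r+\gamma\chi_n/2$, an interval that is nonempty precisely in the weak-coupling regime $2\kappa_n\chi_n<\gamma$ (the case $4\kappa_nS_n\le Q_n^2$ of Theorem~\ref{Fproperty}). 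With that inequality in place, every ball of such radius $r$ is forward invariant uniformly in $u$, and your conclusion $\sup_t\|X_F(t)\|<\infty$, hence $\sup_t\|y_F(t)\|<\infty$, follows for initial data inside such a ball. So your argument is completable and self-contained, at the price of an explicit restriction on $X_F(0)$ and on the coupling regime that the paper's (flawed) shortcut hides.
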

\begin{proof}
We take $\mathbf{M} = \mathbf{I}$, $\dot{V}$ is independent of $\Omega-\tilde{\omega}_n$. Thus if  $\dot{V} < 0$ when $\Omega=\tilde{\omega}_n$, $\dot{V}<0$ also holds when $\Omega\neq\tilde{\omega}_n$ according to Eq.~(\ref{con:Vkdot}), then the system in Eq.~(\ref{eq:FRealI}) is BIBO stable based on Proposition~\ref{BIBOProposition} in Appendix~\ref{sec:NonlinearAppendix}.
\end{proof}

\begin{remark}
When nonlinear process in Eq.~(\ref{eq:FRealI}) is BIBO stable and the Jacobian in Eq.~(\ref{con:Jacobian}) is bounded, the Lyapunov function can also be bounded according to Ref.~\cite{ruffer2013convergent}.
\end{remark}

\begin{lemma} \label{FRlim}
For the coupled nonlinear dynamics in Eq.~(\ref{eq:FRealI}), when $\dot{V} < 0$ in Eq.~(\ref{con:Vkdot}),   $\lim_{t\rightarrow \infty}R_n(t)$ and $\lim_{t\rightarrow \infty}I_n(t)$ both exist when $\left |\Omega - \tilde{\omega}_{n} \right | \leq \epsilon$.
\end{lemma}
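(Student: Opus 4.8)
The plan is to show that, for sufficiently small detuning, the planar system in Eq.~\eqref{eq:FRealI} is globally contracting toward a single equilibrium point, so that both components converge. The first observation I would make is that the input-dependent part of Eq.~\eqref{con:NonlinearXF}, namely $\textbf{f}_\Omega X_F$ with $\textbf{f}_\Omega = \begin{bmatrix} 0 & u \\ -u & 0 \end{bmatrix}$ and $u = \Omega - \tilde{\omega}_n$, is skew-symmetric. Consequently $\delta X_F^{\rm T}\textbf{f}_\Omega\delta X_F = 0$, and the contraction rate $\frac{\mathrm{d}}{\mathrm{d} t}(\delta X_F^{\rm T}\delta X_F)$ in Eq.~\eqref{con:differentialSquare} collapses to $2\,\delta X_F^{\rm T}(\partial\textbf{f}/\partial X_F)\,\delta X_F$, which does not involve $u$. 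Hence the hypothesis $\dot V<0$, obtained at $\Omega=\tilde{\omega}_n$ as in Proposition~\ref{BIBOApply}, guarantees that the symmetric part of the Jacobian in Eq.~\eqref{con:Jacobian} is negative definite along the trajectory for every $u$ with $|u|\le\epsilon$, so that a contraction estimate of the form $|\delta X_F(t)|\le e^{-\beta t}|\delta X_F(0)|$ holds uniformly in $u$.

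Next I would exhibit an equilibrium to serve as the target of convergence. Treating $u$ as a fixed constant, the equilibria of Eq.~\eqref{con:NonlinearXF} solve $\textbf{f}(X_F)+\textbf{f}_\Omega X_F = 0$. At $u=0$ this reduces to $\kappa_n (R_n^\ast)^2 - \gamma R_n^\ast + \gamma\chi_n/2 = 0$ with $I_n^\ast = 0$, i.e.\ the steady value already identified in Theorem~\ref{Fproperty} and the Example. Because $\dot V<0$ forces $\partial\textbf{f}/\partial X_F$ to be negative definite and therefore nonsingular at $(R_n^\ast,0)$, the implicit function theorem produces a $C^1$ branch of equilibria $X_F^\ast(u) = (R_n^\ast(u),I_n^\ast(u))^{\rm T}$ for all $|u|\le\epsilon$; this is exactly the role of the smallness threshold $\epsilon$, ensuring that a genuine real equilibrium persists under the perturbation $u$. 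This equilibrium is a constant, hence bounded, solution of the perturbed system.

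Finally I would compare the trajectory started from $X_F(0) = (0,0)^{\rm T}$ with the constant trajectory $X_F^\ast(u)$. Proposition~\ref{BIBOApply} gives BIBO stability, so the former remains bounded, and the uniform contraction of the first step applied to $\delta X_F(t) = X_F(t) - X_F^\ast(u)$ yields $|X_F(t) - X_F^\ast(u)|\to 0$ as $t\to\infty$. Therefore $\lim_{t\to\infty} R_n(t) = R_n^\ast(u)$ and $\lim_{t\to\infty} I_n(t) = I_n^\ast(u)$ both exist, which is the assertion. I expect the main obstacle to lie in the second step: contraction by itself only says that trajectories approach one another, so one must actually produce an equilibrium trajectory as the convergence target and verify that the implicit-function-theorem branch remains inside the region where $\dot V<0$ over the whole half-line. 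Keeping the trajectory within that region is precisely what the bound $|\Omega - \tilde{\omega}_n|\le\epsilon$ secures, and making $\epsilon$ quantitative in terms of $\gamma$, $\kappa_n$ and $\chi_n$ — for instance via the invariant-set estimate $\alpha_V\le\gamma(\gamma-\chi_n\kappa_n)/\kappa_n^2$ of Proposition~\ref{localinvariantset} — is the delicate ingredient.
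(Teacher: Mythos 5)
Your argument reaches the right conclusion but by a genuinely different route from the paper. The paper's proof is a two-line Lyapunov-limit argument: since $V=\textbf{f}^{\rm T}\textbf{f}\geq 0$ and $\dot V<0$, the limit $\lim_{t\rightarrow\infty}V(t)$ exists, hence $\dot V\rightarrow 0$, and the limiting values of $R_n$ and $I_n$ are then ``solved'' from Eq.~(\ref{con:Vkdot}); it never constructs an equilibrium and never uses contraction. You instead establish uniform contraction of the flow (correctly exploiting the skew-symmetry of $\textbf{f}_{\Omega}$ so that the rate is independent of $u$), manufacture a persistent equilibrium $X_F^{*}(u)$ via the implicit function theorem, and conclude by comparing the trajectory with that constant solution. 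What your route buys is rigour exactly where the paper is loose: ``$\lim V$ exists'' does not by itself yield ``$\dot V\rightarrow 0$'' (a Barbalat-type uniform-continuity argument is needed), and ``$\dot V\rightarrow 0$ implies the state converges'' is an inference the paper leaves implicit; your contraction-to-equilibrium scheme closes both gaps and moreover identifies the limit as $X_F^{*}(u)$. Two caveats. First, your step ``$\dot V<0$ implies the symmetric part of the Jacobian is negative definite'' is not valid in general, since $\dot V<0$ only tests the quadratic form in the single direction $\textbf{f}$; it happens to be valid here because the symmetric part of $\partial\textbf{f}/\partial X_F$ equals $(2\kappa_n R_n-\gamma)$ times the identity, so negativity in one nonzero direction is negativity in all --- you should state this structural fact explicitly, as it is the load-bearing observation. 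Second, in the borderline case $2\chi_n\kappa_n=\gamma$ of Section~\ref{Sec:ParaExample} the unperturbed equilibrium is a double root of $\kappa_n R^2-\gamma R+\gamma\chi_n/2=0$, the Jacobian there is singular, the implicit function theorem does not apply, and the contraction rate degenerates to zero as the trajectory approaches the equilibrium, so your uniform estimate $|\delta X_F(t)|\leq e^{-\beta t}|\delta X_F(0)|$ fails; your proof therefore covers only the strict case $2\chi_n\kappa_n<\gamma$, even though the lemma's hypothesis $\dot V<0$ is also satisfied in the borderline case. The paper's own proof is no more robust there, but you should either exclude that case or handle it separately.
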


\begin{proof}
When $\dot{V} < 0$ and $V(t) > 0$, $\lim_{t\rightarrow \infty}V(t)$ exists and $\lim_{t\rightarrow \infty}\dot{V}(t) = 0$. Then $\lim_{t\rightarrow \infty}R_n(t)$ and $\lim_{t\rightarrow \infty}I_n(t)$ can be solved by Eq.~(\ref{con:Vkdot}).
\end{proof}

\begin{mypro} \label{FpropertyDetun}
If the nonlinear dynamics in Eq.~(\ref{eq:FRealI}) is BIBO stable when $\Omega = \tilde{\omega}_{n}$, there exists $\epsilon > 0$ such that the trajectory of $X(t)$ in Eq.~(\ref{eq:NlevelOperatorEquation}) converges to be Markovian when $t\rightarrow \infty$ if $\left |\Omega - \tilde{\omega}_{n} \right | \leq \epsilon$.
\end{mypro}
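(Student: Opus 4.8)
The plan is to show that the BIBO-stability hypothesis at resonance forces the non-Markovian decay coefficient $F_n(t)=R_n(t)+iI_n(t)$ to converge to a finite constant even after a small detuning is switched on, so that the dissipative part of the generator $A(t)$ in Eq.~(\ref{con:TimevariedEquation}) becomes asymptotically time-invariant; this is exactly the Markovian regime identified in Remark~\ref{RemarkMarkov}, where $F_n$ reduces to a constant. In this sense the result is the detuned generalization of Theorem~\ref{Fproperty}, with the explicit closed-form solutions used there replaced by the Lyapunov/contraction machinery of Proposition~\ref{BIBOApply} and Lemma~\ref{FRlim}.

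First I would transfer the stability from $\Omega=\tilde{\omega}_n$ to a detuned neighborhood. Since the choice $\mathbf{M}=\mathbf{I}$ makes $\dot{V}$ in Eq.~(\ref{con:Vkdot}) independent of $\Omega-\tilde{\omega}_n$ (the skew-symmetric block $\textbf{f}_{\Omega}$ satisfies $\textbf{f}_{\Omega}^{\rm T}=-\textbf{f}_{\Omega}$ and so drops out of $\textbf{J}^{\rm T}+\textbf{J}$), Proposition~\ref{BIBOApply} guarantees that $\dot{V}<0$ continues to hold for all $\Omega$, in particular for $|\Omega-\tilde{\omega}_n|\le\epsilon$. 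With $\dot{V}<0$ and $V\ge 0$ in force, Lemma~\ref{FRlim} applies and yields that $\lim_{t\to\infty}R_n(t)$ and $\lim_{t\to\infty}I_n(t)$ both exist, hence $F_n(t)$ converges to a finite constant $F_n^{\infty}=R_n^{\infty}+iI_n^{\infty}$.

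Second, I would feed this limit back into the coefficient matrix. Inspecting Eqs.~(\ref{con:TVEAnt})--(\ref{con:TVDCnt}), the only dependence of $A(t)$ on the non-Markovian memory enters through $F_n(t)$ and $F_n^*(t)$ in the diagonal blocks $A_n$ and in $B_n$; once $F_n(t)\to F_n^{\infty}$ these entries converge to constants, and the residual time-dependence is carried solely by the coherent coupling $\tilde{g}_n=g_n e^{i\Delta_n t}$, which is common to both the Markovian and non-Markovian descriptions. Consequently the dissipative generator of $X(t)$ in Eq.~(\ref{eq:NlevelOperatorEquation}) approaches the constant-rate (Markovian) generator obtained by replacing the transient $F_n(t)$ with its limit $F_n^{\infty}$ in the sense of Remark~\ref{RemarkMarkov}, so the trajectory of $X(t)$ converges to be Markovian, as claimed. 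The existence of $\epsilon$ is then just the robustness of the strict inequality $\dot{V}<0$ under the perturbation $u=\Omega-\tilde{\omega}_n$.

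The hard part will be making the phrase \emph{converges to be Markovian} precise for the trajectory rather than for the coefficients alone: asymptotic constancy of $A(t)$ does not automatically transfer to $X(t)$ without an asymptotic-autonomy argument, so I would use the fact that $A(t)-A_\infty\to 0$ (driven by $F_n(t)-F_n^{\infty}\to 0$) and, if a convergence rate is required, extract it from the exponential decay implied by $\dot{V}<0$ to dominate the solution of the LTV system by that of its limiting time-invariant counterpart. A secondary subtlety is checking that the common limit $F_n^{\infty}$ depends continuously on the detuning, so that the neighborhood $|\Omega-\tilde{\omega}_n|\le\epsilon$ is genuinely nonempty; this follows from the continuity in the parameter $u$ of the stable equilibrium of the contraction dynamics in Eq.~(\ref{con:NonlinearXF}).
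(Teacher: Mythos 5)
Your proposal is correct and follows essentially the same route as the paper's own proof: exploit the fact that with $\mathbf{M}=\mathbf{I}$ the Lyapunov derivative in Eq.~(\ref{con:Vkdot}) is independent of the detuning $\Omega-\tilde{\omega}_n$ (as in Proposition~\ref{BIBOApply}), invoke Lemma~\ref{FRlim} to obtain steady limits for $R_n(t)$ and $I_n(t)$, and conclude that the trajectory of $X(t)$ becomes Markovian once $F_n(t)$ settles to a constant. Your additional remarks on the asymptotic-autonomy step and the continuity of the limit in the detuning are refinements of the same argument rather than a different approach.
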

\begin{proof}
Similar to Proposition~\ref{BIBOApply}, the Lyapunov function in Eq.~(\ref{con:Vkdot}) is independent of $\left ( \Omega - \tilde{\omega}_n\right)$. When the system is BIBO stable and $\left |\Omega - \tilde{\omega}_{n} \right | \leq \epsilon$, then $R_n(t)$ and $I_n(t)$ are solvable with steady values according to Lemma~\ref{FRlim} and Eq.~(\ref{con:Jacobian}). Then the trajectory of $X(t)$ in Eq.~(\ref{eq:NlevelOperatorEquation}) converges to be Markovian because $R_n$ and $I_n$ reach constants when $t$ is large enough.
\end{proof}

Taking the circumstance $N=2$ as an example, the real and imaginary parts of $F_1(t)$ are compared in Fig.~\ref{fig:MarkoviaParaCom}, where we take $\chi_1 = 1$, $\gamma = 2$GHz~\cite{NatureExperiment}, $\kappa_1=1$GHz, $\Omega= 50$GHz and $\mathrm{d} t = 0.01$ns. The comparisons in  Figs.~\ref{fig:MarkoviaParaCom}(a) and (b) show that, the diverse between $\Omega$ and $\tilde{\omega}_1$ can induce oscillations of $R_1(t)$ and $I_1(t)$, and this can influence the non-Markovian dynamics via the time-varying decay rates in Eq.~(\ref{eq:NlevelOperatorEquation}).

\begin{figure}[h]
\centerline{\includegraphics[width=1\columnwidth]{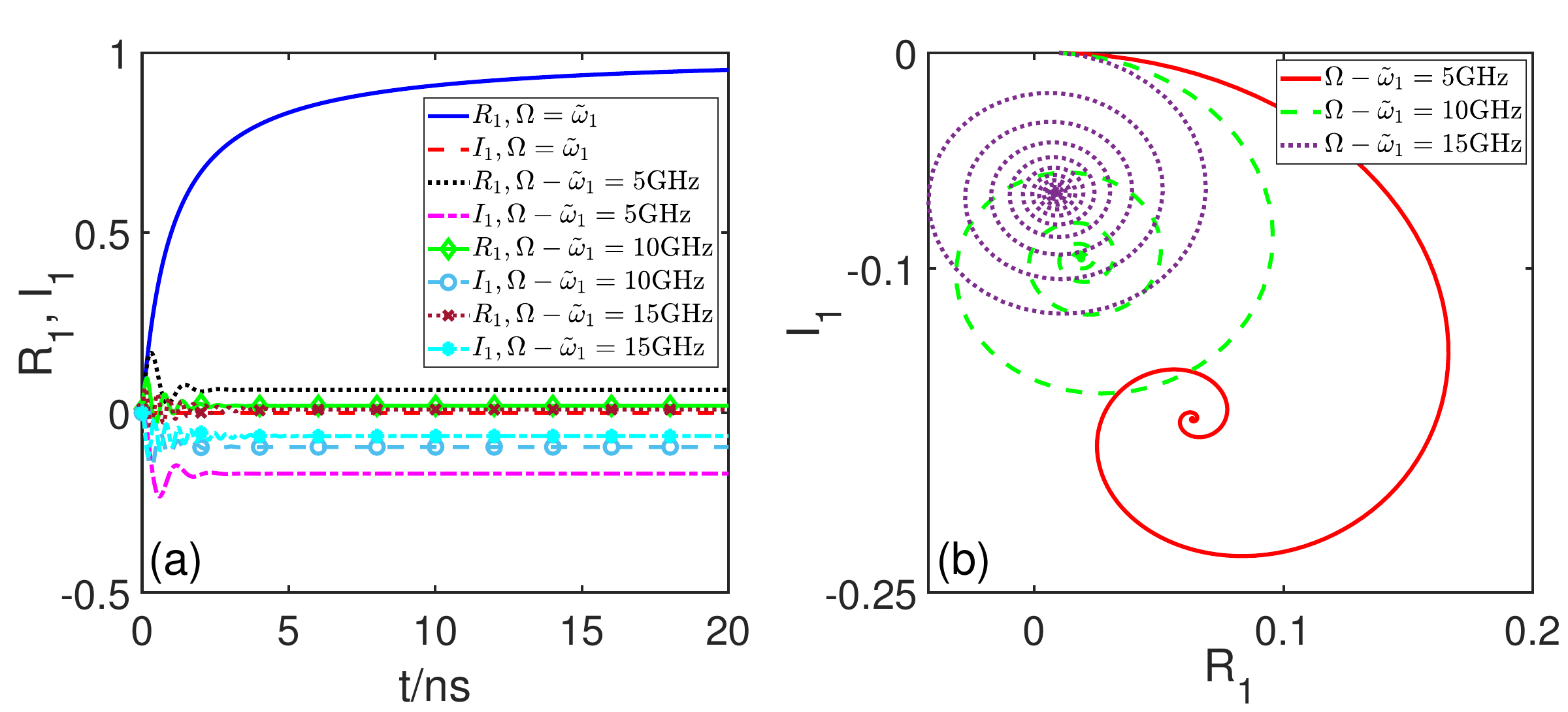}}
\caption{Compare different parameter settings for non-Markovian interactions.}
	\label{fig:MarkoviaParaCom}
\end{figure}

However, for the multi-level system in Fig.~\ref{fig:NonMarkovian}, the condition that $\tilde{\omega}_{n} = \Omega$ in Theorem~\ref{Fproperty} can never be simultaneously satisfied for all the energy levels labeled by $n$, thus we need to generalize Theorem~\ref{Fproperty} to the circumstance that $\tilde{\omega}_{n} \neq \Omega$ by regarding their difference as uncertain inputs as follows.

\subsection{Nonlinear dynamics with an uncertain input}
Practically, we usually cannot know the exact value of the environment parameter $\Omega$, thus the component $\Omega - \tilde{\omega}_n$ in Eq.~(\ref{eq:FRealI}) can be regarded as an uncertain input to the nonlinear system. Then Eq.~(\ref{con:NonlinearXF}) can be re-written as  
\begin{equation} \label{eq:unknownRandomInout}
\begin{aligned}
\frac{\mathrm{d}}{\mathrm{d} t} X_F(t)  &= \textbf{f}\left(X_F(t)\right) + \textbf{u},
\end{aligned}
\end{equation}
where we denote $\textbf{u} = \Delta A X_F(t) $ with $\Delta A  = \begin{bmatrix} 0 & \Omega- \tilde{\omega}_{n}\\ \tilde{\omega}_{n} - \Omega &0\end{bmatrix}$, and there always exists $\epsilon>0$ satisfying  $\| \Delta A \| \leq \epsilon$. Then Eq.~(\ref{eq:unknownRandomInout}) can be regarded as a perturbed nonlinear system. 
\begin{assumption} \label{Idealstable}
The nonlinear equation (\ref{eq:FRealI}) with $\Omega = \tilde{\omega}_n$ is stable according to the parameter settings in Theorem~\ref{Fproperty}.
\end{assumption}
According to Eqs.~(\ref{con:rewriteF})-(\ref{con:F3}), when $\Omega = \tilde{\omega}_{n}$, $I_n(t)\equiv 0$, and we denote $\lim_{t\rightarrow \infty} R_n(t) = \bar{R}_n$. Define a new state vector $\tilde{X}_F(t) = \left [R_n(t)-\bar{R}_n, I_n(t)\right]^{\rm T} \triangleq \left [\tilde{R}_n(t), I_n(t)\right]^{\rm T}$ and $\lim_{t\rightarrow \infty} \tilde{X}_F(t) = \left [0,0\right]^{\rm T}$ when $\Omega = \tilde{\omega}_n$ and Assumption~\ref{Idealstable} is satisfied. Then
\begin{subequations} \label{eq:FChange}
\begin{numcases}{}
\frac{\mathrm{d}}{\mathrm{d} t}\tilde{R}_n(t) =  \kappa_n \left[ \left(\tilde{R}_n(t) + I_n\right)^2  - I_n^2(t) \right]  - \gamma  \left(\tilde{R}_n(t) + \bar{R}_n\right) 
+ \left( \Omega -\tilde{\omega}_{n}\right)I_n(t) + \frac{\gamma \chi_n}{2}, \\
\frac{\mathrm{d}}{\mathrm{d} t} I_n(t) = 2 \kappa_n  \left(\tilde{R}_n(t) + \bar{R}_n\right) I_n(t) -  \left( \Omega - \tilde{\omega}_{n}\right) \left(\tilde{R}_n(t) + \bar{R}_n\right)  - \gamma I_n(t),
\end{numcases}
\end{subequations}
and can be simplified as
\begin{equation} \label{eq:XFtilde}
\begin{aligned}
\frac{\mathrm{d}}{\mathrm{d} t} \tilde{X}_F(t)  =& \tilde{\textbf{f}}\left(\tilde{X}_F(t)\right) + \begin{bmatrix} \frac{\gamma\chi_n}{2} +  \kappa_n \bar{R}_n^2 - \gamma \bar{R}_n\\ 
\left(\tilde{\omega}_{n} - \Omega \right)\bar{R}_n\end{bmatrix} + \begin{bmatrix} 0   & \Omega- \tilde{\omega}_{n}\\ 
\tilde{\omega}_{n} - \Omega & 0 
 \end{bmatrix}\tilde{X}_F(t),
\end{aligned}
\end{equation}
where $\tilde{\textbf{f}}\left(\tilde{X}_F(t)\right)$ is for the nonlinear component on the RHS of Eq.~(\ref{eq:FChange}). Obviously, $\tilde{\textbf{f}}\left(\tilde{X}_F(t)\right) = 0$ when $\tilde{X}_F = 0$. When $\Omega = \tilde{\omega}_{n}$ and $X_F(t)$ converges by Theorem~\ref{Fproperty}, the RHS of Eq.~(\ref{eq:XFtilde}) converges to zero.

\begin{theorem}
When $4\kappa_nS_n- Q_n^2  \leq 0$ in Eq.~(\ref{eq:FChange}), there exists $\epsilon > 0$ such that the trajectory of $X(t)$ in Eq.~(\ref{eq:NlevelOperatorEquation}) converges to a Markovian behavior when the uncertain $\Omega$ is bounded by $\left |  \tilde{\omega}_{n} - \Omega \right| \leq \epsilon$.
\end{theorem}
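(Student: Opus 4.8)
The plan is to read the claim as a total-stability (robustness) statement: the uncertain detuning $\Omega-\tilde{\omega}_n$ enters the shifted dynamics (\ref{eq:XFtilde}) only through the bounded, \emph{skew-symmetric} coupling $\Delta A$, and I will argue that such a perturbation cannot destroy the contraction already established for the nominal system at $\Omega=\tilde{\omega}_n$. First I would fix the nominal behaviour: under $4\kappa_n S_n-Q_n^2\le 0$, Theorem~\ref{Fproperty} together with Assumption~\ref{Idealstable} forces $I_n\equiv 0$ and $R_n(t)\to\bar{R}_n$, so the shifted state $\tilde{X}_F=[R_n-\bar{R}_n,\,I_n]^{\mathrm{T}}$ has $\tilde{X}_F^*=0$ as an equilibrium of (\ref{eq:XFtilde}) when $u=\Omega-\tilde{\omega}_n=0$. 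Taking $\mathbf{M}=\mathbf{I}$ in (\ref{con:Vkdot}), the Example following that equation gives $\kappa_n R_n-\gamma<0$ along the nominal trajectory, so $\partial\mathbf{f}/\partial X_F$ is negative definite and $\dot{V}<0$ near the equilibrium; Proposition~\ref{BIBOApply} then yields BIBO stability and Proposition~\ref{localinvariantset} produces an invariant set $\Omega_\alpha=\{X_F:V\le\alpha_V\}$ on which $\dot{V}\le 0$.

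The second step is the structural observation already exploited in Propositions~\ref{BIBOApply} and~\ref{FpropertyDetun}. With $\mathbf{M}=\mathbf{I}$ the sign of $\dot{V}$ in (\ref{con:Vkdot}) is controlled by the symmetric part of the Jacobian (\ref{con:Jacobian}); since the detuning contributes only through the skew-symmetric block $\Delta A$, it drops out of that symmetric part, so $\dot{V}$ is \emph{independent} of $u$. Hence the inequality $\dot{V}<0$ obtained at $\Omega=\tilde{\omega}_n$ persists for every $\Omega$, in particular for $|\tilde{\omega}_n-\Omega|\le\epsilon$. I would then invoke a stability-under-persistent-disturbances / implicit-function argument: for $\epsilon$ small the constant forcing $[\,\gamma\chi_n/2+\kappa_n\bar{R}_n^2-\gamma\bar{R}_n,\,(\tilde{\omega}_n-\Omega)\bar{R}_n\,]^{\mathrm{T}}$ and the skew coupling in (\ref{eq:XFtilde}) displace the equilibrium by only $O(\epsilon)$, keeping it inside $\Omega_\alpha$, so the perturbed trajectory stays bounded and, by Lemma~\ref{FRlim} with $\dot{V}<0$, both $\lim_{t\to\infty}R_n(t)$ and $\lim_{t\to\infty}I_n(t)$ exist.

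Finally I would transfer this back to the state equation. Because $F_n(t)=R_n(t)+iI_n(t)$ approaches the constant $\bar{R}_n+i\bar{I}_n$, every entry of the system matrix $A(t)$ in (\ref{con:TimevariedEquation}) — which depends on time only through $F_n,F_n^*$ apart from the fixed carrier factors in $\tilde{g}_n$ — converges to a constant matrix as $t\to\infty$. A constant $F_n$ is exactly the Markovian form identified in Remark~\ref{RemarkMarkov}, so the trajectory of $X(t)$ in (\ref{eq:NlevelOperatorEquation}) converges to Markovian behaviour, which is the assertion. This parallels Proposition~\ref{FpropertyDetun}, with the abstract BIBO hypothesis there replaced by the concrete parameter condition $4\kappa_n S_n-Q_n^2\le 0$.

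The hard part will be the quantification of $\epsilon$ and the rigorous proof that the perturbed equilibrium persists and remains attracting rather than bifurcating into the oscillatory regime that appears when $4\kappa_n S_n-Q_n^2>0$. Skew-symmetry delivers $\dot{V}<0$ for free, but showing that the bounded trajectory actually settles to a \emph{single} steady point — so that $A(t)$ genuinely becomes time-invariant — requires combining boundedness, the LaSalle-type invariant-set reasoning of Proposition~\ref{localinvariantset}, and continuity of the equilibrium in $\epsilon$; bounding how large $\epsilon$ may be before $\Omega_\alpha$ ceases to contain the shifted equilibrium is the crux of the estimate.
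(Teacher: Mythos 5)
Your proposal is correct in outline and rests on the same two pillars as the paper's proof: (i) nominal uniform asymptotic stability of the shifted dynamics $\dot{\tilde{X}}_F=\tilde{\textbf{f}}(\tilde{X}_F)$ at $\tilde{X}_F=0$ when $\Omega=\tilde{\omega}_n$ (from Theorem~\ref{Fproperty} under $4\kappa_nS_n-Q_n^2\le 0$), and (ii) persistence of stability under the bounded detuning perturbation. The difference is in how step (ii) is executed. The paper simply treats the skew coupling and the constant offset in Eq.~(\ref{eq:XFtilde}) as a persistent perturbation $\textbf{h}$ with $\|\textbf{h}\|$ small for $|\tilde{\omega}_n-\Omega|\le\epsilon$ and invokes Proposition~\ref{Chenstable} (total stability \`a la Chen) in one line; you instead re-derive the robustness by hand via the Lyapunov function with $\mathbf{M}=\mathbf{I}$, the skew-symmetry of $\Delta A$ making $\dot V$ insensitive to $u$, the invariant set of Proposition~\ref{localinvariantset}, and Lemma~\ref{FRlim}. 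Your route buys a more explicit mechanism (and connects the theorem to Propositions~\ref{localinvariantset}--\ref{FpropertyDetun}), at the cost of re-proving what Proposition~\ref{Chenstable} already packages. Your closing caveat is well taken and in fact applies to the paper's own proof: Proposition~\ref{Chenstable} only yields Lyapunov stability (boundedness near the equilibrium), whereas the conclusion that $A(t)$ in Eq.~(\ref{con:TimevariedEquation}) becomes asymptotically time-invariant requires $F_n(t)$ to converge to a \emph{single} constant; the paper glosses over this, while you correctly flag it as the remaining gap and point to Lemma~\ref{FRlim} as the tool to close it.
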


\begin{proof}
When $\tilde{\omega}_{n} = \Omega$, by Theorem~\ref{Fproperty}, the quantum dynamics approaches  Markovian behavior as $t\rightarrow \infty,$ and the nonlinear dynamics $\dot{\tilde{X}}_F(t)  = \tilde{\textbf{f}}\left(\tilde{X}_F(t)\right)$ is uniformly and asymptotically stable around its equilibrium $ \tilde{X}_F =0$. When $\left |  \tilde{\omega}_{n} - \Omega \right| \leq \epsilon$ and $\gamma\chi_n$ is finite,  Proposition~\ref{Chenstable} in Appendix~\ref{sec:NonlinearAppendix} ensures that $\tilde{R}_n(t)$ and $I_n(t)$ in Eq.~(\ref{eq:FChange}) remain stable. Then the trajectory of $X(t)$ in  Eq.~(\ref{eq:NlevelOperatorEquation}) converges to be Markovian as $t\rightarrow \infty$.
\end{proof}

The above theorem shows that, a trajectory of $X(t)$ in Eq.~(\ref{eq:NlevelOperatorEquation}) can converge to be Markovian even when $\Omega \neq \tilde{\omega}_n$ and $\Omega$ is not precisely known, if only the uncertain $\Omega$ is bounded around $\tilde{\omega}_{n}$.

Now we take $N=3$ as an example for numerical simulations of the above non-Markovian dynamics around the parameter settings in Refs.~\cite{fink2008climbing,fink2010quantum}. As shown in Fig.~\ref{fig:threelevel}, we take $\omega_1 = 37.7$GHz, $\omega_2 = 75.3$GHz, $\Omega = 50$GHz, $\Delta_1 = -20$MHz, $\Delta_2 = -70$MHz, $g_1 = g_2 = 20$MHz, $\chi_1= \chi_2 = 1$, $\gamma = 10$GHz, $\kappa_1 = \kappa_2 = 0.31$GHz, and the time step for simulation equals $0.01$ns. In the simulations for Markovian circumstance, we take $F_n$ with $n=1,2$ as constants that equal the steady values of those in the non-Markovian simulations.
\begin{figure}[h]
\centerline{\includegraphics[width=0.7\columnwidth]{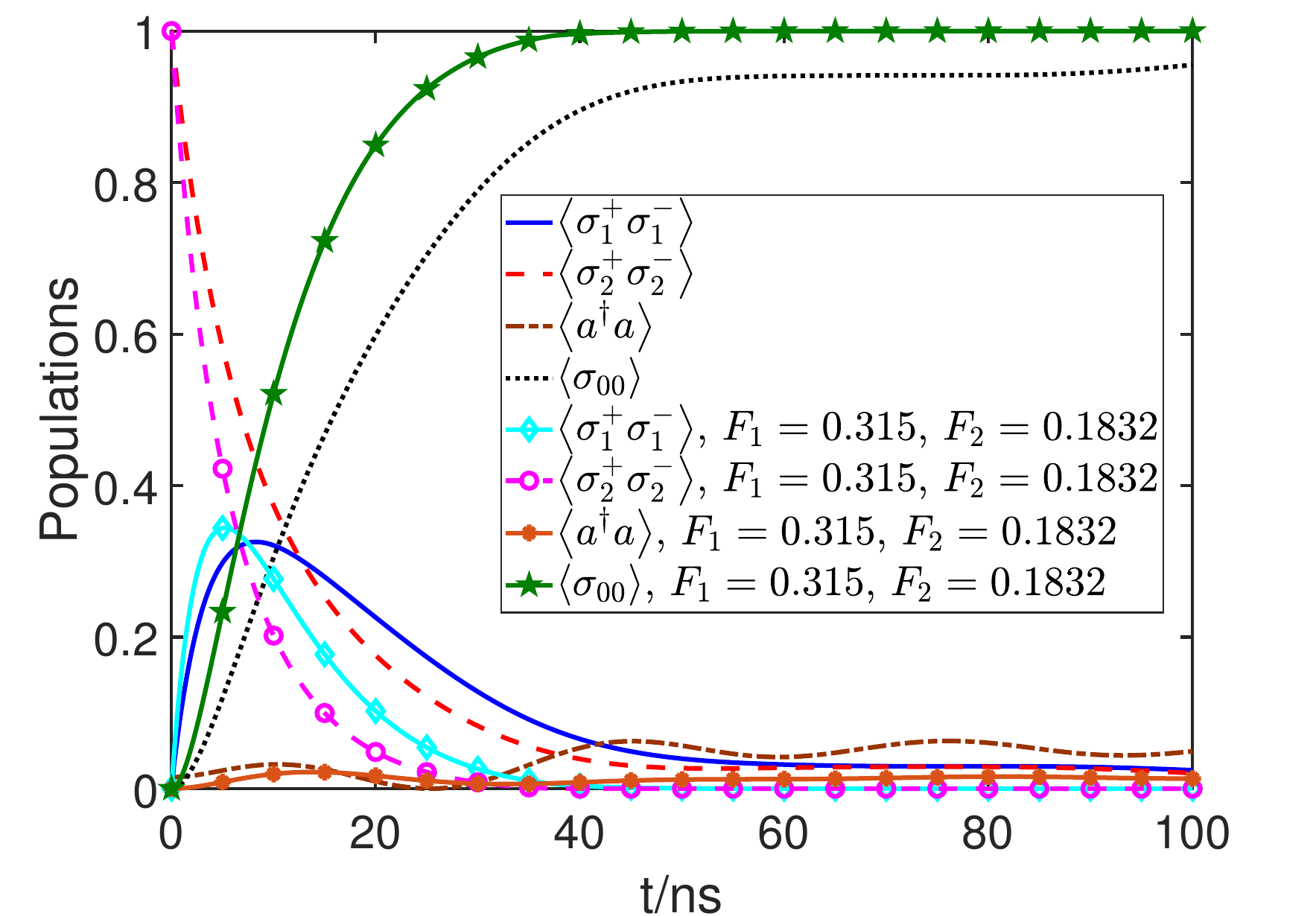}}
\caption{Non-Markovian and Markovian dynamics based on the quantum system that one three-level atom is coupled to a cavity.}
	\label{fig:threelevel}
\end{figure}

\section{Time-varying linear quantum control dynamics} \label{Sec:quantumcontrolOpen}
After clarifying the nonlinear parameter dynamics in the section above, in this section, we study the quantum open-loop control and measurement feedback control based on the control model in Eq.~(\ref{eq:NlevelOperatorEquation}), and the results are specially influenced by the nonlinear parameter dynamics in Eq.~(\ref{NMEqF}) due to the non-Markovian interactions between the quantum system and environment.

In the following, we only consider the circumstance that $4\kappa_nS_n- Q_n^2\leq  0$, thus the interaction between the atom and environment finally converges to be Markovian. 
Return to Eq.~(\ref{con:TimevariedEquation}), we analyze how the quantum dynamics can be influenced by detunings between the atom and cavity, and drive fields applied upon the quantum system.

For the simplified circumstance that $\Delta_n = 0$ as in Refs.~\cite{jing2010non,wang2018adiabatic}, then $\tilde{g}_n(t) \equiv g_n$, $C_n$ and $D_n$ in Eqs.~(\ref{con:TVECnt}) and (\ref{con:TVDCnt}) are all time-invariant for arbitrary $n$, while $A_n$ is time-varying with $F_n(t)$ converges to its steady values according to Eqs.~(\ref{con:F1}) and (\ref{con:F3}). $A(t)$ can be separated into the time-invariant component and time-varying components as
\begin{equation} \label{con:TVEAnt2}
\begin{aligned}
A_n(t)=&\bar{A}_n + \hat{A}_n(t)\\
=&\begin{bmatrix}
-L_n^{(1)} & -i \tilde{g}_n  & i \tilde{g}_n  \\
-i \tilde{g}_n & -L_n^{(2)} & 0\\
 i \tilde{g}_n  & 0 & -L_n^{(3)}
\end{bmatrix} + {\rm{diag}}\left(L_n^{(1)} -\left ( F_n + F_n^*\right), L_n^{(2)}-F_n^*,L_n^{(3)}-F_n  \right),
\end{aligned}
\end{equation}
where $L_n^{(1)} = \lim_{t\rightarrow \infty} \left ( F_n(t) + F_n^*(t)\right)$, $L_n^{(2)} = \lim_{t\rightarrow \infty}F_n^*(t)$ and $L_n^{(3)} =\lim_{t\rightarrow \infty} F_n(t)$, then $\lim_{t\rightarrow \infty} \hat{A}_n(t) = \textbf{0}_{3\times3}$. 

For the special case with $g_n = 0$, where the multi-level atom only interacts with the non-Markovian environment with $C_n = D_n^{\rm T} =\textbf{0}_{1\times3}$, we can construct the vector representing the populations of a multi-level atom as 
\[
\tilde{X}(t) = \left [\langle \sigma^+_{1} \sigma^-_{1}\rangle, \cdots,\langle \sigma^+_{n} \sigma^-_{n}\rangle,\cdots, \langle \sigma^+_{N-1} \sigma^-_{N-1}\rangle\right ]^{\rm T}.
\]
$\tilde{X}(t) $ is governed by the following real-valued equation if initially $\tilde{X}(0)$ is real,
\begin{equation} \label{con:populationdynamics}
\begin{aligned}
\frac{\mathrm{d}}{\mathrm{d} t} \tilde{X}(t) =  \left(\textbf{L}  + \textbf{P}(t)\right)\tilde{X}(t),
\end{aligned}
\end{equation}
where $\textbf{P}(n,n) = L_n^{(1)} -\left ( F_n + F_n^*\right)$, $\textbf{P}(n,n+1) = \left ( F_{n+1} + F_{n+1}^*\right) -L_{n+1}^{(1)} $, all the other elements in $\textbf{P}$ equal zero, the time-invariant matrix $\textbf{L}$ can be determined by Eq.~(\ref{con:TVEAnt2}) and $\textbf{P}(t)$, and more details are omitted. We denote $\tilde{X}(t) = \Phi\left(t,t_0\right)\tilde{X}(t_0)$ for $t\geq t_0$,  and~\cite{Hu2020necessaryTAC} 
\begin{equation} \label{con:transitionM}
\begin{aligned}
\Phi\left(t,t_0\right) = e^{\textbf{L}\left(t-t_0\right)} + \int_{t_0}^t e^{\textbf{L}\left(t-\tau\right)}  \textbf{P}(\tau) \Phi\left(\tau,t_0\right) \mathrm{d}\tau,
\end{aligned}
\end{equation}
which is similar to the quantum dynamics with a time-varying control input~\cite{wu2023bicon}, then we can derive the following results.
\begin{lemma}
When the trajectories of $\tilde{X}(t)$  in Eq.~(\ref{con:populationdynamics}) with $g_n = 0$ converge to be Markovian, the time-varying component $\textbf{P}$ converges to zero, and the transition matrix in Eq.~(\ref{con:transitionM}) is bounded.
\end{lemma}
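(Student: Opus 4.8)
The plan is to treat the two assertions in turn, first showing $\textbf{P}(t)\to\textbf{0}$ and then using this together with a stability property of $\textbf{L}$ to bound $\Phi(t,t_0)$ through the Volterra identity~(\ref{con:transitionM}).

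First I would establish the vanishing of $\textbf{P}$. The hypothesis that the trajectories of $\tilde{X}(t)$ converge to a Markovian form means, via Theorem~\ref{Fproperty} (and its extension to bounded detuning), that each $F_n(t)$ tends to the constant $L_n^{(3)}=\lim_{t\to\infty}F_n(t)$, so that $F_n(t)+F_n^*(t)\to L_n^{(1)}$ by the very definition $L_n^{(1)}=\lim_{t\to\infty}\left(F_n(t)+F_n^*(t)\right)$. Substituting into the only nonzero entries $\textbf{P}(n,n)=L_n^{(1)}-\left(F_n+F_n^*\right)$ and $\textbf{P}(n,n+1)=\left(F_{n+1}+F_{n+1}^*\right)-L_{n+1}^{(1)}$ shows each tends to zero, hence $\textbf{P}(t)\to\textbf{0}$.

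Next I would analyze the time-invariant matrix $\textbf{L}$. Since the population coupling in~(\ref{con:populationdynamics}) with $g_n=0$ only transfers probability from level $n+1$ into level $n$ and drains level $n$ by its own decay, $\textbf{L}$ is upper bidiagonal, hence upper triangular, with diagonal entries $-\kappa_n L_n^{(1)}$. Because $L_n^{(1)}=2\,\mathrm{Re}\,L_n^{(3)}$ is the positive steady decay rate given by the limit in Theorem~\ref{Fproperty}, every eigenvalue of $\textbf{L}$ is strictly negative, so $\textbf{L}$ is Hurwitz and $\|e^{\textbf{L}(t-t_0)}\|\le M e^{-\lambda(t-t_0)}$ for some $M,\lambda>0$. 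Taking norms in~(\ref{con:transitionM}) and writing $u(t)=\|\Phi(t,t_0)\|$, I would then obtain $u(t)\le M e^{-\lambda(t-t_0)}+M\int_{t_0}^t e^{-\lambda(t-\tau)}\|\textbf{P}(\tau)\|\,u(\tau)\,\mathrm{d}\tau$, and after multiplying by $e^{\lambda t}$ apply the Gronwall inequality to get $u(t)\le M e^{-\lambda(t-t_0)}\exp\!\left(M\int_{t_0}^t\|\textbf{P}(\tau)\|\,\mathrm{d}\tau\right)$.

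The hard part will be that $\textbf{P}(t)$ is not necessarily integrable: in the critical regime $4\kappa_n S_n=Q_n^2$ the explicit solution~(\ref{con:F3}) decays only like $1/t$, so $\int_{t_0}^t\|\textbf{P}(\tau)\|\,\mathrm{d}\tau$ grows logarithmically and the Gronwall factor grows only polynomially in $t$. The resolution is that the exponential decay of $e^{\textbf{L}(t-t_0)}$ dominates this polynomial factor, so $u(t)$ remains bounded (indeed tends to zero) uniformly for $t\ge t_0$; in the strictly stable regime $4\kappa_n S_n<Q_n^2$ the solution~(\ref{con:F1}) decays exponentially, $\int_{t_0}^\infty\|\textbf{P}(\tau)\|\,\mathrm{d}\tau<\infty$, and boundedness of $\Phi$ is immediate.
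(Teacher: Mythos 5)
Your proof is correct, and it takes a genuinely more rigorous route than the paper's. The paper's own proof obtains $\textbf{P}(t)\to 0$ exactly as you do (convergence of $F_n(t)$ via Remark~\ref{RemarkMarkov}), but then disposes of the boundedness of $\Phi(t,t_0)$ in one line by asserting that $\lim_{t\to\infty}\partial\Phi(t,t_0)/\partial t=0$ and concluding boundedness from that; this inference is not valid in general (a function whose derivative vanishes at infinity can still be unbounded, and the vanishing of $\partial\Phi/\partial t=(\textbf{L}+\textbf{P}(t))\Phi$ itself presupposes that $\Phi$ does not blow up), so your Hurwitz-plus-Gronwall argument is the one that actually closes the logical gap. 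Two further points of comparison: first, you identify $\textbf{L}$ as upper bidiagonal, hence upper triangular with eigenvalues $-\kappa_nL_n^{(1)}<0$, whereas the text following the lemma calls it diagonal --- your reading of Eq.~(\ref{con:populationdynamics}) is the accurate one, and triangularity is all that the Hurwitz argument needs; second, your handling of the critical regime $4\kappa_nS_n=Q_n^2$, where Eq.~(\ref{con:F3}) gives only $O(1/t)$ decay of $\textbf{P}$ so that $\int_{t_0}^t\|\textbf{P}(\tau)\|\,\mathrm{d}\tau$ grows logarithmically and the Gronwall factor only polynomially, is a genuine refinement over both the paper's proof and Corollary~\ref{Xconverge} (which, like the Bellman-type results it cites, is most naturally applied when the perturbation is integrable or uniformly small), and your observation that $e^{-\lambda(t-t_0)}$ absorbs the resulting polynomial factor is exactly what is needed there. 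The one caveat worth recording is that the positivity of $L_n^{(1)}=2\,\mathrm{Re}\lim_{t\to\infty}F_n(t)$, on which the Hurwitz property rests, should be checked from the explicit limit in Theorem~\ref{Fproperty} also when $\Omega\neq\tilde{\omega}_n$; the paper does not verify this either, but under the standing assumption $4\kappa_nS_n-Q_n^2\le 0$ with $\gamma,\chi_n>0$ it holds, so your argument goes through.
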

\begin{proof}
The convergence of quantum dynamics to a Markovian regime means the convergence of $F_n(t)$ according to Remark~\ref{RemarkMarkov}. Consequently, the time-varying components in Eq.~(\ref{con:populationdynamics}) converge to zero. In this case, $\lim_{t \rightarrow \infty} \textbf{P}(t) = \textbf{0}_{(N-1)\times (N-1)} $. Thus $\Phi\left(t,t_0\right)$ is bounded because $\lim_{t \rightarrow \infty} \partial_t \Phi\left(t,t_0\right) = 0$.
\end{proof}

Specially, when $\textbf{P}(t) \equiv \textbf{0}_{(N-1)\times (N-1)}$, Eq.~(\ref{con:populationdynamics})  reduces to a linear time-invariant system and is exponentially stable because $\textbf{L}$ is a diagonal matrix whose eigenvalues are all negative. 

Generally, $\tilde{X}$ can be solved as
\begin{equation} \label{con:Xsolution}
\begin{aligned}
\tilde{X}(t) = e^{ \textbf{L} t} \tilde{X}(0) + \int_0^t e^{\textbf{L} (t-\tau)}  \textbf{P}(\tau)  \tilde{X}(\tau) \mathrm{d} \tau,
\end{aligned}
\end{equation}
where the non-Markovian interaction between the atom and the environment can influence the integral kernel in Eq.~(\ref{con:Xsolution}), and further influence the convergence rate and steady value of $\tilde{X}$. The dynamics of the equation with the format of Eq.~(\ref{con:Xsolution}) has been introduced in Ref.~\cite{bellman2008stability}. Generalized from Theorem 2 in  Ref.~\cite{bellman2008stability}~(Chapter 2, Page 36), we have the following corollary.
\begin{corollary} \label{Xconverge}
Provided that $\|  \textbf{P}(t) \| \leq c_1$ for $t\geq t_0$ in Eq.~(\ref{con:Xsolution}), where $c_1$ is determined by $\max(L_n^{(1)})$, it follows that $\lim_{t\rightarrow \infty } \tilde{X}(t) = 0$ in Eq.~(\ref{con:Xsolution}).
\end{corollary}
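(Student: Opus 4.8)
The plan is to treat Eq.~(\ref{con:Xsolution}) as the variation-of-constants form of a perturbed linear system whose unperturbed part $\textbf{L}$ is exponentially stable, and to close the argument with a Bellman--Gronwall integral inequality. Since $\textbf{L}$ is diagonal with strictly negative eigenvalues (as noted just above, in the discussion preceding this corollary), its matrix exponential satisfies
\begin{equation}
\| e^{\textbf{L} t} \| \leq e^{-\sigma t}, \qquad \sigma = \min_{n} L_n^{(1)} > 0 .
\end{equation}
This exponential gain is the structural fact that drives the whole estimate; the role of the hypothesis $\|\textbf{P}(t)\| \leq c_1$ is to control how strongly the memory term in Eq.~(\ref{con:Xsolution}) can fight against this decay.

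First I would take norms on both sides of Eq.~(\ref{con:Xsolution}) and use the bound above together with $\|\textbf{P}(\tau)\| \leq c_1$ to obtain the scalar integral inequality
\begin{equation}
\| \tilde{X}(t) \| \leq e^{-\sigma t} \| \tilde{X}(0) \| + c_1 \int_0^t e^{-\sigma (t-\tau)} \| \tilde{X}(\tau) \| \, \mathrm{d}\tau .
\end{equation}
Multiplying through by $e^{\sigma t}$ and setting $u(t) = e^{\sigma t} \| \tilde{X}(t) \|$ turns this into $u(t) \leq \| \tilde{X}(0) \| + c_1 \int_0^t u(\tau)\,\mathrm{d}\tau$, to which the Gronwall lemma applies and yields $u(t) \leq \| \tilde{X}(0) \| e^{c_1 t}$, that is,
\begin{equation}
\| \tilde{X}(t) \| \leq \| \tilde{X}(0) \| \, e^{(c_1 - \sigma) t} .
\end{equation}
Hence $\lim_{t\to\infty}\tilde{X}(t) = 0$ as soon as $c_1 < \sigma$, and the requirement that $c_1$ be ``determined by $\max_n L_n^{(1)}$'' is exactly the statement that the crude size of $\textbf{P}$ stays below this decay threshold.

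The step I expect to be the genuine obstacle is precisely this comparison of constants. The bound $c_1$ produced from the entries $\textbf{P}(n,n)=L_n^{(1)}-(F_n+F_n^*)$ and $\textbf{P}(n,n+1)=(F_{n+1}+F_{n+1}^*)-L_{n+1}^{(1)}$ scales like $\max_n L_n^{(1)}$, whereas the available decay rate is only $\sigma = \min_n L_n^{(1)}$, so the naive global Gronwall estimate need not satisfy $c_1 < \sigma$. To resolve this I would not rely on the uniform bound alone but invoke the conclusion of the preceding lemma, namely $\lim_{t\to\infty}\textbf{P}(t) = \textbf{0}$: given any $\eta$ with $0 < \eta < \sigma$ there is a time $T$ with $\|\textbf{P}(\tau)\| \leq \eta$ for $\tau \geq T$, and splitting the integral in Eq.~(\ref{con:Xsolution}) at $T$ leaves a transient head that decays like $e^{-\sigma(t-T)}$ together with a tail to which the Gronwall argument above applies at the contraction rate $\eta - \sigma < 0$. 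This localized version recovers $\tilde{X}(t) \to 0$ without demanding that the global constant $c_1$ beat $\sigma$, and it is the refinement of Theorem~2 of Ref.~\cite{bellman2008stability} that the corollary is generalizing.
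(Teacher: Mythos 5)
Your proposal is correct and is, at its core, the same argument the paper relies on: the paper gives no proof of this corollary at all, but simply invokes Theorem~2 of Chapter~2 of Bellman's book, and the proof of that cited theorem is exactly your variation-of-constants plus Gronwall estimate yielding $\|\tilde{X}(t)\|\leq K\|\tilde{X}(0)\|e^{(Kc_1-\sigma)t}$ under a smallness condition on $c_1$. Two remarks. First, your concern about the constants is well founded and is the one substantive gap in the paper's statement: Bellman's theorem requires $c_1$ to be \emph{sufficiently small} relative to the decay rate of $e^{\textbf{L}t}$, whereas the corollary only says $c_1$ is ``determined by $\max_n L_n^{(1)}$,'' which does not by itself guarantee $c_1<\sigma=\min_n L_n^{(1)}$ (nor account for the constant $K\geq 1$ in $\|e^{\textbf{L}t}\|\leq Ke^{-\sigma t}$, which is needed since $\textbf{L}$ is in fact upper bidiagonal rather than diagonal as the paper asserts). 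Your fix --- splitting the integral at a time $T$ beyond which $\|\textbf{P}(\tau)\|\leq\eta<\sigma/K$, which is legitimate because the preceding lemma gives $\textbf{P}(t)\to 0$ --- is the standard refinement (it is essentially Bellman's companion result for perturbations tending to zero) and is what actually makes the conclusion hold without an unverifiable smallness hypothesis. So your write-up is not only consistent with the paper's intended route but repairs the constant-tracking that the paper leaves implicit.
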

Corollary~\ref{Xconverge} means that when the interaction between the atom and environment converges to a Markovian regime, $\tilde{X}(t)$ will converge to zero.  Besides, Corollary~\ref{Xconverge} is further generalized to  linear time-varying system in Ref.~\cite{bellman2008stability}, and this generalization will be used in the following analysis.

\subsection{$\Delta_n \neq 0$, $ g_n \neq 0$} 
In the following, we consider the most general case by separating $X(t)$ into its real and imaginary parts. Considering that in Eq.~(\ref{eq:NlevelOperatorEquation}), $ \langle \sigma^+_{n} \sigma^-_{n}\rangle$ and $\langle a^{\dag} a\rangle$ are both real, Eq.~(\ref{SPa}) and Eq.~(\ref{SMad}) are conjugate complex values, then we rewrite the state vector in a simplified format as 
\[
\mathbb{X}(t) = \left [\mathbb{X}_1(t), \mathbb{X}_2(t),\cdots, \mathbb{X}_{N-1}(t) , a^R\right ]^{\rm T},
\]
with the dimension $3N-2$,
where $\mathbb{X}_n(t) = \left[ X_n^R[1],X_n^R[2],X_n^I[2]\right]$, $X_n^R$ and $X_n^I$, $a^R$ and $a^I$ represent the real and imaginary parts of $X_n(t)$ and $\langle a^{\dag} a\rangle$ in Eq.~(\ref{con:XntDef}) respectively, $X_n^R[1]$ represents the first element of the vector $X_n^R$, and similar for other elements in the vector $\mathbb{X}(t)$.  $ X_n^I[1] = a^I = 0$, $X_n^R[3] = X_n^R[2]$ and $X_n^I[3] = -X_n^I[2]$. Then the evolution of $\mathbb{X}(t)$ reads
\begin{equation} \label{con:TimevariedEquation2}
\begin{aligned}
\frac{\mathrm{d}}{\mathrm{d} t} \mathbb{X} &= \begin{bmatrix}
\mathbb{A}_1(t) & 0 & \cdots & 0 & \mathbb{D}_1(t)\\
0 & \mathbb{A}_2(t) & \cdots & 0 & \mathbb{D}_2(t)\\
 \vdots &\vdots &\ddots  &\vdots  &\vdots\\
 0 & 0 & \cdots & \mathbb{A}_{N-1}(t) & \mathbb{D}_{N-1}(t)\\
\mathbb{C}_1(t) &\mathbb{C}_2(t) & \cdots  & \mathbb{C}_{N-1}(t) &0
\end{bmatrix} \mathbb{X}\\
&\triangleq  \mathbb{A}(t)\mathbb{X} ,
\end{aligned}
\end{equation}
which is a real-value equation. We take the $n$-th energy level as an example, denote $\hat{g}_n =g_n \sin\left(\Delta_n t\right)$, $\breve{g}_n =g_n \cos\left(\Delta_n t\right)$,  and $\dot{\mathbb{X}}_n(t)  = \mathbb{A}_n(t)\mathbb{X}_n(t) + \mathbb{D}_n(t)a^R$, then
\begin{equation} \label{con:TVECntt}
\begin{aligned}
&\mathbb{A}_n(t)=
\begin{bmatrix}
- 2\mathcal{R}_n(t)  & 2\hat{g}_n  &2\breve{g}_n\\
-\hat{g}_n  & - \mathcal{R}_n(t)  &\mathcal{I}_n(t)-\Delta_n \\
-\breve{g}_n & \Delta_n-\mathcal{I}_n(t)  &- \mathcal{R}_n(t) \\
\end{bmatrix},
\end{aligned}
\end{equation}
\begin{equation} \label{con:TVECnt1}
\begin{aligned}
\mathbb{C}_n(t)&=\begin{bmatrix}
0 & -2\hat{g}_n & -2\breve{g}_n
\end{bmatrix},
\end{aligned}
\end{equation}
\begin{equation} \label{con:TVDCntComplex}
\begin{aligned}
\mathbb{D}_n(t)&=\begin{bmatrix}
0 & -\hat{g}_n &-\breve{g}_n
\end{bmatrix}^{\rm T},
\end{aligned}
\end{equation}
where we take $ F_n(t)\kappa_n  \triangleq \mathcal{R}_n(t) + i \mathcal{I}_n(t)$ for simplification.

Upon this, when the interaction between the atom and the environment becomes asymptotically Markovian, we denote $\lim_{t\rightarrow \infty} \mathcal{R}_n(t) = \bar{\mathcal{R}}_n$ and $\lim_{t\rightarrow \infty} \mathcal{I}_n(t) = \bar{\mathcal{I}}_n$. Then Eq.~(\ref{con:TVECntt}) becomes 
\begin{equation} \label{con:TVECntCombine}
\begin{aligned}
\mathbb{A}_n(t)&=
\begin{bmatrix}
- 2\bar{\mathcal{R}}_n  & 2\hat{g}_n  &2\breve{g}_n\\
-\hat{g}_n  & - \bar{\mathcal{R}}_n &\bar{\mathcal{I}}_n-\Delta_n \\
-\breve{g}_n  & \Delta_n-\bar{\mathcal{I}}_n  &- \bar{\mathcal{R}}_n \\
\end{bmatrix} +\begin{bmatrix}
2\bar{\mathcal{R}}_n  - 2\mathcal{R}_n(t)   & 0  &0\\
0  &  \bar{\mathcal{R}}_n  -\mathcal{R}_n(t)   &\mathcal{I}_n(t)-\bar{\mathcal{I}}_n \\
0  &\bar{\mathcal{I}}_n-\mathcal{I}_n(t)  &\bar{\mathcal{R}}_n -\mathcal{R}_n(t)  \\
\end{bmatrix}\\
&=\bar{\mathbb{A}}_n(t) + \tilde{\mathbb{A}}_n(t),
\end{aligned}
\end{equation}
where $\bar{\mathbb{A}}_n$ represents the first matrix in Eq.~(\ref{con:TVECntCombine}) and $\tilde{\mathbb{A}}_n$ represents the second matrix. Obviously, $\bar{\mathbb{A}}_n(t)$ is periodic with $\bar{\mathbb{A}}_n(t) = \bar{\mathbb{A}}_n\left(t+2\pi/\Delta_n\right)$, and the norm of $ \tilde{\mathbb{A}}_n(t)$ finally converges to zero.

As a combination of the conclusion in Ref.~\cite{vrabel2019note} on the relationship between $\Pi^+(t)$ (see definition in \cite{vrabel2019note} or Appendix~\ref{sec:NonlinearAppendix}), the stability of a periodic linear time-varying system, and the generalization of Corollary~\ref{Xconverge} to the linear time-varying system in Ref.~\cite{bellman2008stability}, we can derive the following corollary for the linear time-varying system in Eq.~(\ref{con:TimevariedEquation2}) with the time-varying matrix in Eq.~(\ref{con:TVECntCombine}).
\begin{corollary} \label{TimevaryPeriodCombine}
For the linear time-varying system (\ref{con:TimevariedEquation2}) with the matrix represented as a combination of time-varying periodic and time-varying converging matrices, as given in Eq.~(\ref{con:TVECntCombine}), the solutions approach zero when\\
(a) $\bar{\Pi}^+(t_0+T)  = \int_{t_0}^{t_0+T} \mu \left[\mathbb{A}(\tau)\right] \mathrm{d}\tau< 0 $,\\
(b) $\int_0^{\infty} \| \tilde{\mathbb{A}}_n(t) \| \mathrm{d} t < \infty $ for arbitrary $n$. %P41
\end{corollary}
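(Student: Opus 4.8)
The plan is to read Eq.~(\ref{con:TimevariedEquation2}) with the split $\mathbb{A}(t)=\bar{\mathbb{A}}(t)+\tilde{\mathbb{A}}(t)$ of Eq.~(\ref{con:TVECntCombine}) as a vanishing perturbation of its periodic part, and to combine a logarithmic-norm contraction estimate (condition~(a)) with a Bellman--Gronwall perturbation argument (condition~(b)). First I would isolate the periodic subsystem $\dot{\mathbb{Y}}=\bar{\mathbb{A}}(t)\mathbb{Y}$, whose coefficient matrix is periodic by Eq.~(\ref{con:TVECntCombine}). Using the standard matrix-measure inequality $\|\mathbb{Y}(t)\|\leq\|\mathbb{Y}(t_0)\|\exp\left(\int_{t_0}^t\mu[\bar{\mathbb{A}}(\tau)]\,\mathrm{d}\tau\right)$, condition~(a), i.e.\ $\bar{\Pi}^+(t_0+T)<0$, forces a strict contraction of the solution norm over one period $T$, namely $\|\mathbb{Y}(t_0+T)\|\leq\rho\|\mathbb{Y}(t_0)\|$ with $\rho=e^{\bar{\Pi}^+(t_0+T)}<1$. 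Iterating over successive periods and invoking the cocycle property of the transition matrix $\bar{\Phi}(t,s)$, together with the stability result of Ref.~\cite{vrabel2019note}, would yield a uniform exponential bound $\|\bar{\Phi}(t,s)\|\leq Ke^{-\lambda(t-s)}$ for some $K,\lambda>0$; that is, the periodic part is uniformly exponentially stable.

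Next I would reintroduce the converging part $\tilde{\mathbb{A}}(t)$ through the variation-of-constants representation, paralleling Eq.~(\ref{con:Xsolution}),
\[
\mathbb{X}(t)=\bar{\Phi}(t,t_0)\mathbb{X}(t_0)+\int_{t_0}^t\bar{\Phi}(t,\tau)\,\tilde{\mathbb{A}}(\tau)\,\mathbb{X}(\tau)\,\mathrm{d}\tau.
\]
Substituting the exponential bound and setting $v(t)=e^{\lambda t}\|\mathbb{X}(t)\|$ reduces the estimate to $v(t)\leq Ke^{\lambda t_0}\|\mathbb{X}(t_0)\|+K\int_{t_0}^t\|\tilde{\mathbb{A}}(\tau)\|\,v(\tau)\,\mathrm{d}\tau$. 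The Gronwall inequality --- precisely the generalization of Corollary~\ref{Xconverge} to linear time-varying systems from Ref.~\cite{bellman2008stability} --- then bounds $v(t)$ by $Ke^{\lambda t_0}\|\mathbb{X}(t_0)\|\exp\left(K\int_0^\infty\|\tilde{\mathbb{A}}(\tau)\|\,\mathrm{d}\tau\right)$, which is finite exactly because of condition~(b). Hence $\|\mathbb{X}(t)\|\leq Me^{-\lambda t}$ with $M$ finite, so $\mathbb{X}(t)\to0$ as $t\to\infty$.

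The hard part will be two delicate points. First, converting the negative period-average of the matrix measure in condition~(a) into a genuine uniform exponential bound on $\bar{\Phi}(t,s)$ requires $\bar{\mathbb{A}}(t)$ to be truly periodic with a single period $T$; for the multi-level system the block $\bar{\mathbb{A}}_n$ carries its own period $2\pi/\Delta_n$, so unless the detunings $\Delta_n$ are commensurate the aggregate matrix is only quasi-periodic, and one must either restrict to commensurate $\Delta_n$ admitting a common $T$, or replace the Floquet/monodromy argument by a direct sub-multiplicative estimate of $\mu$ over sliding windows. Second, I would reconcile condition~(a) as written with $\mu[\mathbb{A}(\tau)]$ against the exponential stability derived from the periodic part $\bar{\mathbb{A}}$: the subadditivity $\mu[\bar{\mathbb{A}}+\tilde{\mathbb{A}}]\leq\mu[\bar{\mathbb{A}}]+\|\tilde{\mathbb{A}}\|$ together with condition~(b) makes the two period integrals differ by a finite, vanishing amount, so a negative period-average for one transfers to the other for $t_0$ large enough. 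Making this transfer rigorous, and thereby justifying that the contraction survives the perturbation, is the technical heart of the argument.
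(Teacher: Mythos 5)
Your proposal follows essentially the same route as the paper's proof: condition (a) yields uniform exponential stability of the periodic part via the logarithmic-norm result of Ref.~\cite{vrabel2019note}, and condition (b) then controls the vanishing perturbation $\tilde{\mathbb{A}}_n(t)$ through the variation-of-constants/Gronwall argument of Ref.~\cite{bellman2008stability} (Chapter 2, Theorem 4), exactly the two-step structure the paper uses. The two delicate points you flag---commensurability of the block periods $2\pi/\Delta_n$ and the discrepancy between $\mu[\mathbb{A}]$ in (a) and $\mu[\bar{\mathbb{A}}]$ needed for the periodic subsystem---are genuine, but the paper's own proof simply cites the two references and leaves them unaddressed as well.
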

\begin{proof}
See Appendix~\ref{sec:linearAppendix}.    
\end{proof}

\subsection{With external drives}
When there is a drive field with the amplitude $\mathcal{E}$ applied upon the cavity, the Hamiltonian $\bar{H}_I$ in Eq.~(\ref{con:TwoNlevelatomHam}) should be replaced by
\begin{equation} \label{con:TwoNlevelatomHamDrive}
\begin{aligned}
H' &=  \bar{H}_I  -i \mathcal{E} \left(a-a^{\dag}\right).
\end{aligned}
\end{equation}

We denote $R_a = \left \langle a+a^{\dag} \right \rangle$, $I_a = i \left\langle a-a^{\dag}\right\rangle$ with real values, $s_n^R$ and $s_n^I$ represent the real and imaginary parts of $\left\langle\sigma^-_n\right \rangle$, respectively. Then generalized from the state vector $\mathbb{X}(t)$ in Eq.~(\ref{con:TimevariedEquation2}), we define another real-valued vector $\tilde{\mathbb{X}}$ as
\[
\tilde{\mathbb{X}}(t) = \left [\tilde{\mathbb{X}}_1(t), \tilde{\mathbb{X}}_2(t),\cdots, \tilde{\mathbb{X}}_{N-1}(t) , a^R, R_a, I_a \right ]^{\rm T},
\]
where $\tilde{\mathbb{X}}_n(t)$ is generalized from $\mathbb{X}_n(t)$ in Eq.~(\ref{con:TimevariedEquation2}) as
\[
\tilde{\mathbb{X}}_n(t) = \left[ X_n^R[1],X_n^R[2],X_n^I[2],s_n^R,s_n^I\right],
\]
and $a^R$ has been defined in Eq.~(\ref{con:TimevariedEquation2}). Then the evolution of $\tilde{\mathbb{X}}(t)$ reads 
\begin{equation} \label{con:TimevariedEquationDrive}
\begin{aligned}
\frac{\mathrm{d}}{\mathrm{d} t} \tilde{\mathbb{X}} &=  \mathbb{B}(t)\tilde{\mathbb{X}} + 2 \mathcal{E} \begin{bmatrix}  0, 0 , \cdots,
1, 0\end{bmatrix}^{\rm T},
\end{aligned}
\end{equation}
where 
\begin{equation}
\begin{aligned}
\mathbb{B}(t) = \begin{bmatrix}
\mathbb{B}_1(t) & 0 & \cdots & 0 & \tilde{\mathbb{D}}_1(t)\\
0 & \mathbb{B}_2(t) & \cdots & 0 & \tilde{\mathbb{D}}_2(t)\\
 \vdots &\vdots &\ddots  &\vdots  &\vdots\\
 0 & 0 & \cdots & \mathbb{B}_{N-1}(t) & \tilde{\mathbb{D}}_{N-1}(t)\\
\tilde{\mathbb{C}}_1(t) &\tilde{\mathbb{C}}_2(t) & \cdots  & \tilde{\mathbb{C}}_{N-1}(t) &\tilde{\mathbb{P}}
\end{bmatrix},
\end{aligned}
\end{equation}

\begin{equation} \label{con:TVEBntdrive}
\begin{aligned}
&\mathbb{B}_n=\begin{bmatrix}
- 2\mathcal{R}_n   & 2\hat{g}_n  &2\breve{g}_n& 0 & 0\\
-\hat{g}_n  & - \mathcal{R}_n   &\mathcal{I}_n-\Delta_n &\mathcal{E} &0 \\
-\breve{g}_n  & \Delta_n-\mathcal{I}_n  &- \mathcal{R}_n &0 &-\mathcal{E}\\
0  & 0  &0 &- \mathcal{R}_n  &\mathcal{I}_n\\
0  & 0  &0 &- \mathcal{I}_n &- \mathcal{R}_n 
\end{bmatrix},
\end{aligned}
\end{equation}
\begin{equation} \label{con:TVECntdrive}
\begin{aligned}
\tilde{\mathbb{C}}_n&=\begin{bmatrix}
0 & -2\hat{g}_n & -2\breve{g}_n &0 &0\\
0 & 0 &0 &-2\hat{g}_n  &2\breve{g}_n \\
0 & 0 &0 &2\breve{g}_n  &2\hat{g}_n 
\end{bmatrix},
\end{aligned}
\end{equation}
\begin{equation} \label{con:TVEDntdrive}
\begin{aligned}
\tilde{\mathbb{D}}_n&=\begin{bmatrix}
0 &  0 &0\\ 
-\hat{g}_n & 0 & 0 \\
-\breve{g}_n &0 & 0 \\
0 & \frac{\hat{g}_n}{2} &- \frac{\breve{g}_n}{2} \\
0 & -\frac{\breve{g}_n}{2} & \frac{\hat{g}_n}{2} 
\end{bmatrix},
\end{aligned}
\end{equation}
and
\begin{equation} \label{con:Pt3M3}
\begin{aligned}
\tilde{\mathbb{P}}&=\begin{bmatrix}
0 &  \mathcal{E}  &0\\ 
0 & 0 & 0 \\
0 &0 & 0 
\end{bmatrix}.
\end{aligned}
\end{equation}

Above all, when there are no external drives, namely $\mathcal{E} = 0$, the initially excited atom can decay to the ground state and the cavity will finally be empty, which can also be interpreted with the quantum system's exponential stability similarly as in Ref.~\cite{ding2023quantumNlevel} by regarding the components outside of the cavity as an environment. However, when $\mathcal{E} > 0$, the exponential stability explaining the atom and cavity's decaying can be destructed, especially when $\mathcal{E}$ is large compared with the non-Markovian exchanging rate between the atom and the environment.

\begin{figure}[h]
\centerline{\includegraphics[width=1\columnwidth]{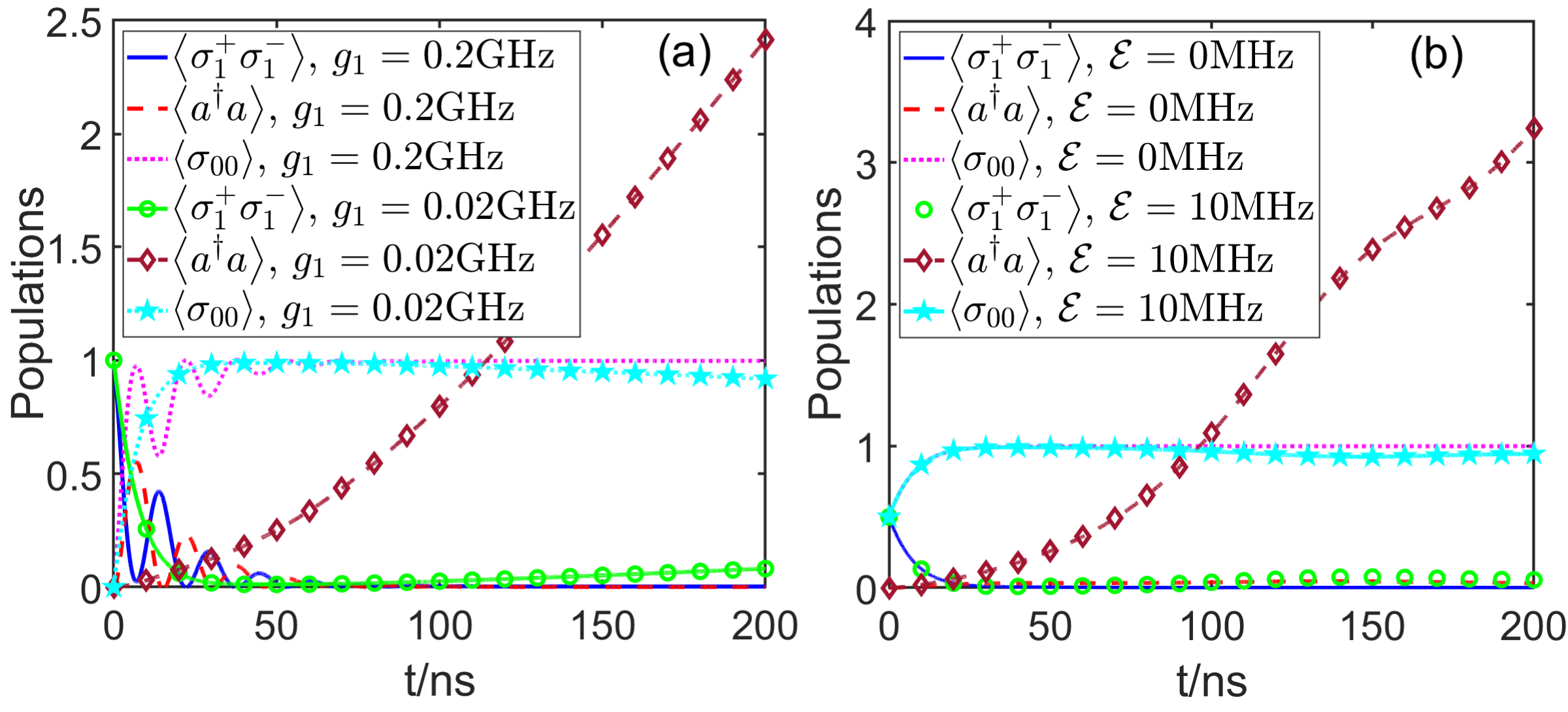}}
\caption{Non-Markovian dynamics of the quantum system with one two-level atom coupled to the cavity with applied drives.}
	\label{fig:TwolevelDrive}
\end{figure}
We take the two-level atom as an example to illustrate how the dynamics can be influenced by the non-Markovian environment, coupling strength between the atom and cavity, and the external drive applied on the cavity, which are compared in Fig.~\ref{fig:TwolevelDrive}. In all the simulations, $\omega_1 = 37.7$GHz, $\kappa_1 = 0.31$GHz, the time step for simulations and the non-Markovian environmental parameter settings are the same as in Fig.~\ref{fig:threelevel}. In Fig.~\ref{fig:TwolevelDrive}(a), the cavity is resonant with the atom, and we compared how the dynamics is influenced by the coupling strength $g_1$ when $\mathcal{E} = 10$MHz. It can be seen that when the cavity is driven, there can be multi-photon states in the cavity if the coupling strength between the atom and cavity is weak. In Fig.~\ref{fig:TwolevelDrive}(b), the atom is initially in a superposition state, $g_1 = 20$MHz, $\Delta_1 = -20$MHz, there can also be multi-photon states when the cavity is driven.

\section{Quantum measurement feedback control in the non-Markovian cavity-QED system}  \label{Sec:OneCavityMF}
For general cavity-QED systems, the decaying process of both the atom and cavity to the environment can be regarded as Markovian or non-Markovian. The non-Markovian interaction between the cavity and environment can be generalized from Appendixes~\ref{sec:NonMarkovParemeters} and \ref{sec:NonMarkovMaster} by replacing the atomic operator with the cavity operator. We further clarify the quantum measurement feedback control in this section based on the above generalization.

When the quantum system is measured via the cavity output as in Fig.~\ref{fig:NonMarkovian}, the feedback control can be designed according to the measurement result~\cite{zhang2017quantum,wiseman1994quantum}
\begin{equation} \label{eq:measureFeedback}
\begin{aligned}
I_c(t) & = \sqrt{2} \langle x \rangle + \frac{\xi(t)}{\sqrt{\eta}},
\end{aligned}
\end{equation}
where $x=\left( a+a^{\dag}\right)/\sqrt{2}$,  $\xi(t)$ is the measurement-induced random term satisfies that $\langle \xi(t) \rangle = 0$ and $\langle \xi(t) \xi(t') \rangle = \delta(t-t')$, and $\eta$ is the measurement detection efficiency. The feedback Hamiltonian reads
\begin{equation} \label{eq:FeedbackHam}
\begin{aligned}
H_{\rm fb}(t) &= I_c(t)G,
\end{aligned}
\end{equation}
where $G = G^{\dag}$ is the feedback operator. We assume that the detection efficiency of the measurement apparatus is ideal as $\eta = 1$, and see Ref.~\cite{van2005modelling} for the case $\eta <1$ with a similar format.

Generalized from Eq.~(\ref{con:SSENonMarUpdate0}) and the Markovian stochastic master equation (SME) with measurement feedback control in Refs.~\cite{ahn2002continuous,ticozzi2008TAC}, 
the measurement feedback based SME when there are non-Markovian interactions between the quantum system and environment reads~\cite{wiseman1994quantum}
\begin{equation} \label{con:NMFeedback}
\begin{aligned}
\mathrm{d}\rho =&-i\left[\bar{H}+H_D,\rho\right]\mathrm{d} t +\sum_{n=1}^{N-1}\left[\left( F_n + F_n^*\right)   L_n  \rho L_n^{\dag} -   \left( F_n  L_n^{\dag}L_n\rho + F_n^* \rho L_n^{\dag}L_n\right)\right]\mathrm{d} t \\
&+\kappa\left( F_a + F_a^*\right)   a  \rho a^{\dag}\mathrm{d} t  - \kappa\left( F_a  a^{\dag}a\rho + F_a^* \rho a^{\dag}a\right)  \mathrm{d} t \\
& - i g_f  \left [G, a\rho +  \rho a^{\dag}\right ] \mathrm{d}t -\frac{g_f^2}{2} \left [ G , \left [G, \rho\right ] \right ]\mathrm{d} t +\mathcal{H}[a]\rho \mathrm{d}W  - ig_f   \left [G, \rho\right ] \mathrm{d} W,
\end{aligned}
\end{equation}
where $H_D$ represents external drives applied upon the cavity, $F_a$  is for the non-Markovian interaction between the cavity and the environment, and is governed by the similar dynamics as $F_n$ in Eq.~(\ref{con:FtEquation}) in Appendix~\ref{sec:NonMarkovParemeters} by respectively replacing the parameters $\omega_n$, $\kappa_n$ and $\chi_n$ with $\omega_c$, $\kappa_c$ for the coupling strength between the cavity and environment, and $\chi_a$ in analog to $\chi_n$ for the environment parameter setting, $\kappa$ represents a constant decay rate of the cavity to the environment,  $\mathcal{H}[O]\rho = O \rho + \rho O^{\dag} -  \rm{Tr}\left [  \right ( O + O^{\dag}\left )\rho \right ]\rho$ for an arbitrary operator $O$, $g_f$ is the feedback strength, $\mathrm{d}W(t) = \xi(t)\mathrm{d}t$ is for a Wiener process representing the Homodyne detection noise in the quantum measurement, $E\left[ \mathrm{d}W(t)^2 \right] = \mathrm{d}t$, $\xi(t)$ can be approximated with the white noise~\cite{gardiner1985handbook}, and the meaning of the other components is the same as Eq.~(\ref{con:SSENonMarUpdate0}).

In the following, we firstly clarify the circumstance that the feedback operator $G$ is a cavity operator, then illustrate the circumstance that $G$ is an atomic operator with numerical simulations.

\subsection{Feedback control with cavity operators} \label{Sec:cavityfeedback}
In this subsection, we assume that the interaction between the cavity and environment is Markovian with the decay rate $\kappa$. The feedback operator can be designed as~\cite{Kurt1999feedback}
\begin{equation} \label{eq:directFeedback}
\begin{aligned}
G &= \beta_x x + \beta_p p  =\frac{\beta_x - i \beta_p}{\sqrt{2}} a + \frac{\beta_x + i\beta_p}{\sqrt{2}} a^{\dag},
\end{aligned}
\end{equation}
where the position operator $x$ is given by Eq.~(\ref{eq:measureFeedback}) and the momentum operator $ p = i\left( a^{\dag} - a\right)/\sqrt{2}$ with $xp - px = i$. Besides, Eq.~(\ref{eq:directFeedback}) is in the rotating frame with respect to the feedback driving frequency $\omega_d$ from the original format $\tilde{G} =\left[ \left(\beta_x - i \beta_p\right) ae^{i\omega_d t} + \left(\beta_x + i\beta_p\right) a^{\dag} e^{-i\omega_d t}\right]/\sqrt{2}$~\cite{wang2017PRA}. Denote $\Delta = \omega_c - \omega_d$, then the dynamics of the mean-value of the operators $x$ and $p$ can be derived based on Eq.~(\ref{con:NMFeedback}) without $H_D$ as
\begin{subequations}  \label{eq:xpdynamics}
\begin{align}
&\langle \dot{x} \rangle = \Delta \langle p \rangle -\frac{\kappa}{2} \langle x \rangle  - i\sum_n \frac{\tilde{g}_n^* }{\sqrt{2}} \left \langle \sigma^-_n\right\rangle + i\sum_n \frac{\tilde{g}_n}{\sqrt{2}} \left \langle \sigma^+_n\right\rangle  + \sqrt{2}g_f \beta_p \langle x \rangle +\sqrt{2} \left(V_x -\frac{1}{2}\right)  \xi(t)  + g_f \beta_p  \xi(t), \label{xdot}\\
&\langle \dot{p} \rangle= -\Delta \langle x \rangle -\frac{\kappa}{2} \langle p \rangle  - \sum_n \frac{\tilde{g}_n^* }{\sqrt{2}} \left \langle \sigma^-_n\right\rangle - \sum_n \frac{\tilde{g}_n}{\sqrt{2}} \left \langle \sigma^+_n\right\rangle  - \sqrt{2}g_f  \beta_x\langle x\rangle  +\sqrt{2} V_{xp} \xi(t) -g_f \beta_x \xi(t), \label{pdot} \\
&\langle \dot{\sigma}^-_n \rangle = -i\tilde{g}_n   \langle a \rangle - F_n(t)\kappa_n\langle \sigma^-_n \rangle,\\
&\langle \dot{a} \rangle =-i\Delta\langle a \rangle -i\sum_n \tilde{g}_n^*  \langle \sigma^-_n \rangle - \frac{\kappa}{2}  \langle a \rangle  - ig_f \left ( \beta_x + i\beta_p\right) \langle x \rangle  \notag\\
&~~~~~~~-ig_f \left(\beta_x + i\beta_p \right)\xi(t)/\sqrt{2} +\left[ \left \langle \left(a+a^{\dag}\right)a \right \rangle - \left\langle a + a^{\dag}\right\rangle \left\langle a \right \rangle\right]\xi(t),
\end{align}
\end{subequations}%
where $V_x = \left \langle x^2\right \rangle  - \langle x \rangle^2 $ is the variance of the  position operator, $V_p$ is that for the momentum operator, and $V_{xp} = \left ( \langle xp \rangle +  \langle px \rangle\right)/2 - \langle x \rangle\langle p \rangle $ is the covariance between position and momentum~\cite{Kurt1999feedback}. Then based on the stochastic components in Eqs.~(\ref{xdot}) and (\ref{pdot}), we have the following proposition. 
\begin{mypro} \label{NoiseCancel}
The feedback with the cavity operator can cancel the noise when $g_f \beta_p = -\sqrt{2}\left( V_x - 1/2 \right)$ and $g_f\beta_x =  \sqrt{2} V_{xp}$.
\end{mypro}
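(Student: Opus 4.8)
The plan is to read off the coefficients of the noise term $\xi(t)$ in the coupled equations~(\ref{xdot}) and~(\ref{pdot}) and impose that they vanish, since ``cancelling the noise'' means precisely that the stochastic forcing on $\langle x\rangle$ and $\langle p\rangle$ is annihilated. First I would collect, in Eq.~(\ref{xdot}), all terms multiplying $\xi(t)$: these are $\sqrt{2}\left(V_x-\tfrac12\right)\xi(t)$ coming from the measurement back-action $\mathcal{H}[a]\rho\,\mathrm{d}W$ and $g_f\beta_p\,\xi(t)$ coming from the feedback term $-ig_f[G,\rho]\,\mathrm{d}W$. Setting their sum to zero gives $g_f\beta_p=-\sqrt{2}\left(V_x-\tfrac12\right)$, which is the first asserted identity. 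Then I would do the same in Eq.~(\ref{pdot}), where the $\xi(t)$ coefficient is $\sqrt{2}V_{xp}$ from the back-action together with $-g_f\beta_x$ from feedback; setting the sum to zero yields $g_f\beta_x=\sqrt{2}V_{xp}$, the second identity.

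The only genuine content beyond bookkeeping is justifying that these two noise coefficients are the complete stochastic contributions to $\langle\dot x\rangle$ and $\langle\dot p\rangle$ under the Ito rules. I would argue that $\langle x\rangle$ and $\langle p\rangle$ are linear functionals of $\rho$, so applying $\mathrm{Tr}[\,\cdot\,x]$ and $\mathrm{Tr}[\,\cdot\,p]$ to the stochastic master equation~(\ref{con:NMFeedback}) isolates exactly the $\mathrm{d}W$ terms $\mathrm{Tr}[(\mathcal{H}[a]\rho)\,x]\,\mathrm{d}W$ and $-ig_f\,\mathrm{Tr}[[G,\rho]\,x]\,\mathrm{d}W$, and likewise for $p$. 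Evaluating these traces with $G=\beta_x x+\beta_p p$, the canonical commutator $xp-px=i$, and the definitions of $V_x$ and $V_{xp}$ reproduces precisely the coefficients already displayed in~(\ref{xdot}) and~(\ref{pdot}), so no additional stochastic terms are hiding. Since the deterministic ($\mathrm{d}t$) part of the dynamics is unaffected by this choice of $\beta_x,\beta_p$, cancelling the $\xi(t)$ coefficients leaves genuinely noise-free mean-value equations for $\langle x\rangle$ and $\langle p\rangle$.

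The main obstacle I anticipate is not the algebra but the interpretive point that these relations are time-dependent constraints: $V_x$ and $V_{xp}$ are themselves dynamical quantities evolving under~(\ref{con:NMFeedback}), so the identities $g_f\beta_p=-\sqrt{2}(V_x-\tfrac12)$ and $g_f\beta_x=\sqrt{2}V_{xp}$ hold instantaneously and, with constant feedback gains $\beta_x,\beta_p$, can only be satisfied exactly at a steady state where $V_x$ and $V_{xp}$ have settled. I would therefore frame the statement as the algebraic condition under which the $\xi(t)$ forcing on $\langle x\rangle$ and $\langle p\rangle$ vanishes, and remark that in the non-Markovian transient one realises it asymptotically once the variances converge. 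Given the propositions already established on convergence of the nonlinear parameter dynamics, this is consistent with the steady-state picture emphasised throughout the paper, so the proof itself reduces to matching the two pairs of noise coefficients.
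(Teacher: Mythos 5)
Your proposal is correct and follows exactly the paper's own (one-line) argument: the paper likewise proves the proposition by reading off the $\xi(t)$ coefficients in Eqs.~(\ref{xdot}) and (\ref{pdot}) and setting them to zero. Your additional discussion of why no further stochastic terms arise, and of the time-dependence of $V_x$ and $V_{xp}$, goes beyond what the paper records but is consistent with it.
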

\begin{proof}
The proposition can be easily derived by the noise components in Eqs.~(\ref{xdot}) and (\ref{pdot}), thus more details are omitted.
\end{proof}

Besides, $V_x$ and $V_{xp}$ are governed by the following nonlinear equation~\cite{jacobs2006straightforward,Kurt1999feedback,genoni2016conditional,zhang2010transition}
\begin{subequations} \label{eq:VarianceEp}
\begin{numcases}{}
\dot{V}_x  =2 \Delta V_{xp} - \kappa V_x' + 2\sqrt{2}g_f\beta_p V_x' +g_f^2 \beta_p^2  - \left ( \sqrt{2}V_x'  + g_f \beta_p \right)^2,  \label{Vxdot}\\
\dot{V}_{xp} = \Delta V_p - \Delta V_x - \kappa V_{xp} + \sqrt{2}g_f \beta_p V_{xp} - \sqrt{2} g_f \beta_x V_x  + \frac{\sqrt{2}}{2} g_f\beta_x -g_f^2 \beta_x\beta_p  -\left ( \sqrt{2}V_{xp}  - g_f \beta_x \right)\left ( \sqrt{2}V_x'  + g_f \beta_p \right),\label{Vxpdot}\\
 \dot{V}_p = -2\Delta V_{xp} - \kappa V_P + \frac{\kappa}{2} - 2\sqrt{2}g_f \beta_x V_{xp} + g_f^2 \beta_x^2    - \left ( \sqrt{2}V_{xp}  - g_f \beta_x \right)^2,\label{Vpdot}\
\end{numcases}
\end{subequations}
where we denote $V_x' = V_x - 1/2$, the last component of Eqs.~(\ref{Vxdot}), (\ref{Vxpdot}), and (\ref{Vpdot}) comes from the stochastic components in $\left ( \mathrm{d} \langle x \rangle  \right)^2$, $ \mathrm{d} \langle x \rangle \mathrm{d} \langle p \rangle,$ and  $\left ( \mathrm{d} \langle p \rangle  \right)^2$, respectively.

\begin{remark}
Proposition~\ref{NoiseCancel} as well as Eq.~(\ref{eq:VarianceEp}) reveals that the final steady states and the covariance can be modulated by choosing the measurement operator in Eq.~(\ref{eq:directFeedback}) with tunable $\beta_x$, $\beta_p$, and the feedback control field.
\end{remark}

For the open loop control without feedback, the variances in Eq.~(\ref{eq:VarianceEp}) converge to the steady values as $V_{x}(\infty)=V_{p}(\infty) = 0.5$ and $V_{xp}(\infty) = 0$, similar as in Ref.~\cite{zhang2010transition}.

\subsection{Feedback control with atomic operators}

The feedback operator for the $N$-level system can be represented as~\cite{wiseman2009quantum}
\begin{equation} \label{con:OneExciationVector}
\begin{aligned}
G = \sum_{i\neq j } G_{ij} = \sum_{i\neq j } \left (|i\rangle \langle j| + |j\rangle \langle i| \right),
\end{aligned}
\end{equation}
where $1 \leq i,j \leq N-1$.  In the following, we take the two-level atom case as an example.

When $N=2$, we choose $G = |0\rangle \langle 1| + |1\rangle \langle 0|$, and assume initially the atom is excited. The system's states can be evaluated with the vector of the operators $S = \left [ \langle a \rangle, \langle \sigma_- \rangle , \langle \sigma_z \rangle \right]^{\rm T} \triangleq \left [ \bar{\alpha}, \bar{s},  \bar{w} \right]^{\rm T}$, where $\sigma_- = |0\rangle \langle 1|$ and $\sigma_z = |1\rangle \langle 1| - |0\rangle \langle 0|$. When the external drive is $H_D = -i \mathcal{E} \left(a-a^{\dag}\right)$, the dynamics of the cavity or atomic operator reads
\begin{subequations} \label{eq:twolevelNumEquation}
\begin{numcases}{}
 \dot{\bar{\alpha}}=  -i \tilde{g}_1^*\bar{s} - F_a(t)\kappa \bar{\alpha}/2+\mathcal{E},\label{higherorderA}\\
 \dot{\bar{s} } =  i\tilde{g}_1  \bar{\alpha}\bar{w} - F_1(t) \kappa_1\bar{s} + i g_f \left (\bar{\alpha} + \bar{\alpha}^*  \right)  \bar{w} -g_f^2  \left(\bar{s} - \bar{s}^* \right) +ig_f  \bar{w} \xi(t),\\
 \dot{\bar{w}} = -2i\tilde{g}_1 \bar{s}^*\bar{\alpha} +2i \tilde{g}_1^*  \bar{s}\bar{\alpha} ^*  - \left ( F_1(t) + F_1^*(t)\right) \kappa_1 \left( \bar{w} +1\right)  - 2i g_f \left (\bar{s}^*  - \bar{s} \right)\left(\bar{\alpha} + \bar{\alpha}^* \right) -2g_f^2 \bar{w}   -2ig_f \left ( \bar{s}^*- \bar{s} \right ) \xi(t),
\end{numcases}
\end{subequations}
with the initially normalization condition $\bar{w}^2 + 4 \left | \bar{s} \right |^2 =1$~\cite{JCPRA}, and the stochastic component with higher order amplitude is omitted in Eq.~(\ref{higherorderA}). We consider the simplified steady states satisfying that $\dot{\bar{\alpha}} = \dot{\bar{s}} =\dot{\bar{w}} =0$ after averaging the measurement noise $\xi(t) = 0$, and take $\Delta_1 = 0$ in $\tilde{g}_1$ for simplification~\cite{zhang2010protecting,ticozzi2012stabilization}.

\begin{figure}[h]
\centerline{\includegraphics[width=1\columnwidth]{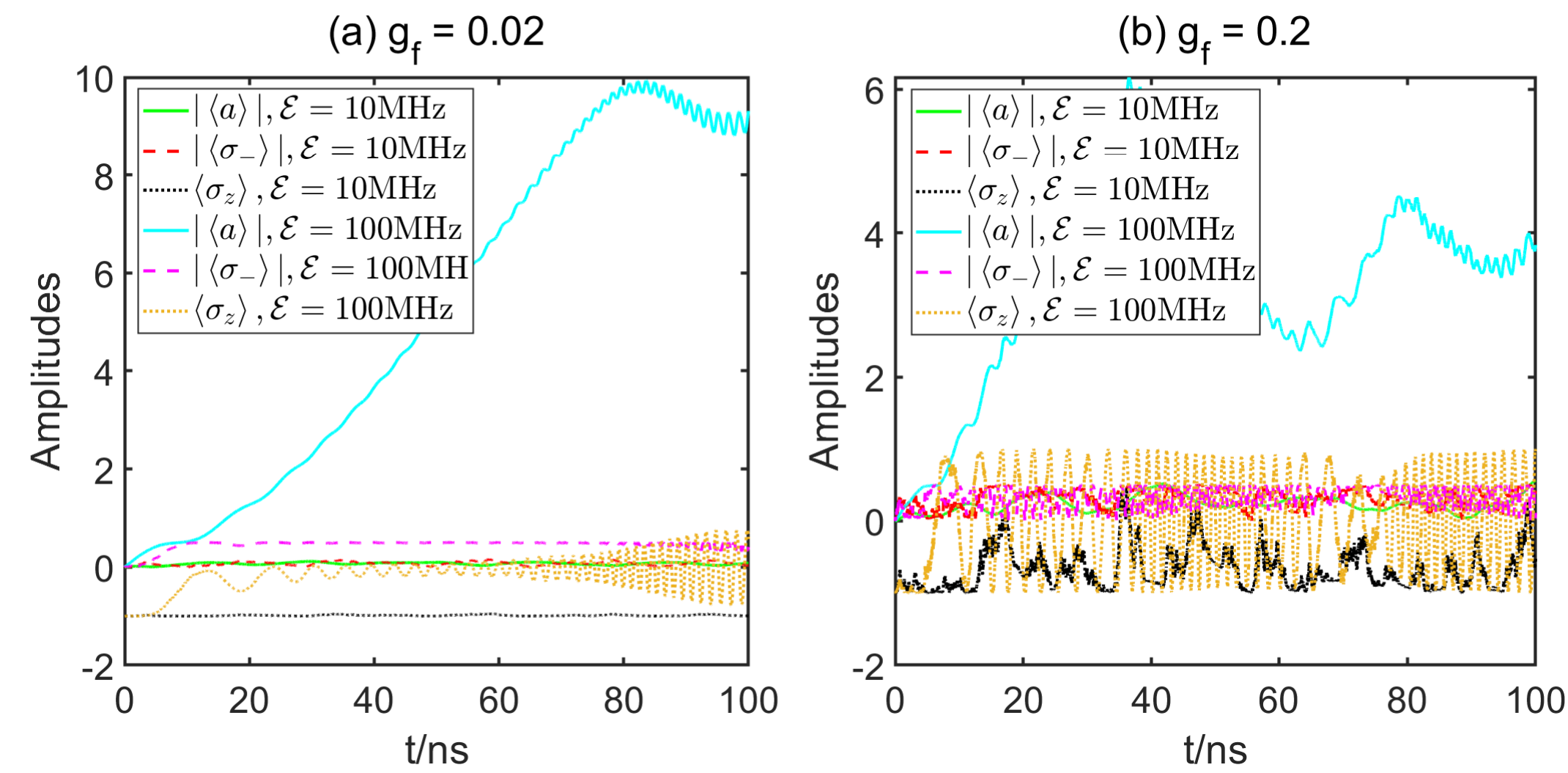}}
\caption{Measurement feedback control for one two-level atom in the cavity when both the atom and cavity interact with a non-Markovian environment.}
	\label{fig:OneatomFB}
\end{figure}

The performance for different non-Markovian parameter settings is compared in Fig.~\ref{fig:OneatomFB}, where we take $\kappa_1 = 1$GHz, $\kappa = 0.2$GHz, $\gamma = 2$GHz, $\chi = 1$, $\omega_a = 37.7$GHz, $\Omega = 32.7$GHz, $\tilde{g}_1 = 200$MHz and $\mathrm{d} t = 0.01$ns. It can be seen that the measurement feedback control can enhance the probability that the atom is excited, especially when the feedback strength and drive field are strong.

\section{Quantum control for the multiple Jaynes-Cummings model} \label{Sec:MultiCavity}
Based on the above analysis on the non-Markovian quantum dynamics with one cavity, in this section, we further study the non-Markovian dynamics that atoms are in an array of coupled cavities. When there are overall $M$ nearest neighbor coupled cavities with one two-level atom in each cavity, the free Hamiltonian for the atoms and cavities reads
\begin{equation} \label{con:McavityFree}
\begin{aligned}
H_M& =\sum_{m=1}^{M} \omega_c^{(m)} a^{\dag}_ma_m + \sum_{m=1}^{M} \omega_a^{(m)}\sigma_+^{(m)} \sigma_-^{(m)}  ,
\end{aligned}
\end{equation}
where $a_m$ ($a^{\dag}_m$) is the annihilation (creation) operator for the $m$-th cavity with the resonant frequency $\omega_c^{(m)}$, $\sigma_-^{(m)}$ ($\sigma_+^{(m)}$) represents the lowering (raising) operator for the two-level atom with frequency $\omega_a^{(m)}$ in the $m$-th cavity, and the detunings $\delta_m =  \omega_a^{(m)}- \omega_c^{(m)}$.

\begin{assumption} \label{AssumeMultiplyJC}
Assume the resonant frequencies and decay rates of the cavities are identical as $\omega_c^{(1)} = \cdots=\omega_c^{(M)}= \omega_c$, $\kappa_1 = \cdots = \kappa_M = \kappa$, while the resonant frequency of atoms $\omega_a^{(m)}$ in different cavities can be different. Initially the cavity is not empty such that $\left \langle a_m \right\rangle \neq 0$ and $\left \langle \sigma_-^{(m)}\right\rangle \neq  0$.   
\end{assumption}

The interaction Hamiltonian among the atoms and cavities reads
\begin{equation} \label{con:Mcavity}
\begin{aligned}
H_M& = \sum_{m=1}^{M} g_m\left (e^{-i\delta_m t}\sigma_-^{(m)} a^{\dag}_m + e^{i\delta_m t}\sigma_+^{(m)} a_m \right ) + \sum_m J_c \left (a^{\dag}_ma_{m+1} + a_{m}a^{\dag}_{m+1} \right),
\end{aligned}
\end{equation}
where $g_m$ is the coupling strength between the atom and cavity in the $m$-th cavity, $J_c$ is the coupling strength between two neighborhood cavities, the atom in each cavity is coupled to the non-Markovian environment similar as in Appendix~\ref{sec:NonMarkovParemeters}, 
and the atomic system is driven by the non-Markovian master equation as
\begin{equation} \label{con:MCavityMaster}
\begin{aligned}
&\dot{\rho} =-i\left[H_M,\rho\right ] + \sum_m \left [\kappa^{(m)}\left (\mathcal{F}_m^* + \mathcal{F}_m \sigma_-^{(m)}   \rho\sigma_+^{(m)} -  \mathcal{F}_m \sigma_+^{(m)}\sigma_-^{(m)}\rho
 - \mathcal{F}_m^*\rho \sigma_+^{(m)}\sigma_-^{(m)}\right) +  \kappa_{m}\mathcal{L}_{a_m} \left [ \rho \right ]  \right],%
\end{aligned}
\end{equation}
where $\sigma_-^{(m)} = I_1 \otimes \cdots \otimes I_{m-1} \otimes \sigma_- \otimes I_{m+1} \cdots \otimes I_{M}$, $\mathcal{L}_{O} = 2O\rho O^{\dag} -O^{\dag}O\rho -\rho O^{\dag}O$ for an arbitrary operator $O$,
\[
\frac{\mathrm{d} \mathcal{F}_m}{\mathrm{d} t} =\kappa^{(m)} \mathcal{F}_m^2  - \left(\gamma + i\Omega - i\omega_a^{(m)}\right)\mathcal{F}_m + \frac{\gamma \chi}{2},
\]
$\kappa^{(m)}$ represents the coupling between the atom in the $m$-th cavity and environment, $\gamma$ and $\chi$ are constants.

\subsection{Dynamics without drives}
Denote $\mathbf{x}_m =  \left [ \langle \sigma_-^{(m)} \rangle, \langle a_m \rangle \right]^{\rm T}$, and $\tilde{\mathbf{X}} = \left [ \mathbf{x}_1^{\rm T}, \mathbf{x}_2^{\rm T},\cdots,\mathbf{x}_M^{\rm T}\right]^{\rm T}$, then 
\begin{equation} \label{con:mthcavitydynamcis}
\begin{aligned}
\dot{\mathbf{x}}_m = &\begin{bmatrix}
- \mathcal{F}_m(t)\kappa^{(m)} &  -ig_m e^{i\delta_m t} \\
-ig_m e^{-i\delta_m t} & -\kappa_m
\end{bmatrix}\mathbf{x}_m -   \begin{bmatrix}
0\\
iJ_c
\end{bmatrix} \langle a_{m+1} \rangle -   \begin{bmatrix}
0\\
iJ_c
\end{bmatrix} \langle a_{m-1} \rangle.
\end{aligned}
\end{equation}
In the following, we denote and $\mathcal{F}_m(t) = \mathcal{F}^R_m(t) + i\mathcal{F}^I_m(t)$.

Based on Eq.~(\ref{con:mthcavitydynamcis}), we define the real-valued vector $\mathbf{\mathcal{X}}_M$ representing the real and imaginary components of $\tilde{\mathbf{X}}$ as
\[
\mathbf{\mathcal{X}}_M =  \left [ \mathbf{s}^R_1, \mathbf{s}^I_1, \cdots, \mathbf{s}^R_M, \mathbf{s}^I_M, \mathbf{a}^R_1, \mathbf{a}^I_1,  \cdots, \mathbf{a}^R_M, \mathbf{a}^I_M\right]^{\rm T},
\]
where $\mathbf{s}^R_m$ and $\mathbf{s}^I_m$ are the real and imaginary parts of $\langle \sigma_-^{(m)} \rangle$, $\mathbf{a}^R_m$ and $\mathbf{a}^I_m$ are the real and imaginary parts of $\langle a_m \rangle$, respectively, and $M \geq 2$. Then 
\begin{equation} \label{con:McavityRealImage}
\begin{aligned}
\dot{\mathbf{\mathcal{X}}}_M 
&= \begin{bmatrix}
\mathbf{F}(t) &\mathbf{G}(t) \\
\mathbf{R}(t)  &\mathbf{S}
\end{bmatrix}\mathbf{\mathcal{X}}_M \triangleq \mathbf{\mathcal{A}}_M  \mathbf{\mathcal{X}}_M,
\end{aligned}
\end{equation}
where
\begin{equation} \label{con:F}
\begin{aligned}
\mathbf{F}(t) = -{\rm diag}\left[ \mathbf{\mathcal{A}}^F_1(t)\kappa^{(1)}, \cdots, \mathbf{\mathcal{A}}^F_M(t)\kappa^{(M)}\right],
\end{aligned}
\end{equation}
\begin{equation} \label{con:G}
\begin{aligned}
\mathbf{G}(t) ={\rm diag} \left[\mathbf{\mathcal{A}}^G_1(t),\cdots,  \mathbf{\mathcal{A}}^G_M(t)\right],
\end{aligned}
\end{equation}
\begin{equation} \label{con:R}
\begin{aligned}
\mathbf{R}(t) = {\rm diag}\left[\mathbf{\mathcal{A}}^g_1(t),\cdots,  \mathbf{\mathcal{A}}^g_M(t)\right],
\end{aligned}
\end{equation}
\begin{equation} \label{con:S}
\begin{aligned}
\mathbf{S} 
& = I_M \otimes  \mathbf{\mathcal{A}}_{\kappa} + \begin{bmatrix}
0  & 1 & 0 &\cdots  & 0\\
1 &0 & 1&\cdots  & 0 \\
0 &1 &0 &\cdots   & 0 \\
\vdots &\vdots & \vdots  &\ddots  &\vdots\\
0 & 0 & 0 &\cdots  &0
\end{bmatrix}  \otimes  \mathbf{\mathcal{A}}_{J},
\end{aligned}
\end{equation}
with $\mathbf{\mathcal{A}}^F_m(t) = \begin{bmatrix} \mathcal{F}^R_m(t)  & - \mathcal{F}^I_m(t) \\ \mathcal{F}^I_m(t) & \mathcal{F}^R_m(t)\end{bmatrix}$, $\mathbf{\mathcal{A}}^G_m(t) = \begin{bmatrix} g_m \sin\left(\delta_m t\right)  & g_m \cos\left(\delta_m t\right) \\ -g_m \cos\left(\delta_m t\right) & g_m \sin\left(\delta_m t\right)\end{bmatrix}$, $\mathbf{\mathcal{A}}^g_m(t) = \begin{bmatrix} -g_m \sin\left(\delta_m t\right)  & g_m \cos\left(\delta_m t\right) \\ -g_m \cos\left(\delta_m t\right) & -g_m \sin\left(\delta_m t\right)\end{bmatrix}$, $\mathbf{\mathcal{A}}_{\kappa} = \begin{bmatrix}  -\kappa  & 0 \\ 0 &  -\kappa\end{bmatrix}$ and $\mathbf{\mathcal{A}}_{J} = \begin{bmatrix}  0  & J_c \\ -J_c &  0\end{bmatrix}$.

\begin{remark}
When the interaction between the atom and the environment is Markovian or converges from non-Markovian to Markovian interactions, $\mathcal{F}_m$ equals a constant and $\mathbf{\mathcal{A}}_M$ in Eq.~(\ref{con:McavityRealImage}) is periodic as $\mathbf{\mathcal{A}}_M\left(t+2\pi/\delta_m\right) = \mathbf{\mathcal{A}}_M\left(t\right)$ when $\delta_1 = \cdots = \delta_M\neq 0$. Besides, when $\delta_1 = \cdots = \delta_M = 0$, $\mathbf{\mathcal{A}}_M$ reduces to be time invariant.
\end{remark}

Obviously, the stability of the high-dimensional equation (\ref{con:McavityRealImage}), which is linear time-varying, can be similarly clarified by Corollary~\ref{TimevaryPeriodCombine}. For example, when there are two coupled cavities, the dynamics based on Markovian interactions between the quantum system and the environment has been analyzed in Refs.~\cite{wu2023bicon,zhang2019emission}. In this case, we denote $\mathbf{\mathcal{X}}_{\mathbf{a}} = \left[ \mathbf{a}^R_1, \mathbf{a}^I_1,  \mathbf{a}^R_2, \mathbf{a}^I_2 \right]^{\rm T}$, as a subspace of $\mathbf{\mathcal{X}}$. Obviously, when $g_m = 0$ for $m=1,2$,  
\begin{equation} \label{con:SubspaeX}
\begin{aligned}
\dot{\mathbf{\mathcal{X}}_{\mathbf{a}}} &= \begin{bmatrix}
\mathbf{\mathcal{A}}_{\kappa}  &\mathbf{\mathcal{A}}_{J} \\
\mathbf{\mathcal{A}}_{J}  & \mathbf{\mathcal{A}}_{\kappa}
\end{bmatrix}\mathbf{\mathcal{X}}_{\mathbf{a}} \triangleq \mathbf{\mathcal{A}}_{\mathbf{a}}\mathbf{\mathcal{X}}_{\mathbf{a}},
\end{aligned}
\end{equation}
which is linear and time-invariant.  $\mathbf{\mathcal{X}}_{\mathbf{s}}$ is exponentially stable because in the Laplace transformation based on Eq.~(\ref{con:SubspaeX}), the real part of roots determined by $(\textit{s}+\kappa)^2 + J_c^2 = 0$ are negative. Further by Corollary~\ref{Xconverge}, $\lim_{t\rightarrow \infty } \mathbf{\mathcal{X}}_{\mathbf{a}}(t) = 0$. This can be further generalized to the circumstance that $M>2$.

\subsection{Dynamics with drives}
The drive field can be applied via the feedforward or feedback channel. Here we take the measurement feedback as an example. 
Generalized from Sec.~\ref{Sec:cavityfeedback}, the measurement information can be collected from an arbitrary cavity of the coupled cavity array, and the measurement information from the $m$-th cavity reads
\begin{equation} \label{eq:Mcavitymeasure}
\begin{aligned}
I_c^{(m)}(t) & = \sqrt{2} \left \langle x_m \right\rangle + \frac{\xi_m(t)}{\sqrt{\eta}},
\end{aligned}
\end{equation}
where $x_m= \left(a_m+a^{\dag}_m\right)/\sqrt{2}$. Then we can derive the feedback dynamics according to the feedback operator similarly as in Section~\ref{Sec:OneCavityMF}.

For multiple coupled cavities, we take the feedback operator $G = \sum_m \left[ \beta_x^{(m)} x_m + \beta_p^{(m)} p_m\right]$ with the feedback strength $g_f$, the dynamics of the operators can be generalized by Eq.~(\ref{eq:xpdynamics}) after averaging the homodyne detection noise as
\begin{subequations} \label{eq:multicavityAsigma}
\begin{numcases}{}
\left\langle \dot{\sigma}_-^{(m)} \right\rangle = -ig_{m} e^{i\delta_m t}  \left\langle a_m \right\rangle - \mathcal{F}_m(t)\kappa^{(m)}\left\langle \sigma_-^{(m)} \right\rangle,\label{feedbacksigmaMm}\\
\left\langle \dot{a}_m \right\rangle =-i\Delta \left\langle a_m \right\rangle -i g_m e^{-i\delta_m t} \left\langle \sigma_-^{(m)} \right\rangle - \kappa_m  \left\langle a_m \right\rangle  - ig_f \left ( \beta_x^{(m)} + i\beta_p^{(m)}\right) \left\langle x_m \right\rangle -iJ_c \left(  \left\langle a_{m-1} \right\rangle + \left\langle a_{m+1} \right\rangle\right).
\end{numcases}
\end{subequations}
Obviously, the feedback drive will not directly influence the dynamics of $\left\langle \sigma_-^{(m)} \right\rangle$ according to Eq.~(\ref{feedbacksigmaMm}), but can influence the atomic dynamics via the atom-cavity couplings.

\begin{remark}
Upon Assumption~\ref{AssumeMultiplyJC}, when $g_m = 0$, $\left\langle \sigma_-^{(m)} \right\rangle$ can converge to zero, while $\left \langle a_m \right \rangle$ can be unstable if the feedback parameter $g_f$ is large enough compared with  $\left|\mathcal{F}_m(t)\right|$ and $\kappa$. 
\end{remark}

Denote $X_u = \left [ \mathbf{a}^R_1, \mathbf{a}^I_1,  \cdots, \mathbf{a}^R_M, \mathbf{a}^I_M\right]$, and $X_s = \left[ \mathbf{s}^R_1, \mathbf{s}^I_1, \cdots, \mathbf{s}^R_M, \mathbf{s}^I_M\right]$. Then Eq.~(\ref{eq:multicavityAsigma}) can be equivalently written as
\begin{equation} \label{con:StableUnstableSet}
\begin{aligned}
\begin{bmatrix}
\dot{X}_u\\
\dot{X}_s
\end{bmatrix}
&=\left( \begin{bmatrix}
 \mathbf{\mathcal{A}}_u & 0 \\
0 &\mathbf{\mathcal{A}}_s
\end{bmatrix}
+  \begin{bmatrix}
\mathcal{Q}_1(t) & \mathcal{Q}_2(t) \\
\mathcal{Q}_3(t) &\mathcal{Q}_4(t)
\end{bmatrix}\right)\begin{bmatrix}
X_u\\
X_s
\end{bmatrix},
\end{aligned}
\end{equation}
where
\begin{equation} \label{con:AsMcavity}
\begin{aligned}
\mathbf{\mathcal{A}}_u =&\begin{bmatrix}
\sqrt{2}g_f\beta_p^{(1)}-\kappa  &\Delta & \cdots & 0\\
\sqrt{2}g_f\beta_x^{(1)}-\Delta &-\kappa & \cdots &  0\\
\vdots & \vdots & \ddots  &\vdots\\
0 & 0 & \cdots   &\Delta\\
0 & 0 & \cdots  &-\kappa
\end{bmatrix}  +\begin{bmatrix}
0_{2\times2} &\mathbf{\mathcal{A}}_{J}  &0_{2\times2} &\cdots &0_{2\times2} &0_{2\times2}\\
\mathbf{\mathcal{A}}_{J} &0_{2\times2}  &\mathbf{\mathcal{A}}_{J} &\cdots &0_{2\times2} &0_{2\times2}\\
\vdots &\vdots &\vdots &\ddots &\vdots &\vdots \\
0_{2\times2} &0_{2\times2} &0_{2\times2} &\cdots & 0_{2\times2} &\mathbf{\mathcal{A}}_{J}\\
0_{2\times2} &0_{2\times2} &0_{2\times2} &\cdots &\mathbf{\mathcal{A}}_{J} & 0_{2\times2}
\end{bmatrix},
\end{aligned}
\end{equation}
\begin{equation} \label{con:AsMcavity2}
\begin{aligned}
\mathbf{\mathcal{A}}_s &=- \begin{bmatrix}
\bar{R}_f^{(1)}  & -\bar{I}_f^{(1)} & \cdots & 0 & 0\\
\bar{I}_f^{(1)} &\bar{R}_f^{(1)} & \cdots & 0  & 0\\
\vdots & \vdots & \ddots &\vdots &\vdots\\
0 & 0 & \cdots &\bar{R}_f^{(M)}  & -\bar{I}_f^{(M)}\\
0 & 0 & \cdots & \bar{I}_f^{(M)} &\bar{R}_f^{(M)}
\end{bmatrix},
\end{aligned}
\end{equation}
$\mathbf{\mathcal{A}}_{J}$ is given in Eq.~(\ref{con:S}), $\bar{R}_f^{(m)} = \lim_{t\rightarrow \infty} \mathcal{F}^R_m(t) $, $\bar{I}_f^{(m)} = \lim_{t\rightarrow \infty} \mathcal{F}^I_m(t) $, $m=1,2,\cdots,M$, $\mathcal{Q}_1(t) = 0$, $\mathcal{Q}_2(t) = \mathbf{R}(t)$,  $\mathcal{Q}_3(t) = \mathbf{G}(t)$,  $\mathcal{Q}_4(t) = \mathbf{F}(t)-\mathbf{\mathcal{A}}_s$, $\mathbf{R}(t)$, $\mathbf{G}(t)$ and $\mathbf{F}(t)$ are given by Eqs.~(\ref{con:R}), (\ref{con:G}), and (\ref{con:F}), respectively.

For a special case that $g_m = 0$, $P_2(t) = P_3(t) = 0$, the dynamics of $X_u$ and $X_s$ are decoupled in Eq.~(\ref{con:StableUnstableSet}). Obviously, $X_s$ will converge to zero if only the interaction between the atom and the environment becomes Markovian. However, $X_u$ can be unstable because of feedback controls. 

When $g_m \neq 0$, $X_u$ and $X_s$ are always coupled, the stability of Eq.~(\ref{con:StableUnstableSet}) can be determined by the following theorem based on the function $\mu(\cdot)$ given by Definition~\ref{defmu} in Appendix~\ref{sec:NonlinearAppendix}.
\begin{theorem} \label{StaUsta}
When the interaction between the atom and the environment becomes asymptotically Markovian in Eq.~(\ref{con:StableUnstableSet}), $X_u$  approaches zero when $t\rightarrow \infty$ provided that
\begin{equation}  \label{con:ProUScondition}
\int_{t_0}^{t_0+T} \mu \begin{Bmatrix}\begin{bmatrix}
\mathbf{\mathcal{A}}_u &\mathcal{Q}_2(\tau) \\
\mathcal{Q}_3(\tau) &\mathbf{\mathcal{A}}_s
\end{bmatrix}\end{Bmatrix} \mathrm{d}\tau< 0.
\end{equation}
\end{theorem}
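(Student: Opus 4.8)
The plan is to read Eq.~(\ref{con:StableUnstableSet}) as a linear time-varying system whose coefficient matrix splits into a periodic (or quasi-periodic) core plus an asymptotically vanishing remainder, and then to invoke Corollary~\ref{TimevaryPeriodCombine} together with the matrix-measure bound behind it. Writing $\mathcal{X}=[X_u^{\rm T},X_s^{\rm T}]^{\rm T}$ and using $\mathcal{Q}_1(t)=0$ with the identifications listed below Eq.~(\ref{con:StableUnstableSet}), the full coefficient matrix is
\[
\mathbb{M}(t) = \begin{bmatrix} \mathbf{\mathcal{A}}_u & \mathcal{Q}_2(t) \\ \mathcal{Q}_3(t) & \mathbf{\mathcal{A}}_s \end{bmatrix} + \begin{bmatrix} 0 & 0 \\ 0 & \mathcal{Q}_4(t) \end{bmatrix} \triangleq \mathbb{M}_0(t) + \tilde{\mathbb{M}}(t).
\]
First I would observe that asymptotic Markovianity forces each $\mathcal{F}_m(t)$ to converge to a constant (by Theorem~\ref{Fproperty} applied to the $\mathcal{F}_m$-dynamics stated below Eq.~(\ref{con:MCavityMaster})), so that by construction $\mathbf{\mathcal{A}}_s=\lim_{t\to\infty}\mathbf{F}(t)$ and hence $\mathcal{Q}_4(t)=\mathbf{F}(t)-\mathbf{\mathcal{A}}_s\to 0$. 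Thus $\tilde{\mathbb{M}}(t)$ is a vanishing perturbation, while $\mathbb{M}_0(t)$ carries the oscillatory atom--cavity coupling through $\mathcal{Q}_2,\mathcal{Q}_3$ and is periodic under the identical-detuning setting of Assumption~\ref{AssumeMultiplyJC} (as observed for $\mathbf{\mathcal{A}}_M(t)$ below Eq.~(\ref{con:McavityRealImage})).

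Second, I would apply Corollary~\ref{TimevaryPeriodCombine} with $\mathbb{A}(t)=\mathbb{M}_0(t)$ and $\tilde{\mathbb{A}}(t)=\tilde{\mathbb{M}}(t)$. Condition~(a) of that corollary is precisely the hypothesis~(\ref{con:ProUScondition}): over one period $T$ the matrix-measure integral of $\mathbb{M}_0$ is negative, which through the $\bar{\Pi}^+$ estimate of Ref.~\cite{vrabel2019note} makes the core periodic flow $\dot Y=\mathbb{M}_0(t)Y$ exponentially stable. Condition~(b), $\int_0^\infty\|\tilde{\mathbb{M}}(t)\|\,\mathrm{d}t<\infty$, follows from the convergence of $\mathcal{F}_m$ established earlier; in the strictly sub/over-damped regime of Eq.~(\ref{con:F1}) this convergence is exponential, so the integral is finite. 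Combining (a) and (b), Corollary~\ref{TimevaryPeriodCombine} yields $\mathcal{X}(t)\to 0$, and in particular $X_u(t)\to 0$ as $t\to\infty$, which is the claim.

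To make the argument robust against the critical case $4\kappa_n S_n=Q_n^2$, where $\mathcal{F}_m$ converges only algebraically (like $1/t$, cf.\ Eq.~(\ref{con:F3})) and the integrability needed for condition~(b) becomes marginal, I would fall back on the direct comparison estimate $\|\mathcal{X}(t)\|\le\|\mathcal{X}(t_0)\|\exp\!\big(\int_{t_0}^{t}\mu[\mathbb{M}(\tau)]\,\mathrm{d}\tau\big)$ together with the subadditivity $\mu[\mathbb{M}(t)]\le\mu[\mathbb{M}_0(t)]+\|\tilde{\mathbb{M}}(t)\|$. Periodicity of $\mathbb{M}_0$ makes $\int_{t_0}^{t}\mu[\mathbb{M}_0]\,\mathrm{d}\tau$ decrease linearly, with slope equal to the negative per-period value guaranteed by~(\ref{con:ProUScondition}), whereas $\int_{t_0}^{t}\|\tilde{\mathbb{M}}\|\,\mathrm{d}\tau$ grows only sublinearly because $\|\tilde{\mathbb{M}}(t)\|\to 0$; the linear negative drift then dominates, so $\int_{t_0}^{t}\mu[\mathbb{M}]\to-\infty$ and $\mathcal{X}(t)\to 0$.

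The main obstacle I anticipate is exactly the coupling between the stable atomic block $\mathbf{\mathcal{A}}_s$ and the feedback-destabilizable cavity block $\mathbf{\mathcal{A}}_u$: the hypothesis controls only the matrix measure of the \emph{core} matrix $\mathbb{M}_0$, so one must argue that the converging remainder $\mathcal{Q}_4$ cannot reinstate instability, which is precisely what the subadditivity/integrability step above secures. A secondary technical point is justifying the per-period accumulation of the matrix-measure integral: this is immediate when all detunings coincide ($\delta_1=\cdots=\delta_M$) so that $\mathbb{M}_0$ is genuinely $2\pi/\delta$-periodic, but when the $\delta_m$ differ (as permitted by Assumption~\ref{AssumeMultiplyJC}) the core is only quasi-periodic, and one then needs a uniform averaging bound on $\bar{\Pi}^+$ over a sufficiently long window to retain the negative drift.
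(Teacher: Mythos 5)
Your proposal follows essentially the same route as the paper: decompose the coefficient matrix of Eq.~(\ref{con:StableUnstableSet}) into the periodic core $\begin{bmatrix}\mathbf{\mathcal{A}}_u & \mathcal{Q}_2(t)\\ \mathcal{Q}_3(t) & \mathbf{\mathcal{A}}_s\end{bmatrix}$ plus the vanishing block $\mathcal{Q}_4(t)$, and invoke Corollary~\ref{TimevaryPeriodCombine} with hypothesis~(\ref{con:ProUScondition}) playing the role of condition (a) and the convergence of $\mathcal{F}_m$ supplying condition (b). Your treatment is in fact more careful than the paper's one-line proof, since you explicitly address the integrability of the remainder in the critical $4\kappa_nS_n=Q_n^2$ case and the quasi-periodicity issue when the detunings $\delta_m$ differ, both of which the paper passes over in silence.
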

\begin{proof}
When the interaction between the atom and the environment becomes asymptotically Markovian, $\lim_{t\rightarrow \infty} \mathcal{Q}_4(t) = 0$ in Eq.~(\ref{con:StableUnstableSet}) with $\mathcal{Q}_1(t) = 0$, and the matrix $\begin{bmatrix}
\mathbf{\mathcal{A}}_u & \mathcal{Q}_2(t) \\
\mathcal{Q}_3(t) &\mathbf{\mathcal{A}}_s
\end{bmatrix}$ is periodic. Then the result holds according to the proof of Corollary~\ref{TimevaryPeriodCombine}.
\end{proof}

\begin{remark} \label{subspaceRemark}
As in Ref.~\cite{vrabel2019note}, a choice of the norm in Theorem~\ref{StaUsta} can be
\begin{equation} 
\begin{aligned} \label{con:MPMtr}
&\frac{1}{2}\lambda_{\max} \left( \begin{bmatrix}
\mathbf{\mathcal{A}}_u +\mathbf{\mathcal{A}}_u^{\rm T}  & \mathcal{Q}_2(t) + \mathcal{Q}_3^{\rm T}(t) \\
\mathcal{Q}_3(t) + \mathcal{Q}_2^{\rm T}(t)&\mathbf{\mathcal{A}}_s + \mathbf{\mathcal{A}}_s^{\rm T}
\end{bmatrix}\right) =\frac{1}{2}\lambda_{\max} \left( \begin{bmatrix} 
\bar{\mathbf{a}}_1 & \cdots & 0 & 0 & \cdots & 0\\
\vdots &\ddots & \vdots &\vdots &\ddots & \vdots\\
0 & \cdots & \bar{\mathbf{a}}_M & 0 & \cdots & 0\\
0 & \cdots & 0 &\bar{\mathbf{b}}_1 & \cdots & 0 \\
\vdots &\ddots & \vdots &\vdots &\ddots & \vdots\\
0 & \cdots & 0 &0 & \cdots & \bar{\mathbf{b}}_M
\end{bmatrix} 
\right),
\end{aligned}
\end{equation} 
where $\bar{\mathbf{a}}_m = \begin{bmatrix}
2\sqrt{2}g_f\beta_p^{(m)}-2\kappa_m &\sqrt{2}g_f\beta_x^{(m)} \\
\sqrt{2}g_f\beta_x^{(m)} & -2\kappa_m
\end{bmatrix}$, $\bar{\mathbf{b}}_m =
\begin{bmatrix}
-2\bar{R}_f^{(m)} & 0\\
0 &-2\bar{R}_f^{(m)}
\end{bmatrix}$. When $g_f$ and $\beta_x^{(m)}$ are small, the inequality in (\ref{con:ProUScondition}) can be easier to meet, then both $X_u$ and $X_s$ can converge to zero. An extreme case is that $\beta_x^{(m)} = 0$ and $\kappa_m > \sqrt{2}g_f \beta_p^{(m)}$, then the matrix in Eq.~(\ref{con:MPMtr}) is a diagonal matrix with all the diagonal elements being negative.
\end{remark}
\begin{figure}[h]
\centerline{\includegraphics[width=0.7\columnwidth]{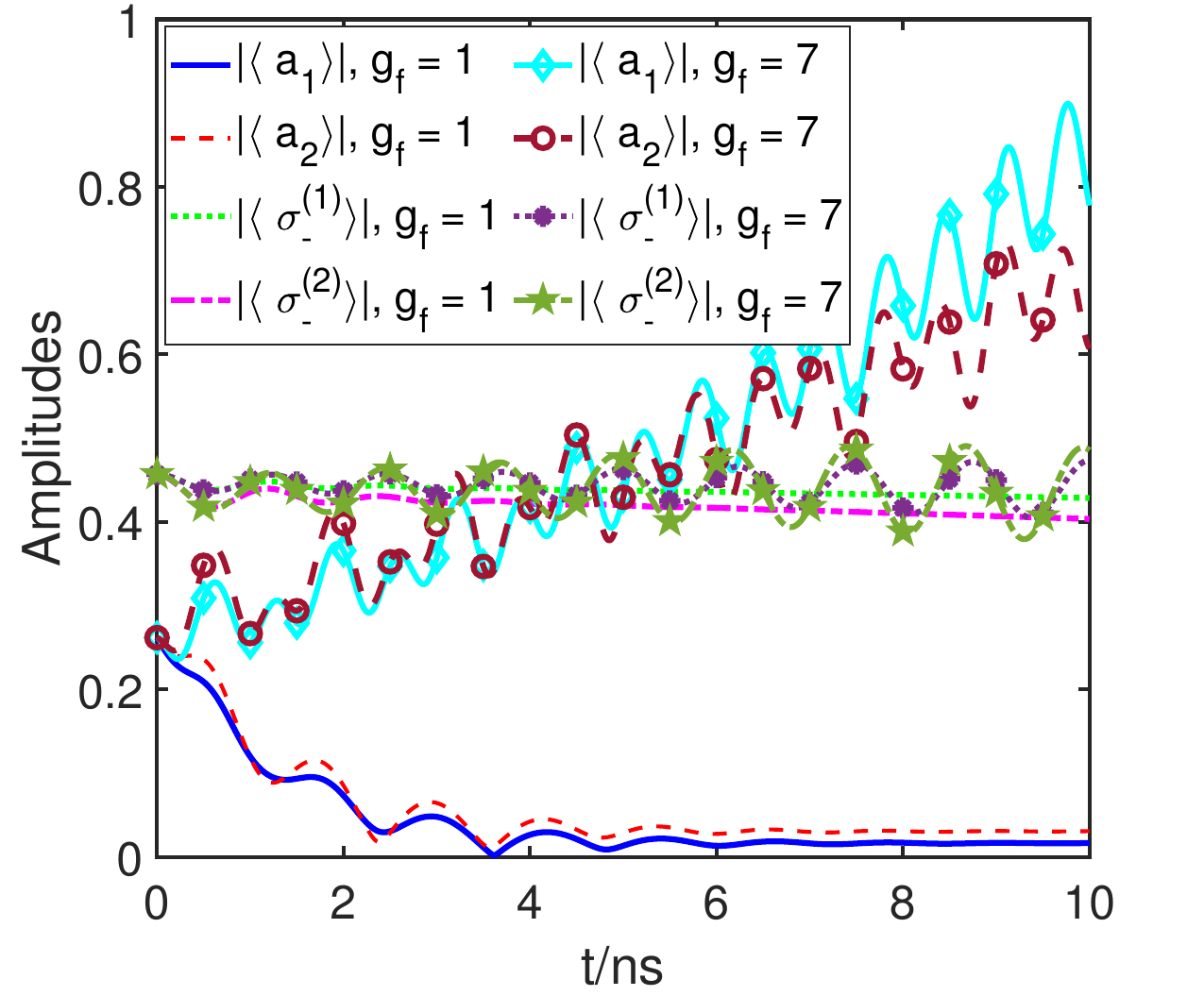}}
\caption{Measurement feedback control based on two coupled cavities.}
	\label{fig:TwocavityFB}
\end{figure}
As compared in Fig.~\ref{fig:TwocavityFB}, there are two coupled cavities, $J_c = 0.1$GHz, $\delta_1 =  \delta_2= 0.2$GHz, $\kappa = 1$GHz, $g_1 = 0.2$GHz, $g_2 = 0.4$GHz, $\Delta = 5$GHz, the two atoms in the two cavities are coupled identically with the non-Markovian environment  with $\omega_a = 43.98$GHz, and $\Omega = 38.98$GHz, $\chi  = 1$, $\kappa^{(1)}=\kappa^{(2)}=0.04$GHz, $\gamma = 2$GHz for the non-Markovian environment~\cite{NatureExperiment}. In the measurement feedback design, $\beta_x^{(1)} = \beta_x^{(2)} = 0$GHz, $\beta_p^{(1)} = \beta_p^{(2)} = 0.2$GHz. On one hand, when $g_f = 1$, $\sqrt{2}g_f \beta_p^{(1)} - \kappa = -0.7172 {\rm GHz} <0{\rm GHz}$. On the other hand, when $g_f = 7$, $\sqrt{2}g_f \beta_p^{(1)} -\kappa = 0.9799 {\rm GHz}>0{\rm GHz}$.  The two group of simulations agree with Remark~\ref{subspaceRemark}, showing that the stable and unstable subspaces can be modulated by tuning the feedback parameters.

The above analysis on multiple Jaynes-Cummings model realized by cascade cavity networks can be a potential approach for realizing non-Markovian interactions, according to the introduction on the frequency spectrum properties in~\cite{lombardo2014photon}. Besides, the coupled cavity array can be further used to realize waveguide quantum electrodynamics~\cite{zhou2008controllable}, constructing various complex non-Markovian quantum networks~\cite{ding2023quantum,ding2023quantumNlevel,ding2023Measurement}. This generalization can introduce more challenges on quantum control due to the fact that the non-Markovian environment can destruct the purity of quantum states~\cite{Hannes2015Quantum}, which can further influence the control properties such as the creation of entangled quantum states~\cite{zhang2010protecting} and photons~\cite{ding2023Measurement}.

\section{Conclusion}\label{Sec:conclusion}
In this paper, we study quantum control dynamics based on a cavity-QED system, focusing on the non-Markovian interactions between the quantum system and its environment. 
The evolution of parameters representing the atom's non-Markovian decay into the environment is described by nonlinear dynamics.
The transition to Markovian interactions between the quantum system and the environment is explained by the stability of these nonlinear processes. Consequently, the dynamics of the multi-level system in a non-Markovian environment can be modeled by linear time-varying equations. 
Manipulation of atomic states and photons in the cavity is then achieved using open-loop and closed-loop control methods that utilize quantum measurement feedback.
This approach can be extended to scenarios involving multiple coupled cavities described by high-dimensional linear time-varying equations, where feedback control can further influence the dynamics between stable and unstable subspaces. 
\begin{appendices}

\section{Derivation of the non-Markovian parameter dynamics} \label{sec:NonMarkovParemeters}
We take the atom and cavity as a whole represented with the state $|\psi(t)\rangle$, then the combined system interacts with the bath via the multi-level atom. The interaction between the atom and environment can be represented with the Hamiltonian
\[
H_{\rm AE} = \sum_{n=1}^{N-1}\sum_{\omega} \chi_{\omega}^{(n)} \left (|n+1\rangle \langle n|b_{\omega} + |n\rangle \langle n+1| b_{\omega}^{\dag} \right ),
\]
and the detailed meaning is introduced after Eq.~(\ref{con:TwoNlevelatomHam}) in the main text. Here we take a representative simplified case that the quantum system is coupled to the environment via operators $L_n$. 
The evolution of quantum states is influenced by its stochastic interaction with the environment as~\cite{diosi1998non,diosi1997nonPLA,yu1999non,de2005two}
\begin{equation} \label{con:NonMarkovStep1}
\frac{\delta |\psi(t)\rangle}{\delta z_s} = \sum_{n=1}^{N-1} f_n(t,s)L_n|\psi(t)\rangle,
\end{equation}
where $z_s$ is a complex-valued Wiener process at time $s$, $f_n(t,s)$ is a two-time function to be determined, related to the two-time points  $t$ and $s$~\cite{diosi1998non}. Combined with Eq.~(\ref{con:Loperator1}),
the differential of Eq.~(\ref{con:NonMarkovStep1}) upon the time $t$ reads
\begin{equation} \label{con:NonMarkovStep2}
\frac{\delta |\dot{\psi}\rangle}{\delta z_s}=\sum_{n=1}^{N-1} \left[ \frac{\partial f_n(t,s)}{\partial t}L_n|\psi\rangle + f_n(t,s) L_n |\dot{\psi}\rangle\right],
\end{equation}
where $|\dot{\psi}\rangle$ represents the derivation according to time $t$, and 
\begin{equation} \label{con:NonMarkovStep3}
\begin{aligned}
\frac{\partial f_n(t,s)}{\partial t}L_n|\psi\rangle &=\frac{\delta |\dot{\psi}\rangle}{\delta z_s} - f_n(t,s) L_n |\dot{\psi}\rangle\\
&=-iH\frac{\delta |\psi\rangle}{\delta z_s} + i f_n(t,s)L_n H |\psi\rangle + L_n\frac{\delta |\psi\rangle}{\delta z_s} z_t   -\frac{\delta }{\delta z_s} L_n^{\dag}\int_0^t \alpha(t,s)\frac{\delta |\psi\rangle}{\delta z_s}\mathrm{d}s -f_n(t,s) L_nL_n |\psi\rangle z_t \\
&~~~ + f_n(t,s) L_n  L_n^{\dag}\int_0^t \alpha(t,s)\frac{\delta |\psi\rangle}{\delta z_s}\mathrm{d}s\\
&=i\left[L_n,H\right]f_n(t,s)|\psi\rangle +F_n(t) \left ( L_nL_n^{\dag} - L_n^{\dag}L_n\right ) f_n(t,s)L_n|\psi\rangle,
\end{aligned}
\end{equation}
with the time-dependent function $F_n(t)$ defined by Eq.~(\ref{con:NonMarkovStep3Ft})  and $\alpha(t,s)$ given by Eq.~(\ref{con:NonMarkovAlpha}) in the main text.

Then combined with Eq.~(\ref{con:Loperator1}),  for a specific $L_n$, Eq.~(\ref{con:NonMarkovStep3}) can be rewritten as  $\partial_t f_n(t,s) L_n|\psi\rangle 
=i\omega_n f_n(t,s)L_n|\psi\rangle + F_n(t)  \kappa_n f_n(t,s)L_n|\psi\rangle$, further combined with Eq.~(\ref{con:NonMarkovStep3Ft}), we have
\begin{equation} \label{con:FtEquation}
\begin{aligned}
\dot{F}_n(t)&= \frac{\gamma \chi_n}{2} - (\gamma + i\Omega)F_n(t) + \int_0^t \alpha(t,s) \left [i\omega_n f_n(t,s) + F_n(t) \kappa_nf_n(t,s) \right] \mathrm{d}s \\
&= \kappa_n F_n^2(t)  - \left(\gamma + i\Omega - i\omega_n\right)F_n(t) + \frac{\gamma \chi_n}{2},
\end{aligned}
\end{equation}
where $f_n(t,t) = \chi_n$~\cite{diosi1998non}.

\section{Derivation of the non-Markovian master equation} \label{sec:NonMarkovMaster}
The non-Markovian master equation has been introduced in Refs.~\cite{diosi1998non,yang2012nonadiabatic}. In this Appendix, we briefly introduce the derivation of the non-Markovian control equation~(\ref{con:SSENonMarUpdate0}) in the main text. Based on $f_n(t,s)$ clarified in Appendix~\ref{sec:NonMarkovParemeters}, combine Eq.~(\ref{con:SSENonMar}) with Eq.~(\ref{con:NonMarkovStep1}) after the normalization, we have the following equation for the state vector,
\begin{equation} \label{con:SSENonMarUpdateNorm}
\begin{aligned}
&\frac{\mathrm{d}}{\mathrm{d}t} |\psi(t)\rangle = -iH |\psi(t)\rangle +\sum_{n=1}^{N-1} \left (L_n - \left\langle L_n\right\rangle \right) |\psi(t)\rangle z_t - \sum_{n=1}^{N-1}\left (L_n^{\dag} - \left\langle L_n^{\dag}\right\rangle \right)\int_0^t \alpha(t,s) f_n(t,s)L_n|\psi(t)\rangle\mathrm{d}s.
\end{aligned}
\end{equation}

Considering that $\rho(t) = E_z(|\psi(t)\rangle \langle \psi(t) |)$, and denote $P_t\left(z_t^*\right) =  |\psi(t)\rangle \langle \psi(t) |$, then $E_z \left (P_tz_t \right) =\sum_{n=1}^{N-1} \int_0^t E_z\left[ z_tz_s^*\right]  f_n^*(t,s)\rho(t)L_n^{\dag} \mathrm{d}s$ and $E_z \left (P_tz_t^* \right) =\sum_{n=1}^{N-1}\int_0^t  E_z\left[ z_t^*z_s\right] f_n(t,s)L_n\rho(t)\mathrm{d}s$~\cite{de2005two}.
Thus the dynamics of the density matrix reads
\begin{equation} \label{con:SSENonMarUpdateAppendix}
\begin{aligned}
\dot{\rho}=&-i[H,\rho]+\sum_{n=1}^{N-1}\left \{\int_0^t \mathrm{d}s E_z\left[ z_tz_s^*\right]  f_n^*(t,s) + \int_0^t \mathrm{d}s E_z\left[ z_t^*z_s\right] f_n(t,s)L_n  \rho L_n^{\dag} \right.\\
&\left.-  \int_0^t \alpha(t,s) f_n(t,s) \mathrm{d}s  L_n^{\dag}L_n\rho- \rho L_n^{\dag}L_n \int_0^t \alpha^*(t,s) f_n^*(t,s)\mathrm{d}s\right\} ,
\end{aligned}
\end{equation}
and can be further written as Eq.~(\ref{con:SSENonMarUpdate0}) in the main text~\cite{yang2012nonadiabatic}.

\section{Some results on nonlinear and linear dynamics} \label{sec:NonlinearAppendix}
In this appendix, we recall some concepts and conclusions on nonlinear and linear time-varying systems for the clarifications in  Sec.~\ref{Sec:ModelNonMark} and Sec.~\ref{Sec:quantumcontrolOpen}. 
\begin{myDef}[\cite{SIAMBIBO,lin1967bounded}] \label{defstable}
The system in Eq.~(\ref{con:NonlinearXFControl}) is said to be bounded-input bounded-output (BIBO) stable if for every $a_u\geq 0$ and $\alpha_X \geq 0$ there is a finite number $\beta_X = \beta_X\left(a_u,\alpha_X\right)$ such that for every initial condition $X_F\left (t_0 \right)$ with $X_F\left (t_0 \right )\leq \alpha_X$ and every sequence $u$ with $\|u\| \leq a_u$,
\[
X_F\left (u,t';X_F\left (t_0 \right ),t_0 \right ) \leq \beta_X,
\]
for $t'>t_0$.
\end{myDef}

\begin{myDef} [\cite{lasalle1960some}]\label{invariantset}
For a state vector $X_F(t)$ evolves in the time domain as $\dot{X}_F(t) = \mathcal{F}\left( X_F\right)$,  a set $S$ is said to be forward invariant if $X_F\left(t_0\right) \in S$, $X_F\left(t\right) \in S$ for any $t>t_0$.
\end{myDef}

\begin{mypro}[\cite{SIAMBIBO}]\label{BIBOProposition}
~Suppose for each $u\geq 0$, there exists a positive Lyapunov function $V_u$ for the system with the input $u$ and $\dot{V}_u(t,X_f) \leq 0$.
When the input satisfies $\|u\| \leq a_u$, the system is BIBO stable.
\end{mypro}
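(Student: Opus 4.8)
The plan is to reduce the BIBO property of Definition~\ref{defstable} to a monotonicity argument for the family of Lyapunov functions $V_u$. First I would fix arbitrary bounds $a_u \geq 0$ and $\alpha_X \geq 0$ and take any admissible pair consisting of an initial state $X_F(t_0)$ with $X_F(t_0) \leq \alpha_X$ and an input sequence $u$ with $\|u\| \leq a_u$. Along the resulting trajectory $X_F(u,t';X_F(t_0),t_0)$, the hypothesis $\dot{V}_u(t,X_F) \leq 0$ forces the map $t \mapsto V_u(t,X_F(t))$ to be non-increasing, so that $V_u(t,X_F(t)) \leq V_u(t_0,X_F(t_0))$ for every $t > t_0$. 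This is the only place the sign condition is used, and it is immediate.

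Next I would convert this bound on the Lyapunov value into a bound on the state norm. Since $V_u$ is a positive definite Lyapunov function, I would invoke the standard comparison-function sandwich $\alpha_1(\|X_F\|) \leq V_u(t,X_F) \leq \alpha_2(\|X_F\|)$ with $\alpha_1,\alpha_2$ of class $\mathcal{K}_\infty$, chosen uniformly over the input ball $\|u\| \leq a_u$. Chaining the inequalities gives
\[
\alpha_1(\|X_F(t)\|) \leq V_u(t,X_F(t)) \leq V_u(t_0,X_F(t_0)) \leq \alpha_2(\alpha_X),
\]
whence $\|X_F(t)\| \leq \alpha_1^{-1}\!\left(\alpha_2(\alpha_X)\right) \triangleq \beta_X(a_u,\alpha_X)$ for all $t > t_0$. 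Because $\alpha_1^{-1}\circ \alpha_2$ is finite-valued, this exhibits exactly the finite uniform bound required by Definition~\ref{defstable}, and establishes BIBO stability.

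The main obstacle I anticipate is the dependence of the Lyapunov function on the input $u$: because the proposition permits a different $V_u$ for each $u$, the comparison functions $\alpha_1,\alpha_2$ must be selected uniformly over the admissible set $\{u : \|u\| \leq a_u\}$, otherwise the resulting bound would read $(\alpha_1^{u})^{-1}(\alpha_2^{u}(\alpha_X))$ and depend on the particular realization of $u$ rather than only on $(a_u,\alpha_X)$. I would address this either by assuming, as is implicit in the cited formulation, that the family $\{V_u\}$ admits common class-$\mathcal{K}_\infty$ bounds on that ball, or by passing to the supremum $\sup_{\|u\|\leq a_u}\alpha_2^{u}(\alpha_X)$ and arguing it remains finite by continuity together with the boundedness of the input set. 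The remaining steps, namely the monotonicity from the sign condition and the final inversion, are routine.
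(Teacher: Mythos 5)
The paper does not actually prove this proposition: it is recalled in Appendix~\ref{sec:NonlinearAppendix} and imported directly from the cited reference, so there is no in-paper argument to compare yours against. On its own terms, your proposal is the standard and essentially correct Lyapunov route to Definition~\ref{defstable}: monotonicity of $V_u$ along trajectories from $\dot{V}_u\leq 0$, followed by a class-$\mathcal{K}_\infty$ sandwich $\alpha_1(\|X_F\|)\leq V_u\leq \alpha_2(\|X_F\|)$ and inversion to get the uniform bound $\beta_X(a_u,\alpha_X)=\alpha_1^{-1}(\alpha_2(\alpha_X))$. Two points are worth making explicit. First, the proposition as stated only assumes $V_u$ is ``positive,'' which by itself is not enough: without a proper (radially unbounded) lower bound $\alpha_1$, a bounded, non-increasing $V_u$ is compatible with $\|X_F(t)\|\to\infty$ (e.g.\ $V_u=\arctan\|X_F\|$). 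You correctly import properness as part of what ``Lyapunov function'' means in the source, but you are then proving the reference's version of the statement rather than the literal one printed here --- a discrepancy worth flagging rather than silently repairing. Second, you rightly identify the uniformity-in-$u$ issue and your two fixes (common comparison functions on the input ball, or a supremum argument) are both adequate at sketch level; for the application actually made in Proposition~\ref{BIBOApply} of the main text the point is moot, since there a single $V$ with $\mathbf{M}=\mathbf{I}$ is used for all inputs, so the family $\{V_u\}$ collapses to one function and the uniformity concern disappears.
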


\begin{mypro}[\cite{chen2004stability}] \label{Chenstable}
 For the nonlinear system
\[
\dot{\textbf{X}}(t) = \textbf{f}(\textbf{X}(t),t) + \textbf{h}(\textbf{X}(t),t), 
\]
where $\textbf{f}$ is continuously differentiable with $\textbf{f}(0,t) = 0$ and $\textbf{h}(\textbf{X}(t),t)$ is a persistent perturbation: if the system is uniformly and asymptotically stable about its equilibrium $\textbf{X}^* =0$ when $\textbf{h}(\textbf{X}(t),t) = 0$, and there are two positive constants $\tilde{\delta}_1$ and $\tilde{\delta}_2$ such that $\|\textbf{h}(\textbf{X}(t),t)\|<\tilde{\delta}_1$ for $t\in[0,\infty)$ and $\|\textbf{h}(\textbf{X}(0),0)\|<\tilde{\delta}_2$, then the persistently perturbed system remains to be stable in the sense of Lyapunov. 
\end{mypro}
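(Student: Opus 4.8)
The plan is to exploit the additive decomposition $\mathbb{A}(t) = \bar{\mathbb{A}}(t) + \tilde{\mathbb{A}}(t)$ established in Eq.~(\ref{con:TVECntCombine}), treating the $T$-periodic part $\bar{\mathbb{A}}(t)$ as the dominant flow and the vanishing part $\tilde{\mathbb{A}}(t)$ as an integrable perturbation. First I would show that condition (a), the negativity of the one-period integral of the matrix measure, forces the transition matrix $\bar{\Phi}(t,s)$ of the purely periodic system $\dot{\mathbb{X}} = \bar{\mathbb{A}}(t)\mathbb{X}$ to decay exponentially; then I would run a Bellman--Gr\"onwall argument, precisely the linear time-varying generalization of Corollary~\ref{Xconverge} drawn from Ref.~\cite{bellman2008stability}, to show that restoring the integrable perturbation $\tilde{\mathbb{A}}(t)$ (condition (b)) cannot destroy this decay, so every solution of Eq.~(\ref{con:TimevariedEquation2}) still tends to zero.

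For the first step I would use Coppel's logarithmic-norm inequality $\|\bar{\Phi}(t,s)\| \leq \exp\!\bigl(\int_s^t \mu[\bar{\mathbb{A}}(\tau)]\,\mathrm{d}\tau\bigr)$, which is compatible with the matrix measure $\mu[\cdot]=\tfrac12\lambda_{\max}(\cdot+\cdot^{\rm T})$ used in Remark~\ref{subspaceRemark}. Because $\bar{\mathbb{A}}$ is $T$-periodic, the one-period integral $\int_{t_0}^{t_0+T}\mu[\bar{\mathbb{A}}(\tau)]\,\mathrm{d}\tau$ is independent of $t_0$; calling its value $-c$ with $c>0$ (this is exactly the Vrab\v{e}l criterion of Ref.~\cite{vrabel2019note} on $\bar{\Pi}^+$), tiling $[s,t]$ into full periods yields $\|\bar{\Phi}(t,s)\| \leq K e^{-\lambda(t-s)}$ with $\lambda = c/T$ and $K$ absorbing the bound over a single residual period. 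A technical point is that condition (a) is stated for the full $\mathbb{A}$ rather than $\bar{\mathbb{A}}$; I would reconcile this through subadditivity of the matrix measure, $\mu[\bar{\mathbb{A}}] = \mu[\mathbb{A}-\tilde{\mathbb{A}}] \leq \mu[\mathbb{A}] + \|\tilde{\mathbb{A}}\|$, combined with condition (b), which makes $\int_{t_0}^{t_0+T}\|\tilde{\mathbb{A}}\|\,\mathrm{d}\tau$ over far-out periods arbitrarily small, so that the constant one-period integral of $\mu[\bar{\mathbb{A}}]$ is strictly negative as well.

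For the second step I would write the full solution by variation of constants relative to the periodic flow,
\begin{equation*}
\mathbb{X}(t) = \bar{\Phi}(t,t_0)\mathbb{X}(t_0) + \int_{t_0}^t \bar{\Phi}(t,\tau)\,\tilde{\mathbb{A}}(\tau)\,\mathbb{X}(\tau)\,\mathrm{d}\tau,
\end{equation*}
substitute the exponential bound, and set $u(t) = e^{\lambda(t-t_0)}\|\mathbb{X}(t)\|$ to obtain $u(t) \leq K\|\mathbb{X}(t_0)\| + K\int_{t_0}^t \|\tilde{\mathbb{A}}(\tau)\|\,u(\tau)\,\mathrm{d}\tau$. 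The Gr\"onwall inequality then gives $u(t) \leq K\|\mathbb{X}(t_0)\|\exp\!\bigl(K\int_{t_0}^{\infty} \|\tilde{\mathbb{A}}(\tau)\|\,\mathrm{d}\tau\bigr)$, and condition (b) makes the exponent a finite constant $M$; hence $\|\mathbb{X}(t)\| \leq K\|\mathbb{X}(t_0)\|e^{KM}\, e^{-\lambda(t-t_0)} \to 0$ as $t\to\infty$, which is the claimed convergence.

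The hard part will be the first step: converting the averaged matrix-measure condition (a) into a genuine uniform exponential bound $Ke^{-\lambda(t-s)}$ on $\bar{\Phi}$, and in particular handling cleanly the mismatch that (a) is imposed on $\mathbb{A}$ while the exponential decay I actually need is for the periodic $\bar{\mathbb{A}}$ (resolved via subadditivity and the tail smallness of $\|\tilde{\mathbb{A}}\|$). Once the periodic flow is shown to be exponentially stable, the perturbation absorption via Gr\"onwall is routine and is exactly the content of the cited extension of Corollary~\ref{Xconverge} to linear time-varying systems in Ref.~\cite{bellman2008stability}.
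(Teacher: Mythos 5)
You have proved the wrong statement. Your proposal is, almost line for line, a proof of Corollary~\ref{TimevaryPeriodCombine} --- the periodic-plus-vanishing decomposition $\mathbb{A}(t)=\bar{\mathbb{A}}(t)+\tilde{\mathbb{A}}(t)$ of Eq.~(\ref{con:TVECntCombine}), the matrix-measure/Coppel bound on the transition matrix via \cite{vrabel2019note}, and the Bellman--Gr\"{o}nwall absorption of the integrable remainder via \cite{bellman2008stability} are exactly the two-step argument the paper gives in Appendix~\ref{sec:linearAppendix} for that corollary. But the statement at hand is Proposition~\ref{Chenstable}, a total-stability result for a \emph{nonlinear} system $\dot{\textbf{X}}=\textbf{f}(\textbf{X},t)+\textbf{h}(\textbf{X},t)$ under a \emph{persistent} perturbation, quoted from \cite{chen2004stability} (the paper itself gives no proof, since it is a cited auxiliary result). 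None of your machinery applies to its hypotheses: the unperturbed system is nonlinear, so there is no transition matrix $\bar{\Phi}(t,s)$ and no variation-of-constants formula of the form you write; the perturbation $\textbf{h}$ is not linear in the state, so it cannot be represented as $\tilde{\mathbb{A}}(\tau)\mathbb{X}(\tau)$; and nothing in the proposition is periodic or subject to a matrix-measure condition.

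The gap is not merely cosmetic, because your central analytic step fails under the proposition's actual hypotheses. Gr\"{o}nwall absorption needs $\int_0^{\infty}\|\tilde{\mathbb{A}}(\tau)\|\,\mathrm{d}\tau<\infty$, whereas a persistent perturbation is only bounded, $\|\textbf{h}\|<\tilde{\delta}_1$ for all $t$; the corresponding Gr\"{o}nwall exponent then grows like $\tilde{\delta}_1(t-t_0)$ and yields no decay, and indeed the conclusion of Proposition~\ref{Chenstable} is deliberately weaker than yours --- stability in the sense of Lyapunov (trajectories starting near $\textbf{X}^*=0$ stay near it), not convergence to zero, which is generally false here (a small constant $\textbf{h}$ already shifts the equilibrium). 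The standard proof is instead a converse-Lyapunov argument: uniform asymptotic stability of $\dot{\textbf{X}}=\textbf{f}(\textbf{X},t)$ furnishes a Lyapunov function $V$ with class-$\mathcal{K}$ sandwich bounds, $\dot{V}\leq-\alpha_3(\|\textbf{X}\|)$ along the unperturbed flow, and a bounded gradient on a neighborhood of the origin; along the perturbed flow one then gets $\dot{V}\leq-\alpha_3(\|\textbf{X}\|)+c\,\tilde{\delta}_1$, so for $\tilde{\delta}_1$ small enough a suitable sublevel set of $V$ is forward invariant (in the sense of Definition~\ref{invariantset}), which is precisely Lyapunov stability of the perturbed system. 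That converse-theorem idea is the missing ingredient, and no amount of Gr\"{o}nwall bookkeeping substitutes for it.
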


\begin{myDef}[\cite{vrabel2019note}]\label{UES}
A linear time varying system is uniformly exponential stable (UES) if and only if there exist positive constants $K$ and $\tilde{\alpha}$ such that
%\begin{small}
%\[
$\|\Phi\left(t,\tau\right) \| \leq K e^{-\tilde{\alpha}\left( t- \tau\right)}$,  
%\]
%\end{small}%
for $t_0 \leq \tau < t < \infty$.
\end{myDef}

\begin{myDef}[\cite{vrabel2019note}]\label{defmu}
For a real-valued continuous matrix $A(t)\in \textbf{R}^{n\times n}$, the logarithmic norm can be defined when $t\geq 0$ as
\[
\mu[A(t)] = \lim_{h\rightarrow 0^+} \frac{\| I_n + h A(t)  \| - 1}{h},
\]
and
%\begin{equation} \label{con:BigPI}
%\begin{aligned}
$ \Pi^+(t) \triangleq \int_{t_0}^t \mu [A(\tau)] \mathrm{d}\tau$.
%\end{aligned}
%\end{equation}
\end{myDef}

\section{Proof of Corollary~\ref{TimevaryPeriodCombine}}\label{sec:linearAppendix}
\begin{proof}
The proof is based on two separated parts as discussed in  Refs.~\cite{vrabel2019note,bellman2008stability}. According to Ref.~\cite{vrabel2019note}, when (a) is satisfied, the linear time-varying system $\dot{\hat{\mathbb{X}}} =\bar{\mathbb{A}}_n(t) \hat{\mathbb{X}}$ is uniformly exponential stable according to Definition~\ref{UES}, and $\lim_{t\rightarrow \infty } \hat{\mathbb{X}} = 0$. Based on this, when (b) is satisfied, the state vector in Eq.~(\ref{con:TimevariedEquation2}) approaches zero, as shown in Ref.~\cite{bellman2008stability} (Chapter 2, Theorem 4).  
\end{proof}

\end{appendices}
\section*{Acknowledgements}
This work is supported by the ANR project “Estimation et controle des systèmes quantiques ouverts” Q-COAST Project ANR- 19-CE48-0003 and the ANR project IGNITION ANR-21-CE47-0015, Quantum Science and Technology-National Science and Technology Major Project 2023ZD0300600, Guangdong Provincial Quantum Science Strategic Initiative No. GDZX2303007, and Hong Kong Research Grant Council (RGC) under Grant No. 15213924. 
\bibliographystyle{IEEEtran}
\bibliography{MFQEC}

%\end{spacing}
\end{document}